\def\overrightharpoonfill@{\arrowfill@\relbar\relbar\rightharpoonup}
\DeclareRobustCommand{\overrightharpoon}{\mathpalette{\overarrow@\overrightharpoonfill@}}
\def\downrightharpoonfill@{\arrowfill@\relbar\relbar\rightharpoondown}
\DeclareRobustCommand{\downrightharpoon}{\mathpalette{\raise0.2em\underarrow@\downrightharpoonfill@}}
\definecolor{webgreen}{rgb}{0,0.4,0}
\definecolor{webbrown}{rgb}{0.6,0,0}
\definecolor{purple}{rgb}{0.5,0,0.25}
\definecolor{darkblue}{rgb}{0,0,0.7}
\definecolor{darkred}{rgb}{0.7,0,0}
\renewcommand{\cite}{\citet}
\newcommand{\ignore}[1]{}
\newtheorem{lemma}{{\sc Lemma}}
\newtheorem{proposition}{{\sc Proposition}}
\newtheorem{corollary}{{\sc Corollary}}
\newtheorem{theorem}{{\sc Theorem}}
\newtheorem{definition}{{\sc Definition}}
\newtheorem{example}{{\sc Example}}
\newtheorem{remark}{{\sc Remark}}
\newenvironment{proof}{\noindent {\bf \sl Proof\/}:\enspace}
{\hfill $\Box$ \vspace{0.5em}}
\renewcommand\section{\@startsection {section}{1}{\z@}{-1ex \@plus -1ex \@minus -.2ex}                                   {1ex }{\centering\large\scshape}}
\renewcommand\subsection{\@startsection {subsection}{1}{\z@}{-1ex \@plus -1ex \@minus -.2ex}                                   {1ex }{\raggedright\large\scshape}}
\begin{document}

\title{\sc \Large Decomposability and Strategy-proofness\\
in Multidimensional Models\thanks{%
Huaxia Zeng acknowledges that his work was supported by the
Program for Professor of Special Appointment (Eastern Scholar) at Shanghai Institutions of Higher Learning
(No.~2019140015).}}
\author{Shurojit Chatterji\thanks{%
School of Economics, Singapore Management University}~ and
Huaxia Zeng\thanks{%
School of Economics, Shanghai University of Finance and Economics, and the Key Laboratory of Mathematical Economics (SUFE), Ministry of Education, China}}
\date{\today }
\maketitle

\begin{abstract}
\noindent
We introduce the notion of a multidimensional hybrid preference domain on a (finite) set of alternatives that is a Cartesian product of finitely many components.
We demonstrate that in a model of public goods provision, multidimensional hybrid preferences arise naturally through assembling marginal preferences under the condition of semi-separability -
a weakening of separability.
The main result shows that under a suitable ``richness'' condition,
every strategy-proof rule on this domain can be decomposed into component-wise strategy-proof rules, and
more importantly every domain of preferences that reconciles decomposability of rules with strategy-proofness must be a multidimensional hybrid domain.

\medskip \noindent \textit{Keywords}:
Decomposability; strategy-proofness

\noindent \textit{JEL Classification}: D71.
\end{abstract}
\bigskip\medskip

\section{Introduction}\label{sec:Introduction}

Public decisions entail vast expenditures on a variety of components such as defence, education, health.
A mechanism design approach would base the decisions on the set of alternatives,  formulated as the Cartesian product of these multiple components (denoted $A\coloneqq\times_{s \in M} A^s$),  on the preferences of agents over $A$.
Decision making in such multidimensional settings is considerably more tractable if it can be ``decomposed'', that is, if the social planner is able to take the decisions on each of the components  independently based on ``marginal'' preferences in each component that are derived from agents' ``overall'' preferences, and then piece these component-wise decisions into a final social decision.
Of course, one would also like this decomposed decision making process to have nice incentive properties.
Thus we seek to study ``straightforward''  mechanisms in multidimensional settings, that is, mechanisms that are decomposable and strategy-proof.

If overall preferences satisfy \emph{separability}\footnote{An overall preference is separable if
a marginal preference on each component can be induced such that
for any two alternatives, the one endowed with a better element at each disagreed component is preferred.
Separability is an important preference restriction widely investigated in both the literature on strategic voting \citep[e.g.,][]{LS1999} and
on mechanism design with monetary compensations \citep[e.g.,][]{R1979}.},
a straightforward mechanism for social decisions can be simply constructed by assembling independent component-wise mechanisms that are also strategy-proof.
However, separability is too demanding; for instance, in a model of club member recruitment  \citep[see][]{BSZ1991}, one might imagine that while the appointment of exactly one candidate is preferred to nobody being appointed, it may be less desirable  to recruit all candidates, while in an auction model with non-quasilinear preferences \citep[see][]{MS2015},
large-scale payments might influence an agent's ability to utilize objects.
With a view to broadening the scope of straightforward mechanism design in multidimensional settings, we seek to develop here a methodology that accommodates non-separable preferences that go beyond multidimensional single-peakedness of \citet{BGS1993}, and allows us to answer the  following question:
What sort of preference domain over alternatives (formulated as a Cartesian product of multiple components)  reconciles decomposability with strategy-proofness, that is, predicated on some way of deriving marginal preferences on each component,
(i) every rule\footnote{We focus on strategy-proofness, wherein a direct mechanism reduces to a social choice function that picks an outcome from $A$ for each preference profile. The social choice function will be assumed to satisfy the mild requirement of unanimity and we use the term ``rule'' to refer to a unanimous social choice function.} that is strategy-proof turns out to be decomposable into strategy-proof component-wise rules over marginal preferences, and
(ii) conversely, arbitrary strategy-proof component-wise rules can be assembled into a strategy-proof rule?

Once non-separable preferences are involved, the derivation of marginal preferences on a component may vary with the specification of elements on the remaining components.
Specifically, as proposed by \citet{LW1999},
fixing an arbitrary alternative $z$, one can induce a marginal preference in a component $s$ from an overall preference by eliciting the relative rankings of alternatives that share the same components $z^{-s}$.\footnote{In particular, when the overall preference is a separable preference, a unique marginal preference in each component is derived no matter which alternative $z$ is referred to.}
This then would affect the scope for designing strategy-proof component-wise rules, which would in turn
affect the class of decomposable, strategy-proof rules.\footnote{If ``too many'' marginal preferences are derived, only dictatorships on each component survive strategy-proofness. This implies that any strategy-proof rule other than a \emph{generalized dictatorship} (intuitively speaking, a combination of dictatorships on all components) fails to be decomposable, and consequently the scope for assembling strategy-proof component-wise rules is limited to the class of generalized dictatorships.\label{footnote:generalizeddictatorship}}
We propose a natural and consistent way of deriving marginal preferences from an overall preference $P_i$:
given an overall preference, refer to the top-ranked alternative ($a \coloneqq (a^s, a^{-s})$) and derive the marginal preference over a pair of elements $x^s$ and $y^s$ by comparing $(x^s,a^{-s})$ with $(y^s,a^{-s})$ in $P_i$.
We show that this way of deriving marginal preferences has the merit of precipitating the decomposability property on all strategy-proof rules on a general class of multidimensional models where preference domains satisfy a condition called \emph{multidimensional hybridness} (see an informal introduction in Section \ref{sec:heuristicexample}).
This hence allows social decisions to be made component wise and then assembled, thereby simplifying the task confronting the social planner.
Our way of deriving marginal preferences is mainly inspired by the fact that
the top-ranked alternative does play the role of an important benchmark in the specification of preference restrictions (recall the seminal preference restriction of single-peakedness) and in the study of strategy-proof rules
which in many models \citep[e.g.,][]{CS2011} are completely and endogenously determined by the profiles of top-ranked alternatively, i.e.,
satisfy \emph{the tops-only property}.

We make the following claims on domains of multidimensional hybrid preferences. First,
the notion of multidimensional hybridness allows for more flexible descriptions than separable and multidimensional single-peaked preferences respectively (see an illustration in Section \ref{sec:heuristicexample}).
Hence, multidimensional hybrid domains variously contain separable preferences, multidimensional single-peaked preferences and top-separable preferences of \citet{LW1999}.
Next, we demonstrate in a heuristic example of public goods provision in Section \ref{sec:heuristicexample},
that requiring ``semi-separability'' - a weakening of separability, in the procedure
wherein the information of hybridness restriction \citep[introduced by][]{CRSSZ2022} is extracted from the domains of marginal preferences and embedded into the overall preferences, provides an intuitive route to the generation of multidimensional hybrid preferences.
Finally, we show that on a class of rich domains (see the details in Section \ref{sec:theorem}), multidimensional hybrid domains are the unique ones that reconcile decomposability with strategy-proofness (see Theorem \ref{thm}).
This in return enables us to fully characterize strategy-proof rules on a rich multidimensional hybrid domain (see Corollary \ref{cor:characterization}),
so that earlier characterization results on the separable domain, the multidimensional single-peaked domain and the top-separable domain emerge as special cases of our analysis.
A key step in our analysis is establishing endogenously the tops-only property for all strategy-proof rules on a rich domain, and for marginal rules on domains of marginal preferences\footnote{In many cases, the tops-only property is necessary for the decomposability of strategy-proof rules \citep[e.g.,][]{BSZ1991,BGS1993,LW1999}.},
which clearly further simplifies the task of the designer since the social decision on each component is determined by the profile of peaks on that component.

This paper is organized as follows.
Section \ref{sec:heuristicexample} provides a heuristic example of public goods provision to illustrate how multidimensional hybrid preferences arise.
Section \ref{sec:preliminaries} sets out the model and preliminaries.
In Section \ref{sec:result}, we formally introduce multidimensional hybrid domains and establish the characterization results, while Section \ref{sec:conclusion} contains some final remark and a review of the literature.
The proof of the main theorem is contained in the Appendix, while all other omitted proofs are put in the Supplementary Material.

\section{A heuristic example}\label{sec:heuristicexample}

Multidimensional hybrid preferences are generalizations of the hybrid preferences of \citet{CRSSZ2022}, and
can arise naturally through assembling hybrid preferences under the condition of semi-separability.
We provide the following example in the model of public goods provision to illustrate.

\begin{figure}[t]

\hspace{1em}
\includegraphics[width=0.9\textwidth]{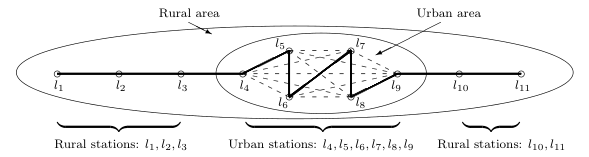}\\
\vspace{-3em}
\caption{\small The transportation system in the region\protect \footnotemark}\label{fig:region}
\end{figure}

\footnotetext{In Figure \ref{fig:region}, the bold line represents the railway, while the dashed lines denote the metro transportation system in the urban area that complements the railway.}

Imagine a railway running in a region, which contains several stations $\Omega = \{l_1, \dots, l_v\}$,
$v \geq 2$.
An urban area stands in the center of the region, surrounded by a large rural area.
The railway goes through the urban area, and all multiple urban stations cluster in the middle.
The urban area is postulated to possess a modern metro transportation system that fully connects all urban stations (see for instance Figure \ref{fig:region}).
Thus, we can identify two particular urban stations $l_{\underline{k}}$ and $l_{\overline{k}}$,
where $\underline{k}< \overline{k}$ (see $l_4$ and $l_9$ in Figure \ref{fig:region}),
that separate the railway into three parts:
left rural stations $\mathcal{L} = \{l_1, \dots, l_{\underline{k}-1}\}$,
middle urban stations $\mathcal{M} = \{l_{\underline{k}}, \dots, l_{\overline{k}}\}$ and
right rural stations $\mathcal{R} = \{l_{\overline{k}+1}, \dots, l_v\}$.
As $l_{\underline{k}}$ is directly connected to all urban stations and is the gate to the left rural stations, it can be viewed as the left transportation hub. Symmetrically, $l_{\overline{k}}$ is the right transportation hub.

A set of multiple good public facilities $M = \{1, \dots, m\}$, $m \geq 2$,
like a sports complex and a shopping mall, needs to be allocated on stations.
For each public facility $s$, let a nonempty and non-singleton subset $A^s \subseteq \Omega$ collect locations that are available for construction.
Clearly, all locations of $A^s$ are linearly ordered on the railway, i.e.,
given $a^s = l_p\in A^s$ and $b^s = l_q\in A^s$, $[a^s \prec^s b^s] \Leftrightarrow [p< q]$.
Given $x^s, y^s \in A^s$,
let $\textrm{Int}\langle x^s, y^s\rangle \coloneqq \big\{a^s \in A^s: x^s \prec^s a^s \prec^s y^s\; \textrm{or}\; y^s \prec^s a^s \prec^s x^s\big\}$ collect feasible locations that are located strictly between $x^s$ and $y^s$.
Furthermore, for simplicity, we assume that both transportation hubs are available for the construction of the public facility $s$ whenever at least two urban locations are included in $A^s$, i.e., $\big[|A^s\cap \mathcal{M}|\geq 2\big]\Rightarrow \big[l_{\underline{k}}, l_{\overline{k}} \in A^s\big]$.
Henceforth, for each public facility $s \in M$, we identify two particular feasible locations $\underline{x}^s$ and $\overline{x}^s$, called \emph{threshold locations}, such that
$\big[|A^s\cap \mathcal{M}| \leq 2\big] \Rightarrow \big[\underline{x}^s=\overline{x}^s \in A^s\; \textrm{is arbitrary}\big]$ and
$\big[|A^s\cap \mathcal{M}| > 2\big] \Rightarrow \big[\underline{x}^s= l_{\underline{k}}\; \textrm{and}\;\overline{x}^s =l_{\overline{k}}\big]$.
Clearly, an $m$-tuple $(a^1, \dots, a^m) \in \times_{s \in M}A^s$ represents a feasible allocation of these $m$ public facilities.

For each public facility $s$,
an individual $i$ living in this region has a marginal preference $P_i^s$ over $A^s$.
Each individual's marginal preference is private information, and
the social planner only knows that a domain $\mathbb{D}^s$ contains all individuals' marginal preferences.
It is natural to assume here that
the formulation of an individual's marginal preference is determined by the
distance between locations measured by both the railway and the metro transportation system.
Specifically,
if $|A^s\cap \mathcal{M}| \leq 2$,
an individual would prefer an available location that is closer to his/her own location along the railway, and therefore has \emph{single-peaked} preferences on $A^s$ w.r.t.~$\prec^s$; if $|A^s\cap \mathcal{M}| > 2$,
an individual living around a rural station would have single-peaked preferences on available locations that lie at the two sides of the transportation hubs along the railway, and
prefer the proximate transportation hub to all other available locations in the urban area,
while an individual living in the urban area would have arbitrary preferences on all urban available locations attributed to the complementary metro transportation, and prefer an rural available location that is closer to its proximate transportation hub.
Overall, we summarize that each marginal preference $\hat{P}_i^s \in \mathbb{D}^s$, say that $x^s$ is the top-ranked location, is \textbf{hybrid} on $\prec^s$ w.r.t.~$\underline{x}^s$ and $\overline{x}^s$,
i.e., given two distinct locations $a^s, b^s \in A^s$,
\begin{align*}
\big[a^s \in \textrm{Int}\langle x^s, b^s\rangle\;\textrm{and}\; a^s \notin \textrm{Int}\langle \underline{x}^s, \overline{x}^s\rangle\big]
\Rightarrow \big[a^s\mathrel{\hat{P}_i^s}b^s\big].
\end{align*}

Each individual $i$ also has a private overall preference $P_i$ over all feasible allocations
$A \coloneqq \times_{s \in M}A^s$,
which is private information as well.\footnote{Given $s \in M$, let $A^{-s} \coloneqq \mathop{\times}\limits_{t \in M\backslash \{s\}}A^t$ denote the set of all feasible allocations for public facilities other than $s$.}
The social planner of course will make some inference on individuals' overall preferences based on the known information. Specifically, the social planner believes that
each individual's overall preference is formulated according to the condition of \textit{semi-separability}
which requires the rankings of allocations to fully respect the preference restrictions embedded in all marginal domains $\mathbb{D}^1, \dots, \mathbb{D}^m$.
Formally,
an overall preference $P_i$, where the allocation $x$ is top-ranked,
is \textbf{semi-separable} if for all distinct allocations $a,b \in A$,
we have
\begin{align*}
\left[
\begin{array}{l}
a^s \mathrel{\hat{P}_i^s} b^s\; \textrm{for all}\; s \in M\; \textrm{such that}\; a^s\neq b^s\;
\textrm{and}\\
\textrm{all}\; \hat{P}_i^s \in \mathbb{D}^s\; \textrm{such that}\; x^s\; \textrm{is top-ranked}
\end{array}
\right]
\Rightarrow \big[a\mathrel{P_i}b\big].\footnotemark
\end{align*}

\footnotetext{Indeed, semi-separability weakens the restriction of separability.
For instance, given $s \in M$, pairwise distinct $x^s, a^s, b^s \in A^s$ and
$\hat{P}_i^s, \tilde{P}_i^s \in \mathbb{D}^s$ such that $r_1(\hat{P}_i^s) = r_1(\tilde{P}_i^s) = x^s$,
$a^s \mathrel{\hat{P}_i^s} b^s$ and $b^s \mathrel{\tilde{P}_i^s} a^s$,
a semi-separable preference $P_i$ that includes $x^s$ in its top-ranked allocation,
can have $(a^s, y^{-s})\mathrel{P_i} (b^s, y^{-s})$ and
$(b^s, z^{-s})\mathrel{P_i} (a^s, z^{-s})$ simultaneously.}

\noindent
Immediately,
we realize that the overall preference $P_i$, recalling that $x$ is top-ranked, satisfies
the following restriction:
for all allocations $a,b \in A$ that disagree on exactly one dimension, say $a^s \neq b^s$ and $a^{-s} = b^{-s}$,
$a$ is strictly preferred to $b$ whenever
either $a^s$ equals $x^s$,
or $a^s$ is located strictly between $x^s$ and $b^s$, but \emph{not} strictly between $\underline{x}^s$ and $\overline{x}^s$, i.e.,
\begin{align*}
	&~\big[\textrm{either}\; a^s = x^s,\;
	\textrm{or}\; a^s \in \textrm{Int}\langle x^s, b^s\rangle\;\textrm{and}\; a^s \notin \textrm{Int}\langle \underline{x}^s, \overline{x}^s\rangle
	\big] \\
    \xLongrightarrow{\textrm{~induce hybridness~~}}&
    ~\big[a^s \mathrel{\hat{P}_i^s} b^s\; \textrm{for all}\; \hat{P}_i^s \in \mathbb{D}^s\; \textrm{such that}\; x^s\; \textrm{is top-ranked}\big] \\
    \xLongrightarrow{\textrm{~via semi-separability~~}}
    & ~\big[a\mathrel{P_i}b\big].
\end{align*}
We call such a preference restriction \textbf{multidimensional hybridness},
as it incorporates the restriction of hybridness embedded in the marginal domain of each public facility
in formulating relative rankings of allocations of all $m$ public facilities.
Note that in the extreme case that $\underline{x}^s = \overline{x}^s$ for all $s \in M$,
multidimensional hybridness is strengthened to the conventional restriction of multidimensional single-peakedness.
Therefore, the choice of distinct threshold allocations provides freedom to rank alternatives in ways that go beyond the requirement of multidimensional single-peakedness; for instance,
in the aforementioned multidimensional hybrid preference $P_i$,
given $a^s \in \textrm{Int}\langle x^s, b^s\rangle$ and $a^s \in \textrm{Int}\langle \underline{x}^s, \overline{x}^s\rangle$,
we may simultaneously have $(a^s, y^{-s})\mathrel{P_i}(b^s, y^{-s})$ and $(b^s, z^{-s})\mathrel{P_i}(a^s, z^{-s})$, which of course also indicate a violation of separability.


In the remainder of the paper, we establish a general multidimensional model,
where multidimensional hybrid preferences are formulated without the imposition of any additional condition like semi-separability, and explore the salience of multidimensional hybrid domains by characterizing
that under some mild richness condition, multidimensional hybridness is necessary and sufficient for a  preference domain to reconcile decomposability with strategy-proofness.

\newpage
\section{Preliminaries}
\label{sec:preliminaries}

Let $A$ be a finite set of alternatives.
We throughout the paper assume
that the alternative set is represented by a \textbf{Cartesian product}
of a finite number of sets, each of which contains finitely many elements.
Formally, we fix $A = \times_{s \in M}A^{s}$ where $M= \{1, \dots, m\}$, $m \geq 2$ is an integer, and $2 \leq |A^{s}| < \infty$ for each $s \in M$.\footnote{The condition $|A^s| \geq 2$ ensures indispensability of the component $s$.}
Here, each $s$ is called a \textbf{component}; $A^{s}$ is referred to as a \textbf{%
component set}, and an element in $A^{s}$ is denoted as $a^{s}$.
An alternative is represented by an $m$-tuple,
i.e., $a \coloneqq (a^1, \dots, a^{m})= (a^{s}, a^{-s})$.
Given $s \in M$ and $x^{-s}\in A^{-s}$,
let $(A^s, x^{-s}) \coloneqq \{a \in A: a^{-s} = x^{-s}\}$.
Given two alternatives $a,b\in A$, let $M(a,b) \coloneqq \{s \in M: a^s \neq b^s\}$ denote the set of components on which $a$ and $b$ disagree.
In particular, $a$ and $b$ are said \textbf{similar}
if $|M(a,b)| = 1$.
Let $N \coloneqq \{1, \dots, n\}$ be a finite set of voters with $n \geq 2$.
Each voter $i$ has a preference order $P_{i}$ over $A$ which is complete,
antisymmetric and transitive, i.e., a \emph{linear order}.
For any $a, b \in A$,
$a\mathrel{P_{i}}b$ is interpreted as ``$a$ is strictly preferred to $b$
according to $P_{i}$".
Given a preference $P_{i}$, let $r_{k}(P_{i})$, where $1 \leq k \leq |A|$, denote the $k$th ranked
alternative in $P_{i}$.
Moreover, given a nonempty subset $B \subseteq A$, let $\max^{P_i}(B)$ and $\min^{P_i}(B)$ be the best and worst alternatives in $B$ according to $P_i$ respectively.
Two preferences $P_i$ and $P_i'$ are called \textbf{complete reversals} if for all $a,b \in A$, we have
$[a\mathrel{P_i}b] \Leftrightarrow [b\mathrel{P_i'}a]$.
Let $\mathbb{P}$
denote the set of \emph{all} linear orders over $A$.
The set of admissible preferences is a set $\mathbb{D} \subseteq \mathbb{P}$, referred
to as a \textbf{preference domain}.\footnote{In this paper, $\subseteq$ and $\subset$ denote the weak and strict inclusion relations respectively.}
We call $\mathbb{P}$ \textbf{the universal domain}.
Henceforth, each domain $\mathbb{D}$ under investigation is assumed to be \textbf{minimally rich},
i.e., for each $a\in A$, there exists $P_i \in \mathbb{D}$ such that $r_1(P_i) = a$.
A \textbf{preference profile} is an $n$-tuple $P \coloneqq (P_{1}, \dots, P_{n})
= (P_{i}, P_{-i}) \in \mathbb{D}^{n}$.
Analogously, for each $s \in M$,
let $P_i^s$ denote a \textbf{marginal preference} over $A^s$, $\mathbb{P}^s$ denote  \textbf{the universal marginal domain}, and
$\mathbb{D}^s \subseteq \mathbb{P}^s$ denote an admissible \textbf{marginal domain}.

A \textbf{Social Choice Function} (or \textbf{SCF}) is a map $f: \mathbb{D}^{n} \rightarrow A$,
which associates to each preference profile $P \in \mathbb{D}^{n}$, a
``socially desirable'' outcome $f(P)$.
First, an SCF $f: \mathbb{D}^{n}
\rightarrow A$ is required to be \textbf{unanimous}, i.e.,
for all $a \in A$ and $P \in \mathbb{D}^n$,
we have
$[r_{1}(P_{i}) = a$ for all $i \in N] \Rightarrow [f(P) = a]$.
For ease of presentation, a unanimous SCF henceforth is called a \textbf{rule}.
Next, an SCF $f: \mathbb{D}^n \rightarrow A$ satisfies the \textbf{tops-only property} if
for all $P,P' \in \mathbb{D}^n$, we have $[r_1(P_i) = r_1(P_i')\; \textrm{for all}\; i \in N] \Rightarrow
[f(P) = f(P')]$.
Last, an SCF $f: \mathbb{D}^n \rightarrow A$ is \textbf{%
strategy-proof} if for all $i \in N$, $P_{i}, P_{i}^{\prime } \in \mathbb{D}$
and $P_{-i} \in \mathbb{D}^{n-1}$, we have $[f(P_i,P_{-i}) \neq f(P_i', P_{-i})] \Rightarrow [f(P_i,P_{-i}) \mathrel{P_i} f(P_i', P_{-i})]$.
Analogously, given $s \in M$ and $[\mathbb{D}^s]^n \coloneqq \underset{n}{\underbrace{\mathbb{D}^s \times \dots \times \mathbb{D}^s}}$, a \textbf{marginal SCF} is a map $f^s: [\mathbb{D}^s]^n \rightarrow A^s$.
These three axioms alluded to also apply to marginal SCFs.
A unanimous marginal SCF is henceforth called a \textbf{marginal rule}.

\medskip
\subsection{\rm Separable preference and non-separable preference}\label{sec:separable}


Formally,
a preference $P_i$ is \textbf{separable} if there exists a marginal preference $P_i^s$ for each $s \in M$ such that
for each pair of similar alternatives $a, b \in A$,
say $M(a,b) = \{s\}$, we have
$\big[a^{s}\mathrel{P_{i}^{s}} b^{s}\big] \Rightarrow \big[a\mathrel{P_i}b\big]$.
Let $\mathbb{D}_{\textrm{S}}$ denote \textbf{the separable domain} that contains all separable preferences.
Clearly, $\mathbb{D}_{\textrm{S}} \subset \mathbb{P}$, and
a preference that is not separable is called a non-separable preference.
Henceforth, a domain is said to satisfy \textbf{diversity\textsuperscript{$+$}} if it contains two
separable preferences that are complete reversals.\footnote{The term ``diversity\textsuperscript{$+$}''
strengthens the notion of \emph{diversity} introduced by \citet{CRSSZ2022}, as it further requires the complete reversals to be separable preferences.}

More importantly, we introduce a particular way of deriving marginal preferences from both separable and non-separable preferences.
Given a preference $P_i$ (separable or non-separable), say $r_1(P_i) = a$,
for each $s \in M$, referring to $a^{-s}$ which are contained in the peak of $P_i$,
we induce a marginal preference, denoted $[P_i]^s$, such that
for all $x^s, y^s \in A^s$, $\big[x^s\mathrel{[P_i]^s}y^s\big] \Leftrightarrow \big[(x^s, a^{-s})\mathrel{P_i}(y^s, a^{-s})\big]$.
Accordingly, let $[\mathbb{D}]^s \coloneqq \big\{[P_i]^s: P_i \in \mathbb{D}\big\}$ denote the set of marginal preferences induced from all preferences of $\mathbb{D}$.
To avoid confusion with the notation $P_i^s$ and  $\mathbb{D}^s$,
 we henceforth call $[P_i]^s$ an \textbf{induced marginal preference} and
$[\mathbb{D}]^s$ an \textbf{induced marginal domain}.\footnote{Indeed, both $P_i^s$ and $[P_i]^s$ refer to linear orders over $A^s$.
For the sake of notation, $[P_i]^s$ emphasizes that it is induced from a given preference $P_i$.
Similarly,
$[\mathbb{D}]^s$ emphasizes that it contains marginal preferences over $A^s$ that are induced from preferences of $\mathbb{D}$.}

\subsection{\rm Decomposable SCF and decomposable domain}

An SCF $f: \mathbb{D}^n \rightarrow A$ is said \textbf{decomposable} if for each $s \in M$,
there exists a marginal SCF $f^s: \big[[\mathbb{D}]^s\big]^n \rightarrow A^s$ such that
for all $(P_1, \dots, P_n) \in \mathbb{D}^n$, we have
\begin{align*}
\big[f(P_1, \dots, P_n) = a\big] \Leftrightarrow \big[f^s([P_1]^s, \dots, [P_n]^s) = a^s\; \textrm{for all}\;\, s \in M\big].
\end{align*}
We focus on preference domains that reconcile decomposability of all rules with strategy-proofness.


\begin{definition}\label{def:decomposabledomain}
A domain $\mathbb{D}$ is a \textbf{decomposable domain} if for every SCF $f: \mathbb{D}^n \rightarrow A$,
$n \geq 2$, we have
\begin{align*}
\Big[f \; \textrm{is a strategy-proof rule}\Big]
\Leftrightarrow
\left[
\begin{array}{l}
\!\!f\; \textrm{is decomposable, and all marginal SCFs}\\
\!\!f^1, \dots, f^m\; \textrm{are strategy-proof marginal rules}
\end{array}
\!\!\right].
\end{align*}
\end{definition}

On the one hand,
since dictatorships are strategy-proof marginal rules on arbitrary induced marginal domains,
by the requirement of the direction ``$\Leftarrow$'' in Definition \ref{def:decomposabledomain},
all generalized dictatorships (recall footnote \ref{footnote:generalizeddictatorship}) are entitled with strategy-proofness.
This implies that a decomposable domain must be embedded with some preference restriction.
For instance, if all preferences of the domain are separable,
an SCF constructed by assembling strategy-proof marginal rules immediately turns out to be a strategy-proof rule.
On the other hand, to meet the requirement of the direction ``$\Rightarrow$'' in Definition \ref{def:decomposabledomain}, a decomposable domain is required to contain sufficiently many preferences.
For instance, on the universal domain $\mathbb{P}$, by the Gibbard-Satterthwaite Theorem \citep{G1973,S1975},
each strategy-proof rule is a dictatorship, and hence can be decomposed into $m$ marginal dictatorships that share the same dictator.
Therefore, a decomposable domain must be a restricted preference domain that satisfies some richness condition.

\section{Results}\label{sec:result}
In this section, we introduce multidimensional hybrid domains, and adopt it to establish a complete characterization of decomposable domains under some mild richness condition.

\subsection{\rm Multidimensional hybrid domains and fixed ballot rules}\label{sec:MHDomain}

Fixing a linear order $\prec^s$ over $A^s$ for each $s \in M$,
let $\prec \,\coloneqq \times_{s \in M}\prec^s$ denote the Cartesian product of $\prec^1, \dots, \prec^m$.
Given $s \in M$ and $a^s, b^s \in A^s$, let $\langle a^s, b^s\rangle \coloneqq \{x^s \in A^s: a^s \preccurlyeq^s x^s \preccurlyeq^s b^s\; \textrm{or}\; b^s \preccurlyeq^s x^s \preccurlyeq^s a^s\}$ denote the set of elements that located between $a^s$ and $b^s$ on the linear order $\prec^s$,\footnote{For notational convenience, henceforth, let $a^s \preccurlyeq^s b^s$ denote either $a^s \prec^s b$ or $a^s = b^s$.} and
let $\textrm{Int}\langle a^s, b^s\rangle \coloneqq \{x^s \in A^s: a^s \prec^s x^s \prec^s b^s\; \textrm{or}\; b^s \prec^s x^s \prec^s a^s\}$ denote the set of elements that are located strictly between $a^s$ and $b^s$.
Given $s \in M$, two elements $\underline{x}^s$ and $\overline{x}^s$ are called \textbf{marginal thresholds} if either $\underline{x}^s = \overline{x}^s$,
or $\underline{x}^s \neq \overline{x}^s$ and $|\langle \underline{x}^s, \overline{x}^s\rangle|\geq 3$.
Correspondingly, two alternatives $\underline{x}$ and $\overline{x}$
are called \textbf{thresholds} if
for each $s \in M$, $\underline{x}^s$ and $\overline{x}^s$ are marginal thresholds.

\begin{definition}\label{def:MH}
A preference $P_i$, say $r_1(P_i) = x$, is \textbf{multidimensional hybrid} on $\prec$ w.r.t.~$\underline{x}$ and $\overline{x}$ if for all similar $a, b \in A$, say $M(a,b) = \{s\}$, we have
\begin{itemize}
\item[\rm (i)] $\big[a^s = x^s\big]
\Rightarrow \big[a\mathrel{P_i}b\big]$, and

\item[\rm (ii)] $\big[a^s \in \emph{Int}\langle x^s, b^s\rangle\; \textrm{and}\;\, a^s \notin \emph{Int}\langle \underline{x}^s, \overline{x}^s\rangle\big]
\Rightarrow \big[a\mathrel{P_i}b\big]$.\footnote{By transitivity, given distinct $a,b \in A$ (not necessarily similar alternatives),
if $a^s = x^s$, or $a^s \in \textrm{Int}\langle x^s, b^s\rangle\; \textrm{and}\; a^s \notin \textrm{Int}\langle \underline{x}^s, \overline{x}^s\rangle$ hold for all $s \in M(a, b)$, we have $a\mathrel{P_i}b$.\label{footnote}}
\end{itemize}
\end{definition}

\begin{remark}\label{rem:relation1}\rm
Consider two extreme cases:
(1) $\underline{x}^s = \min^{\prec^s}(A^s)$ and $\overline{x}^s = \max^{\prec^s}(A^s)$ for each $s \in M$, and
(2) $\underline{x} = \overline{x}$.
In the first case, condition (ii) of Definition \ref{def:MH} is redundant as $a^s \in \textrm{Int}\langle x^s, b^s\rangle$ and $a^s \notin \textrm{Int}\langle \underline{x}^s, \overline{x}^s\rangle$ cannot hold simultaneously.
Thus, only condition (i) survives, and hence
a multidimensional hybrid preference turns to be a \textbf{top-separable} preference of \citet{LW1999}.
In the second case,
since the hypothesis $a^s \notin \textrm{Int}\langle \underline{x}^s, \overline{x}^s\rangle = \emptyset$ is vacuously satisfied,
the two conditions of Definition \ref{def:MH} can be merged:
$\big[a^s \in \langle x^s, b^s\rangle\big] \Rightarrow [a\mathrel{P_i}b]$.
Then, a multidimensional hybrid preference becomes as restrictive as a \textbf{multidimensional single-peaked} preference of \citet{BGS1993}.
\end{remark}

We provide one example to explain multidimensional hybrid preferences.

\begin{figure}[t]
\begin{center}
\includegraphics[width=0.5\textwidth]{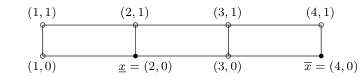}\\
\caption{The Cartesian product of linear orders $\prec= \prec^1 \times \prec^2$}\label{fig:grid}
\end{center}
\end{figure}

\begin{example}\label{exm:MH}\rm
Recall the model of public goods provision in Section \ref{sec:heuristicexample}.
Let $M = \{1,2\}$, $A^1 = \{l_2, l_4, l_6, l_9\}$ and $A^2 = \{l_3, l_8\}$.
Thus, we have the linear orders $l_2 \prec^1 l_4 \prec^1 l_6\prec^1 l_9$ and $l_3 \prec^2 l_8$,
two thresholds $\underline{x} \coloneqq (l_4, l_3)$ and $\overline{x} \coloneqq (l_9, l_3)$,
and
all feasible allocations $A\coloneqq A^1 \times A^2$ arranged on the grid $\prec = \prec^1 \times \prec^2$ in Figure \ref{fig:grid}.
For instance, we specify the restriction of multidimensional hybridness on a preference with the peak $(l_2,l_3)$.
First, to meet condition (i) of Definition \ref{def:MH},
it must be the case that
for each $a^1 \in \{l_4, l_6, l_9\}$,
$(l_2,l_8)$ is ranked above $(a^1,l_8)$, and $(a^1, l_3)$ is ranked above $(a^1, l_8)$.
Second, since $l_4 \in \textrm{Int}\langle l_2, l_6\rangle$, $l_4 \in \textrm{Int}\langle l_2, l_9\rangle$ and $l_4 \notin
\textrm{Int}\langle l_4, l_9\rangle =\textrm{Int}\langle \underline{x}^1, \overline{x}^1\rangle$,
to meet condition (ii) of Definition \ref{def:MH},
we need to ensure that
for each $a^2 \in A^2$,
$(l_4,a^2)$ is ranked above both $(l_6,a^2)$ and $(l_9,a^2)$.
It is worth mentioning that the relative ranking between $(l_6,l_3)$ and $(l_9,l_3)$ (also between $(l_6,l_8)$ and $(l_9,l_8)$) is allowed to be arbitrary.
Accordingly, we specify two examples of such multidimensional hybrid preferences:
\begin{align*}
P_i = &~ (l_2,l_3)_{\rightharpoonup}(l_4,l_3)_{\rightharpoonup}(l_6,l_3)_{\rightharpoonup}(l_9,l_3)_{\rightharpoonup}(l_2,l_8)_{\rightharpoonup}(l_4,l_8)_{\rightharpoonup}(l_6,l_8)_{\rightharpoonup}(l_9,l_8),\; \textrm{and}\\
\hat{P}_i = &~
(l_2,l_3)_{\rightharpoonup}(l_2,l_8)_{\rightharpoonup}(l_4,l_3)_{\rightharpoonup}(l_4,l_8)_{\rightharpoonup}(l_6,l_3)_{\rightharpoonup}(l_9,l_3)_{\rightharpoonup}(l_9,l_8)_{\rightharpoonup}(l_6,l_8).\protect \footnotemark
\end{align*}
Note that $P_i$ is separable, and $\hat{P}_i$ is a non-separable preference.
\footnotetext{To save space, we specify the two preferences here horizontally. For instance,
the notation ``$(l_2,l_3)_{\rightharpoonup}(l_4,l_3)$'' in the specification of $P_i$ represents ``$(l_2,l_3)\mathrel{P_i}(l_4,l_3)$''.}
\hfill$\Box$
\end{example}

%
%

We focus on a large family of domains of multidimensional hybrid preferences
where sufficiently many marginal preferences in each component can be induced.
To do so, we first introduce some standard concepts from graph theory.
An (undirected) \textbf{graph}, denoted $G \coloneqq \langle V, \mathcal{E}\rangle$, is a combination of a ``vertex set'' $V$ and an ``edge set'' $\mathcal{E} \subseteq V\times V$ such that
$\big[(\alpha, \beta) \in \mathcal{E}\big] \Rightarrow \big[\alpha \neq \beta\; \textrm{and}\;  (\beta, \alpha) \in \mathcal{E}\big]$.
A vertex $\alpha \in V$ is called a \textbf{leaf} if there exists a unique $\beta \in V$ such that $(\alpha, \beta) \in \mathcal{E}$.
Given $\alpha, \beta \in V$, a \textbf{path} in $G =\langle V, \mathcal{E}\rangle $ connecting $\alpha$ and $\beta$ is a sequence of non-repeated vertices $(\alpha_1, \dots, \alpha_v)$, $v \geq 2$, such that
$\alpha_1 = \alpha$, $\alpha_v = \beta$ and
$(\alpha_k, \alpha_{k+1}) \in \mathcal{E}$ for all $k = 1, \dots, v-1$.
A graph $G = \langle V, \mathcal{E}\rangle$ is a \textbf{connected graph} if
each pair of distinct vertices is connected by a path.
Note that the vertex set $V$ here can be
a subset of elements,
of alternatives,
of marginal preferences, or
of preferences.
For instance, following Definition 1 of \citet{CSS2013},
two elements $a^s, b^s \in A^s$ are said \textbf{strongly connected},
denoted $a^s \approx b^s$,
if there exist $[P_i]^s, [P_i']^{s} \in [\mathbb{D}]^s$ such that
$r_1([P_i]^s) = r_2([P_i']^s) = a^s$, $r_1([P_i']^s) = r_2([P_i]^s) = b^s$ and
$r_k([P_i]^s) = r_k([P_i']^s)$ for all $k = 3, \dots, |A^s|$.
Accordingly, given a nonempty subset $B^s \subseteq A^s$,
we can induce a graph $G_{\approx}^{B^s} \coloneqq \langle B^s, \mathcal{E}_{\approx}^{B^s}\rangle$
where two elements of $B^s$ form an edge if and only if they are strongly connected.

\begin{definition}\label{def:AMH}
A domain $\mathbb{D}$ is called \textbf{a multidimensional hybrid domain} if
there exist thresholds $\underline{x}, \overline{x} \in A$ such that
\begin{itemize}
\item[\rm (i)] all preferences of $\mathbb{D}$ are multidimensional hybrid on $\prec$ w.r.t.~$\underline{x}$ and $\overline{x}$, and

\item[\rm (ii)] for each $s \in M$,  $G_{\approx}^{A^s}$ is
a connected graph, and\\
$\big[s \in M(\underline{x}, \overline{x})\big] \Rightarrow
\big[G_{\approx}^{\langle \underline{x}^s, \; \overline{x}^s\rangle}\; \textrm{has no leaf}\,\big]$.
\end{itemize}
\end{definition}

Given thresholds $\underline{x}, \overline{x} \in A$,
it is evident that \textbf{the multidimensional hybrid domain} which contains all multidimensional hybrid preferences on $\prec$ w.r.t.~$\underline{x}$ and $\overline{x}$, denoted $\mathbb{D}_{\textrm{MH}}(\prec, \underline{x}, \overline{x})$, is a multidimensional hybrid domain.
It is easy to verify that the separable domain $\mathbb{D}_{\textrm{S}}$ is a multidimensional hybrid domain:
fixing thresholds $\underline{x}$ and $\overline{x}$ such that for each $s \in M$,
$[|A^s| = 2] \Rightarrow [\underline{x}^s = \overline{x}^s \in A^s]$ and
$[|A^s| \geq 3] \Rightarrow [\underline{x}^s = \min^{\prec^s}(A^s)\; \textrm{and}\; \overline{x}^s = \max^{\prec^s}(A^s)]$,
we have that (i) all preferences of $\mathbb{D}_{\textrm{S}}$ are multidimensional hybrid on $\prec$ w.r.t.~$\underline{x}$ and $\overline{x}$, and
(ii) for each $s \in M$, since $[\mathbb{D}_{\textrm{S}}]^s = \mathbb{P}^s$,
 $G_{\approx}^{A^s}$ is a connected graph and $\big[s \in M(\underline{x}, \overline{x})\big] \Rightarrow
\big[G_{\approx}^{\langle \underline{x}^s, \; \overline{x}^s\rangle} = G_{\approx}^{A^s}\; \textrm{has no leaf}\,\big]$.

Recall the two extreme cases in Remark \ref{rem:relation1}:
(1) $\underline{x}^s = \min^{\prec^s}(A^s)$ and $\overline{x}^s = \max^{\prec^s}(A^s)$ for each $s \in M$, and
(2) $\underline{x} = \overline{x}$.
In the first case, the multidimensional hybrid domain $\mathbb{D}_{\textrm{MH}}(\prec, \underline{x}, \overline{x})$ expands to the top-separable domain of \citet{LW1999}, denoted $\mathbb{D}_{\textrm{TS}}$,
while in the second case, $\mathbb{D}_{\textrm{MH}}(\prec, \underline{x}, \overline{x})$ shrinks to the multidimensional single-peaked domain of \citet{BGS1993}, denoted $\mathbb{D}_{\textrm{MSP}}(\prec)$.
Therefore, Definition \ref{def:AMH} is also applicable to these two important preference domains.
Last, note that for any thresholds $\underline{x}, \overline{x} \in A$ other than the two extreme cases,
it is true that $\mathbb{D}_{\textrm{TS}} \supset \mathbb{D}_{\textrm{MH}}(\prec,\underline{x}, \overline{x}) \supset \mathbb{D}_{\textrm{MSP}}(\prec)$.

Fixing a multidimensional hybrid domain $\mathbb{D}$ on $\prec$ w.r.t.~thresholds $\underline{x}$ and $\overline{x}$,
given $s\in M$,
by condition (i) of Definition \ref{def:AMH},
it is clear that each induced marginal preference is hybrid on $\prec^s$ w.r.t.~$\underline{x}^s$ and $\overline{x}^s$ (recall the definition in Section \ref{sec:heuristicexample}).
Consequently, in conjunction with condition (ii) of Definition \ref{def:AMH},
we observe that
if $\underline{x}^s = \overline{x}^s$, $G_{\approx}^{A^s}$ is a line over $A^s$,
while if $\underline{x}^s \neq \overline{x}^s$,
$G_{\approx}^{A^s}$ is simply a combination of a line between $\min^{\prec^s}(A^s)$ and $\underline{x}^s$, a connected subgraph $G_{\approx}^{\langle \underline{x}^s, \; \overline{x}^s\rangle}$ that has no leaf, and a line between $\overline{x}^s$ and $\max^{\prec^s}(A^s)$.
More importantly, by condition (ii) of Definition \ref{def:AMH},
sufficiently many hybrid marginal preferences are induced so that
we can establish a complete characterization of strategy-proof marginal rules,
using the class of fixed ballot rules introduced by \citet{M1980}.

A marginal SCF $f^s: [\mathbb{D}^s]^n \rightarrow A^s$ is a \textbf{Fixed Ballot Rule} (or \textbf{FBR}) on $\prec^s$ if there exists $b_J^s \in A^s$, called a \emph{fixed ballot}, for each coalition $J \subseteq N$, satisfying
\emph{ballot unanimity}, i.e., $b_{\emptyset}^s = \min^{\prec^s}(A^s)$ and $b_{N}^s = \max^{\prec^s}(A^s)$, and
\emph{monotonicity}, i.e., $[J \subset J' \subseteq N] \Rightarrow [b_J^s \preccurlyeq^s b_{J'}^s]$,
such that for all $(P_1^s, \dots, P_n^s) \in [\mathbb{D}^s]^n$,
\begin{align*}
f^s(P_1^s, \dots, P_n^s)  = \mathop{\mathop{\max}\nolimits^{\prec^s}}\limits_{J \subseteq N~~~\,}
\Big(\mathop{\mathop{\min}\nolimits^{\prec^s}}\limits_{i \in J~~~}\big(r_1(P_i^s), b_J^s\big)\Big).
\end{align*}
Furthermore, given $\underline{x}^s, \overline{x}^s \in A^s$ such that $\underline{x}^s \prec^s \overline{x}^s$,
the FBR $f^s$ is called an \textbf{$\bm{(\underline{x}^s, \overline{x}^s)}$-FBR}
\citep[introduced by][]{CRSSZ2022} if it in addition satisfies \emph{the constrained-dictatorship condition}, i.e., there exists $i \in N$ such that $[i \in J] \Rightarrow \big[\overline{x}^s \preccurlyeq^s b_J^s \big]$ and $[i \notin J] \Rightarrow \big[b_J^s \preccurlyeq^s \underline{x}^s\big]$.\footnote{The constrained-dictatorship condition ensures that the FBR $f^s$ \textbf{behaves like a dictatorship} on $\langle \underline{x}^s, \overline{x}^s\rangle$, i.e., for all $(P_1^s, \dots, P_n^s) \in [\mathbb{D}^s]^n$, $[r_1(P_1^s), \dots, r_1(P_n^s) \in \langle \underline{x}^s, \overline{x}^s\rangle] \Rightarrow \big[f^s(P_1^s, \dots, P_n^s) = r_1(P_i^s)\big]$.\label{footnote:behavelikeadictatorship}}

\begin{proposition}\label{prop:FBR}
Fix a multidimensional hybrid domain $\mathbb{D}$ on $\prec$ w.r.t.~thresholds $\underline{x}$ and $\overline{x}$.
Given $s\in M$, the following two statements hold:
\begin{itemize}
\item[\rm (i)] Given $\underline{x}^s = \overline{x}^s$,
an marginal SCF $f^s : \big[[\mathbb{D}]^s\big]^n \rightarrow A^s$ is a strategy-proof marginal rule if and only if it is an FBR.

\item[\rm (ii)] Given $\underline{x}^s \neq \overline{x}^s$,
an marginal SCF $f^s : \big[[\mathbb{D}]^s\big]^n \rightarrow A^s$ is a strategy-proof marginal rule if and only if it is an $(\underline{x}^s, \overline{x}^s)$-FBR.
\end{itemize}
\end{proposition}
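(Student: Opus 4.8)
The plan is to treat each component $s$ in isolation and to exploit two facts the text has already recorded: every induced marginal preference in $[\mathbb{D}]^s$ is hybrid on $\prec^s$ w.r.t.~$\underline{x}^s$ and $\overline{x}^s$, and condition (ii) of Definition \ref{def:AMH} pins down the shape of $G_{\approx}^{A^s}$ (a line when $\underline{x}^s=\overline{x}^s$; otherwise a line from $\min^{\prec^s}(A^s)$ to $\underline{x}^s$, a leafless connected block on $\langle\underline{x}^s,\overline{x}^s\rangle$, and a line from $\overline{x}^s$ to $\max^{\prec^s}(A^s)$). I would prove the two parts in parallel, treating part (i) as the limiting case of part (ii) in which the interval $\langle\underline{x}^s,\overline{x}^s\rangle$ degenerates to a singleton, so that the constrained-dictatorship condition is vacuous and an $(\underline{x}^s,\overline{x}^s)$-FBR reduces to an ordinary FBR.

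The sufficiency (``if'') direction is the lighter half. First I would note that $[\mathbb{D}]^s$ is minimally rich: because $\mathbb{D}$ is minimally rich, every $a^s\in A^s$ is the peak of $[P_i]^s$ for some $P_i\in\mathbb{D}$, so every peak profile in $(A^s)^n$ is attainable. Given this, ballot unanimity ($b_{\emptyset}^s=\min^{\prec^s}(A^s)$ and $b_N^s=\max^{\prec^s}(A^s)$) yields unanimity of the (constrained) FBR, and strategy-proofness is inherited from the full-domain results—Moulin's characterization \citep{M1980} in part (i) and the hybrid-domain result of \citet{CRSSZ2022} in part (ii)—since $[\mathbb{D}]^s$ is a subdomain of the relevant single-peaked, respectively hybrid, domain, and restricting the set of admissible reports cannot create a profitable manipulation of the same min--max rule.

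For the necessity (``only if'') direction, let $f^s$ be an arbitrary strategy-proof marginal rule on $\big[[\mathbb{D}]^s\big]^n$. The first step is to establish the tops-only property, where connectedness of $G_{\approx}^{A^s}$ is the workhorse: whenever $a^s\approx b^s$ there are induced marginal preferences differing only by a transposition of their top two alternatives, and a standard strategy-proofness argument in the spirit of \citet{CSS2013} shows that $f^s$ is insensitive to everything ranked below the peak; connectedness then propagates this conclusion across all of $A^s$. With tops-onlyness in hand I would define the ballots by setting $b_J^s:=f^s(P^s)$ at the profile in which the agents in $J$ peak at $\max^{\prec^s}(A^s)$ and the remaining agents peak at $\min^{\prec^s}(A^s)$, derive monotonicity and ballot unanimity from strategy-proofness and unanimity, and verify the min--max representation by the usual induction that shifts one peak at a time.

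The step I expect to be the main obstacle is extracting the constrained-dictatorship condition on $\langle\underline{x}^s,\overline{x}^s\rangle$ in part (ii), and, more broadly, making the subdomain strategy-proofness do the work of full-domain strategy-proofness. Because $[\mathbb{D}]^s$ may be a proper subdomain, fewer deviations are available to rule out, so the FBR structure is not automatic; the leafless block supplied by condition (ii) is precisely what guarantees enough induced preferences ranking the interior of $\langle\underline{x}^s,\overline{x}^s\rangle$ freely. I would use this freedom to run a Gibbard--Satterthwaite-type argument confined to profiles whose peaks all lie in $\langle\underline{x}^s,\overline{x}^s\rangle$, forcing $f^s$ there to coincide with a single agent's peak (the constrained dictator), while the two flanking lines force ordinary median-voter behaviour; the delicate point is to show that one coherent collection of ballots $\{b_J^s\}$ simultaneously governs the dictatorial middle and the FBR flanks. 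In part (i) this difficulty evaporates, since $\langle\underline{x}^s,\overline{x}^s\rangle$ is a singleton and the argument collapses to the classical line-domain characterization.
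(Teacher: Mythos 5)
Your sufficiency (``if'') half is essentially fine: since every induced marginal preference of $[\mathbb{D}]^s$ is hybrid on $\prec^s$ w.r.t.~$\underline{x}^s$ and $\overline{x}^s$ (single-peaked when $\underline{x}^s=\overline{x}^s$), the min--max formula on $[\mathbb{D}]^s$ is the restriction of a rule that is strategy-proof on the full single-peaked domain \citep{M1980}, respectively on the full hybrid domain \citep{CRSSZ2022}, strategy-proofness survives passage to a subdomain, and minimal richness delivers unanimity. The genuine gaps are all in the necessity (``only if'') half, which is exactly the part the paper describes as lengthy and relegates to the Supplementary Material. The first gap is your tops-only step: it is mis-attributed and, as stated, not an argument. \citet{CSS2013} \emph{assume} the tops-only property (their ``well-behaved'' rules are tops-only by definition) and derive domain restrictions from its existence; they do not show that connectedness of the top-swap graph $G_{\approx}^{A^s}$ forces every strategy-proof rule to be tops-only, and no ``standard'' argument does. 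Deriving tops-onlyness is a substantive, domain-specific result: for the overall domain $\mathbb{D}$ the paper needs Proposition 2 of \citet{CZ2019} together with the Interior\textsuperscript{+} and Exterior\textsuperscript{+} properties (Lemma \ref{lem:tops-onlydomain}), and for marginal domains it invokes the decomposable-domain hypothesis (Lemma \ref{lem:tops-onlymarginaldomain}); neither device is available inside Proposition \ref{prop:FBR}, which must be proved from the hybrid-plus-connected-plus-no-leaf structure of $[\mathbb{D}]^s$ alone.

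The second gap is that both remaining steps of your necessity argument quietly assume the existence of preferences that a \emph{proper} subdomain need not contain, and you never show that conditions (i)--(ii) of Definition \ref{def:AMH} supply them. The ``usual induction that shifts one peak at a time'' establishing the min--max representation is a full-domain argument; on $[\mathbb{D}]^s$ peaks can only be moved along edges of $G_{\approx}^{A^s}$, accompanied by whatever companion rankings strong connectedness happens to provide, so the induction must be rebuilt, not cited. Likewise, the constrained-dictatorship property on $\langle\underline{x}^s,\overline{x}^s\rangle$ requires a dictatorship theorem for the family of restrictions of members of $[\mathbb{D}]^s$ to the middle block -- a domain that is only known to induce a connected, leafless graph; a Gibbard--Satterthwaite-type conclusion for such graph-restricted domains is precisely the kind of result (compare the leaf-based taxonomy of \citet{CZ2023}, which the paper's own necessity argument leans on) that demands a proof rather than an appeal to the classical theorem on the universal domain. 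Finally, you yourself flag ``the delicate point'' that one coherent ballot system $\{b_J^s\}_{J\subseteq N}$ must simultaneously govern the dictatorial middle and the two flanking lines, but flagging it is not resolving it: your proposal establishes neither that the middle-block dictator is the same voter $i$ across all coalition configurations, nor that the ballots produced by your extreme-profile definition satisfy $[i\in J]\Rightarrow[\overline{x}^s\preccurlyeq^s b_J^s]$ and $[i\notin J]\Rightarrow[b_J^s\preccurlyeq^s\underline{x}^s]$. Until those three steps are carried out, the necessity direction remains a plan, not a proof.
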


The proof of the Proposition is lengthy, and hence is relegated to the Supplementary Material.

\subsection{\rm The Theorem}\label{sec:theorem}

In this section, we provide a complete characterization of decomposable domains under a mild richness condition.
We first introduce some necessary notions and notation for establishing the richness condition.
Two preferences $P_i$ and $P_i'$ are \textbf{adjacent}, denoted $P_i \sim P_i'$,
if there exist distinct $a, b \in A$ such that $r_k(P_i) = r_{k+1}(P_i') = a$ and $r_k(P_i') = r_{k+1}(P_i) = b$ for some $1\leq k < |A|$, and $r_{\ell}(P_i) = r_{\ell}(P_i')$ for all $\ell \notin \{k,k+1\}$.
As a natural extension of adjacency, two preferences $P_i$ and $P_i'$ are said \textbf{adjacent\textsuperscript{$+$}}, denoted $P_i \sim^{+} P_i'$, if they are separable preferences, and there exist $s \in M$ and distinct $a^s, b^s \in A^s$ such that
the following two conditions are satisfied:
\begin{itemize}
\item[\rm (i)] for all $z^{-s}\in A^{-s}$,
$(a^s, z^{-s}) = r_k(P_i) = r_{k+1}(P_i')$ and
$(b^s, z^{-s})=r_k(P_i') = r_{k+1}(P_i)$ for some $1\leq k < |A|$, and

\item[\rm (ii)] \makebox{for all $c \in A$, $\big[c^s \notin \{a^s, b^s\}\big] \Rightarrow
\big[c = r_{\ell}(P_i) = r_{\ell}(P_i')\;\textrm{for some} \; 1 \leq \ell \leq |A|\big]$.}
\end{itemize}
Given a domain $\mathbb{D}$,
we construct a graph $G_{\sim/\sim^+}^{\mathbb{D}}\coloneqq \langle \mathbb{D}, \mathcal{E}_{\sim/\sim^+}\rangle$
such that two preferences form an edge if and only if they are adjacent or adjacent\textsuperscript{+}.
Fixing a path $\pi = (P_{i|1}, \dots, P_{i|v})$ in $G_{\sim/\sim^+}^{\mathbb{D}}$,
given $a, b \in A$,
the path $\pi$ has \textbf{$\bm{\{a,b\}}$-restoration} if
the relative ranking of $a$ and $b$ has been flipped for more than once, i.e.,
there exist $1 \leq o< p < q \leq v$ such that
either $a\mathrel{P_{i|o}}b$, $b\mathrel{P_{i|p}}a$ and $a\mathrel{P_{i|q}}b$,
or $b\mathrel{P_{i|o}}a$, $a\mathrel{P_{i|p}}b$ and $b\mathrel{P_{i|q}}a$ hold.
\citet{S2013} restricted attention to the notion of adjacency,
and introduced \textbf{the no-restoration condition}:
given $P_i, P_i' \in \mathbb{D}$ and $a,b \in A$,
there exists a path in the graph $G_{\sim}^{\mathbb{D}}\coloneqq \langle \mathbb{D}, \mathcal{E}_{\sim}\rangle$, where two preferences form an edge if and only if they are adjacent, connecting $P_i$ and $P_i'$ that has no $\{a,b\}$-restoration.\footnote{Proposition 3.2 of \citet{S2013} has shown that the no-restoration condition is necessary for the equivalence between strategy-proofness and the notion of \emph{AM-proofness} which only prevents a voter's manipulation via misreporting preferences adjacent to the sincere one.}
Intuitively speaking, the no-restoration condition can be viewed as an ordinal counterpart of the convex-set assumption imposed on the valuation space in a cardinal model,
which reconciles the difference of any two preferences via a sufficiently short path.
The following two properties adopted from \citet{CZ2019} expand the no-restoration condition to the graph $G_{\sim/\sim^+}^{\mathbb{D}}$ that involves not only the edge of adjacency, but the edge of adjacency\textsuperscript{+} which is customized for separable preferences, and impose some additional requirements on some subgraphs of $G_{\sim/\sim^+}^{\mathbb{D}}$.

The Interior\textsuperscript{+} property concerns with preferences sharing the same peak, and
requires that such two preferences are connected by a path such that all preferences on the path have the same peak.

\begin{definition}
A domain $\mathbb{D}$ satisfies \textbf{the Interior\textsuperscript{$\bm{+}$} property}
if for all distinct $P_i, P_i' \in \mathbb{D}$ such that
$r_1(P_i) = r_1(P_i')\coloneqq x$,
there exists a path $\pi = (P_{i|1}, \dots, P_{i|v})$ in $G_{\sim/\sim^+}^{\mathbb{D}}$ connecting $P_i$ and $P_i'$ such that $r_1(P_{i|k}) = x$ for all $k = 1, \dots, v$.\footnote{This immediately implies that $P_i$, and $P_i'$ are connected by a path in $G_{\sim/\sim^+}^{\mathbb{D}}$ that has no $\{x,a\}$-restoration for any $a \in A\backslash \{x\}$.}
\end{definition}


The Exterior\textsuperscript{+} property concentrates on preferences with distinct peaks.
It imposes not only the no-restoration condition on $G_{\sim/\sim^+}^{\mathbb{D}}$,
but an additional condition on the path connecting any two preferences that have similar peaks.

\begin{definition}
A domain $\mathbb{D}$ satisfies \textbf{the Exterior\textsuperscript{$\bm{+}$} property}
if for all $P_i, P_i' \in \mathbb{D}$ such that $r_1(P_i) \neq r_1(P_i')$,
the following two conditions are satisfied:
\begin{itemize}
\item[\rm (i)] given $a,b \in A$,
there exists a path $\pi=(P_{i|1}, \dots, P_{i|v})$ in $G_{\sim/\sim^+}^{\mathbb{D}}$ connecting $P_i$ and $P_i'$ such that $\pi$ has no $\{a,b\}$-restoration, and

\item[\rm (ii)] (\textbf{the no-detour condition})
when $r_1(P_i)$ and $r_1(P_i')$ are similar, say $r_1(P_i), r_1(P_i') \in (A^s, x^{-s})$ for some $s \in M$ and $x^{-s} \in A^{-s}$,
there exists a path $\pi=(P_{i|1}, \dots, P_{i|w})$ in $G_{\sim/\sim^+}^{\mathbb{D}}$ connecting $P_i$ and $P_i'$ such that $r_1(P_{i|k}) \in (A^s, x^{-s})$ for all $k = 1, \dots, w$.
\end{itemize}
\end{definition}

Henceforth, a domain $\mathbb{D}$ is called a \textbf{rich domain}
if it satisfies minimal richness, diversity\textsuperscript{$+$}, and the Interior\textsuperscript{+} and Exterior\textsuperscript{+} properties.
Clearly, the universal domain $\mathbb{P}$ is a rich domain.
In Appendix \ref{app:anexample},
we provide an example of a  domain, and verify that it is a rich domain.
In the Supplementary Material,
we establish two clarifications to show that
the multidimensional hybrid domain and its intersection with the separable domain are both rich domains.\footnote{Since $\mathbb{D}_{\textrm{S}} \cap \mathbb{D}_{\textrm{MH}}(\prec, \underline{x}, \overline{x}) = \mathbb{D}_{\textrm{S}}$ when $\underline{x}^s = \min^{\prec^s}(A^s)$ and $\overline{x}^s = \max^{\prec^s}(A^s)$ hold for all $s \in M$,
this implies that the separable domain $\mathbb{D}_{\textrm{S}}$ is also a rich domain.}

The main theorem below shows that under the richness condition,
multidimensional hybridness is necessary and sufficient for a domain to be a decomposable domain.

\begin{theorem}\label{thm}
Let $\mathbb{D}$ be a rich domain.
Then, $\mathbb{D}$ is a decomposable domain if and only if it is a multidimensional hybrid domain.
\end{theorem}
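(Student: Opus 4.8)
The plan is to prove the two implications of the biconditional separately. For sufficiency ($\Leftarrow$), suppose $\mathbb{D}$ is a rich multidimensional hybrid domain with thresholds $\underline{x},\overline{x}$; I must verify both directions embedded in Definition~\ref{def:decomposabledomain}. The first task is to show that every strategy-proof rule $f$ on $\mathbb{D}$ is tops-only. Here the richness conditions do the work: for two preferences sharing a peak, the Interior$^+$ property supplies a path in $G_{\sim/\sim^+}^{\mathbb{D}}$ along which the peak is fixed, so iterating strategy-proofness across adjacent/adjacent$^+$ neighbours leaves $f$ unchanged; for two preferences with different peaks, clause~(i) of the Exterior$^+$ property (the no-restoration condition) furnishes a path along which no relevant pair is flipped twice, and a \citet{S2013}-style induction then forces the outcome to depend only on peaks. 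With tops-only in hand, I would deduce decomposition from the no-detour condition (Exterior$^+$(ii)): holding all agents' peaks fixed on components $-s$ and moving one agent's peak only within the slice $(A^s,x^{-s})$, the no-detour path keeps every intermediate peak inside that slice, so strategy-proofness shows the component-$t$ output ($t\neq s$) is unchanged while the component-$s$ output reacts only to the $s$-coordinates of the peaks. This produces marginal rules $f^s$ on $[\mathbb{D}]^s$ that inherit unanimity and strategy-proofness from $f$.

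For the reverse direction inside Definition~\ref{def:decomposabledomain} --- that assembling strategy-proof marginal rules yields a strategy-proof rule --- I would use Proposition~\ref{prop:FBR}, which tells me each $f^s$ is an FBR (when $\underline{x}^s=\overline{x}^s$) or an $(\underline{x}^s,\overline{x}^s)$-FBR (otherwise). Unanimity of $f$ is immediate. For strategy-proofness, set $a=f(P)$ and $b=f(P_i',P_{-i})$ and consider any $s\in M(a,b)$. If $a^s\notin\textrm{Int}\langle\underline{x}^s,\overline{x}^s\rangle$, the uncompromising structure of the (constrained) FBR outside the threshold interval gives $a^s\in\langle r_1([P_i]^s),b^s\rangle$, so (since $a^s\neq b^s$) either $a^s=r_1([P_i]^s)$ or $a^s\in\textrm{Int}\langle r_1([P_i]^s),b^s\rangle$ with $a^s\notin\textrm{Int}\langle\underline{x}^s,\overline{x}^s\rangle$. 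If instead $a^s\in\textrm{Int}\langle\underline{x}^s,\overline{x}^s\rangle$, the constrained-dictatorship property (footnote~\ref{footnote:behavelikeadictatorship}) forces $a^s$ to equal the dictator's peak; so either the deviating agent is the dictator and $a^s=r_1([P_i]^s)$, or a non-dictator deviates and the dictator's interior peak pins the component output, giving $b^s=a^s$ and $s\notin M(a,b)$. In every surviving case the componentwise hypothesis of multidimensional hybridness (footnote~\ref{footnote}) holds, whence $a\mathrel{P_i}b$ and $f$ is strategy-proof.

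For necessity ($\Rightarrow$), suppose $\mathbb{D}$ is a rich decomposable domain. Restricting attention to a single component (with the others held at a common value) turns the decomposability requirement into its one-dimensional form, so the characterization of \citet{CRSSZ2022} identifies each induced marginal domain $[\mathbb{D}]^s$ as a one-dimensional hybrid domain and thereby pins down marginal thresholds $\underline{x}^s,\overline{x}^s$; these assemble into thresholds $\underline{x},\overline{x}\in A$. The connectivity of $G_{\approx}^{A^s}$ and the no-leaf requirement on $G_{\approx}^{\langle\underline{x}^s,\overline{x}^s\rangle}$ demanded by Definition~\ref{def:AMH}(ii) follow because richness (minimal richness and diversity$^+$, propagated through the path conditions) guarantees that enough marginal preferences are induced. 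The remaining and decisive task is Definition~\ref{def:AMH}(i): that every $P_i\in\mathbb{D}$ is multidimensional hybrid w.r.t.\ these thresholds. I would argue contrapositively: if some $P_i$ violates the defining implication on a similar pair $a,b$ with $M(a,b)=\{s\}$, then I can select strategy-proof (constrained-dictatorship) marginal rules whose assembly outputs $a$ at a suitable profile but $b$ after $i$ misreports, exhibiting a profitable manipulation and contradicting the ``$\Leftarrow$'' content of decomposability.

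The hardest part is the sufficiency direction: strategy-proofness does not of itself imply tops-only, so establishing it endogenously through the carefully engineered no-restoration and no-detour path conditions --- and then bootstrapping to the componentwise decomposition --- is the technical core. A subtle secondary obstacle is controlling the interior-threshold case in the assembly argument, where multidimensional hybridness imposes no direct comparison and one must instead exploit the constrained-dictatorship structure of the $(\underline{x}^s,\overline{x}^s)$-FBR to show the offending component either matches the peak or cannot move at all.
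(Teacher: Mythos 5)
Your sufficiency direction is essentially the paper's route and is sound: endogenous tops-onlyness via the Interior\textsuperscript{+}/Exterior\textsuperscript{+} properties (the paper gets this by invoking Proposition 2 of \citet{CZ2019}, which also needs top-separability of the preferences --- automatic here from condition (i) of Definition \ref{def:MH}, a hypothesis you should state rather than attribute to ``richness'' alone), then decomposition via the no-detour condition (Lemmas \ref{lem:path-connectedness}--\ref{lem:decomposition}). For the assembly half your argument genuinely differs from the paper's Lemma \ref{lem:assembling}: the paper extends each marginal FBR to the full hybrid marginal domain $\big[\mathbb{D}_{\textrm{MH}}(\prec,\underline{x},\overline{x})\big]^s$ and derives a contradiction from the existence of a hybrid marginal preference ranking $b^s$ over $a^s$, whereas you work directly with the min-max structure --- uncompromisingness outside the threshold interval, and the observation that an interior outcome must equal the constrained dictator's peak so that a non-dictator's deviation cannot move it. Your case analysis is correct and arguably more self-contained, at the cost of having to prove (or cite) uncompromisingness of FBRs.

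The genuine gaps are both in the necessity direction. First, the step ``restricting attention to a single component turns decomposability into its one-dimensional form, so the characterization of \citet{CRSSZ2022} identifies $[\mathbb{D}]^s$ as hybrid'' does not go through as stated: \citet{CRSSZ2022} characterizes strategy-proof rules \emph{on} a given hybrid domain (that is Proposition \ref{prop:FBR}), not which domains arise from a decomposability requirement. What the paper must establish before any unidimensional classification (Corollary 2 of \citet{CZ2023}, not CRSSZ2022) can be applied is (a) that every strategy-proof marginal rule on $[\mathbb{D}]^s$ is tops-only --- Lemma \ref{lem:tops-onlymarginaldomain}, whose proof uses the decomposable-domain hypothesis itself, by assembling the candidate marginal rule with dictatorial components and appealing to Lemmas \ref{lem:tops-onlydomain} and \ref{lem:step2} --- together with (b) connectedness of $G_{\approx}^{A^s}$ (Lemma \ref{lem:connectedgraph}) and (c) complete reversals in $[\mathbb{D}]^s$ from diversity\textsuperscript{+}. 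Your sketch supplies none of (a), which is precisely where decomposability does its work. Second, the no-leaf half of Definition \ref{def:AMH}(ii) emphatically does not ``follow because richness guarantees that enough marginal preferences are induced.'' The paper's proof requires two substantive ingredients: Lemma \ref{lem:implication}, which converts the Exterior\textsuperscript{+} no-restoration condition into the statement that any element lying on \emph{all} $G_{\approx}^{A^s}$-paths between $a^s$ and $c^s$ must be ranked above $c^s$ by every induced marginal preference peaked at $a^s$; and the minimality of the identified thresholds (the paper's Condition (b), an output of the \citet{CZ2023} classification, not of richness), which together with the complete reversals $[\underline{P}_i]^s,[\overline{P}_i]^s$ produces a contradiction whenever $G_{\approx}^{\langle\underline{x}^s,\overline{x}^s\rangle}$ has a leaf (Lemma \ref{lem:noleaf}). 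Without these two steps your necessity argument is incomplete --- and, relatedly, your closing assessment inverts the difficulty: the necessity direction, not sufficiency, is the heavy part of the theorem. (Your contrapositive plan for Definition \ref{def:AMH}(i) is fine in spirit; it matches the paper's Lemma \ref{lem:MH}, which implements it with explicit three-voter median and constrained-FBR assemblies.)
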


The proof of the Theorem is contained in Appendix \ref{app:theorem}.

By combining Proposition \ref{prop:FBR} and Theorem \ref{thm},
we obtain the Corollary below that provides a full characterization of strategy-proof rules on a rich multidimensional hybrid domain.

\begin{corollary}\label{cor:characterization}
Let $\mathbb{D}$ be a rich multidimensional hybrid domain on $\prec$ w.r.t. $\underline{x}$ and $\overline{x}$.
An SCF $f: \mathbb{D}^n \rightarrow A$ is a strategy-proof rule if and only if
$f$ is decomposable,
$f^s$ is an FBR for each $s \in M\backslash M(\underline{x}, \overline{x})$, and
$f^t$ is an $(\underline{x}^t, \overline{x}^t)$-FBR for each $t \in M(\underline{x}, \overline{x})$.
\end{corollary}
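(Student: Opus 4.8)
The plan is to obtain the Corollary as the composition of the two results already in hand, Theorem \ref{thm} and Proposition \ref{prop:FBR}, the only nontrivial bookkeeping being to partition the components according to whether the marginal thresholds coincide and to check that the relevant marginal domains agree across the two statements.

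First I would observe that, since $\mathbb{D}$ is by hypothesis simultaneously a rich domain and a multidimensional hybrid domain on $\prec$ w.r.t.\ $\underline{x}$ and $\overline{x}$, Theorem \ref{thm} yields at once that $\mathbb{D}$ is a decomposable domain. Applying Definition \ref{def:decomposabledomain}, an SCF $f: \mathbb{D}^n \rightarrow A$ is then a strategy-proof rule if and only if $f$ is decomposable and each of the associated marginal SCFs $f^1, \dots, f^m$, where $f^s: \big[[\mathbb{D}]^s\big]^n \rightarrow A^s$, is a strategy-proof marginal rule. This reduces the Corollary to identifying, component by component, exactly which marginal rules are strategy-proof.

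Second I would refine the clause ``strategy-proof marginal rule'' using Proposition \ref{prop:FBR}, splitting the index set as $M = \big(M\setminus M(\underline{x}, \overline{x})\big) \cup M(\underline{x}, \overline{x})$, where by definition $s \in M(\underline{x}, \overline{x})$ precisely when $\underline{x}^s \neq \overline{x}^s$. For each $s \in M\setminus M(\underline{x}, \overline{x})$ we have $\underline{x}^s = \overline{x}^s$, so part (i) of Proposition \ref{prop:FBR} applies and $f^s$ is a strategy-proof marginal rule if and only if it is an FBR; for each $t \in M(\underline{x}, \overline{x})$ we have $\underline{x}^t \neq \overline{x}^t$, so part (ii) applies and $f^t$ is a strategy-proof marginal rule if and only if it is an $(\underline{x}^t, \overline{x}^t)$-FBR. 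Substituting these two component-wise characterizations into the equivalence from the first step yields the stated biconditional.

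There is no genuine obstacle here; the single point deserving a moment's care is that the domains over which the marginal SCFs are defined match between the two cited results. Both Definition \ref{def:decomposabledomain} and Proposition \ref{prop:FBR} are phrased in terms of the induced marginal domain $[\mathbb{D}]^s$, so the substitution is immediate and no reconciliation of domains is required, leaving only the clean chain of equivalences just described.
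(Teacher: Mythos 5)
Your proposal is correct and matches the paper's own argument, which states the Corollary precisely as the combination of Theorem \ref{thm} (giving that the rich multidimensional hybrid domain $\mathbb{D}$ is decomposable, hence strategy-proofness of $f$ is equivalent to decomposability plus strategy-proofness of the marginal rules) with the two cases of Proposition \ref{prop:FBR} applied according to whether $s \in M(\underline{x}, \overline{x})$. The domain-matching point you flag is indeed the only detail to check, and it holds since both results are stated on the induced marginal domains $[\mathbb{D}]^s$.
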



\section{Final remark and literature review}\label{sec:conclusion}

In a class of rich preference domains, multidimensional hybrid domains are shown to be the unique decomposable domains, which enables us to provide a characterization of strategy-proof rules on these domains.

\citet{BJ1983} initiated the study of strategy-proof SCFs in a multidimensional setting where preferences over the real space $\mathbb{R}^m$ are assumed to be separable and star-shaped (which can be viewed as a variant of single-peakedness over $\mathbb{R}^m$).
They characterized that decomposability is necessary and sufficient for strategy-proofness of a rule, and further justified the salience of separability by showing that their decomposability result degenerates to an impossibility result of the Gibbard-Satterthwaite Theorem as soon as separability is slightly tampered with.
Followed by \citet{BSZ1991}, \citet{BGS1993} and \citet{LW1999}, the preference restrictions of inclusion/exclusion separability, multidimensional single-peakedness and top-separability have been introduced respectively. The characterizations of strategy-proof rules on these restricted domains indicate that all these domains are decomposable domains.\footnote{More discussion on strategy-proof SCFs in multidimensional settings can be found in the two comprehensive survey papers of \citet{S1995} and \citet{B2011}.}
Our class of multidimensional hybrid domains contains these important domains, and our characterization of strategy-proof rules covers their characterization results as special cases.
It is worth mentioning that \citet{BGS1993} and \citet{LW1999} were only able to derive decomposability for \emph{voting schemes}\footnote{A voting scheme is a function $g: \underset{n}{\underbrace{A\times \dots \times A}} \rightarrow A$ that associates to each profile of alternatives, an alternative.
A tops-only SCF degenerates to a voting scheme.} since their domains contain non-separable preferences and they did not derive marginal preferences as we do here. Therefore, their characterizations of strategy-proof rules require a combination of the tops-only property endogenously established on rules, with the decomposition of the corresponding voting schemes. Our way of deriving marginal preferences provides a unified approach for analyzing decomposability of strategy-proof rules in a multidimensional model involving both separable and non-separable preferences.
\citet{LS1999} restricted attention to separable preferences and introduced an elegant richness condition (loosely speaking, sufficiently many \emph{lexicographically separable} preferences are included) on the preference domain that ensures its decomposability.
Our richness condition is different, and mainly related to the no-restoration condition widely explored in the recent literature investigating the equivalence between strategy-proofness and local strategy-proofness  \citep[e.g.,][]{S2013,KRSYZ2021a}.
More importantly, our theorem not only shows that multidimensional hybridness under our richness condition is sufficient for the domain to be a decomposable domain, but also characterizes its necessity.
Recently, \citet{GMS2020} study a multidimensional model where the preference over $\mathbb{R}^m$ is measured by a norm towards the preference peak;
their main result shows that \emph{the marginal median mechanism} (i.e., a combination of median marginal rules at all components) is strategy-proof if and only if the norm satisfies \emph{orthant monotonicity} - a condition that implies the star-shape preference restriction of \citet{BJ1983} and generalizes the requirement of separability.
\citet{CZ2019} also investigate the domain implication of strategy-proof rules in a multidimensional setting, and have characterized that under a mild richness condition, multidimensional single-peakedness is necessary and sufficient for the existence of an anonymous and strategy-proof rule.
Their investigation however cannot be used to detect decomposability of all strategy-proof rules.
Our paper does not concentrate on specific SCFs like the marginal median mechanism, or exogenously require the SCF to be anonymous, but focuses on an environment that ensures decomposability of all strategy-proof rules.

\setlength{\bibsep}{0ex}

\newpage

\appendix

\section*{Appendix}

\section{An example of a rich domain}\label{app:anexample}

In this section, we specify an example of a multidimensional domain, and mainly verify the Interior\textsuperscript{+} and Exterior\textsuperscript{+} properties on the domain.

\begin{figure}[t]
\begin{center}
\includegraphics[width=0.35\textwidth]{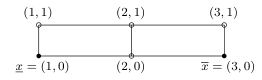}\\[-1em]
\caption{The Cartesian product of linear orders $\prec= \prec^1 \times \prec^2$}\label{fig:grid2}
\end{center}
\end{figure}

\begin{table}[t]

\vspace{-1em}
\hspace{-4em}
{\scriptsize
\begin{tabular}{cccccccccccccccc}
  $P_1$   & $P_2$   & $P_3$   & $P_4$   & $P_5$   & $P_6$   & $P_7$   & $P_8$   & $P_9$   & $P_{10}$& $P_{11}$& $P_{12}$& $P_{13}$& $P_{14}$\\
  $(1,0)$ & $(1,0)$ & $(1,0)$ & $(1,0)$ & $(1,0)$ & $(1,0)$ & $(2,0)$ & $(2,0)$ & $(2,0)$ & $(2,0)$ & $(3,0)$ & $(3,0)$ & $(3,0)$ & $(3,0)$ \\[-0.3em]
  $(2,0)$ & $(3,0)$ & $(3,0)$ & $(3,0)$ & $(1,1)$ & $(1,1)$ & $(1,0)$ & $(1,0)$ & $(2,1)$ & $(2,1)$ & $(1,0)$ & $(1,0)$ & $(3,1)$ & $(3,1)$ \\[-0.3em]
  $(3,0)$ & $(2,0)$ & $(2,0)$ & $(1,1)$ & $(3,0)$ & $(3,0)$ & $(3,0)$ & $(2,1)$ & $(1,0)$ & $(1,0)$ & $(2,0)$ & $(3,1)$ & $(1,0)$ & $(1,0)$ \\[-0.3em]
  $(1,1)$ & $(1,1)$ & $(1,1)$ & $(2,0)$ & $(2,0)$ & $(3,1)$ & $(2,1)$ & $(3,0)$ & $(3,0)$ & $(1,1)$ & $(3,1)$ & $(2,0)$ & $(2,0)$ & $(1,1)$ \\[-0.3em]
  $(2,1)$ & $(2,1)$ & $(3,1)$ & $(3,1)$ & $(3,1)$ & $(2,0)$ & $(1,1)$ & $(1,1)$ & $(1,1)$ & $(3,0)$ & $(1,1)$ & $(1,1)$ & $(1,1)$ & $(2,0)$ \\[-0.3em]
  $(3,1)$ & $(3,1)$ & $(2,1)$ & $(2,1)$ & $(2,1)$ & $(2,1)$ & $(3,1)$ & $(3,1)$ & $(3,1)$ & $(3,1)$ & $(2,1)$ & $(2,1)$ & $(2,1)$ & $(2,1)$ \\[-0.3em]
  &&&&&&&&&&&&&\\[-1em]
  $P_{15}$& $P_{16}$& $P_{17}$& $P_{18}$& $P_{19}$& $P_{20}$& $P_{21}$& $P_{22}$& $P_{23}$& $P_{24}$& $P_{25}$& $P_{26}$& $P_{27}$& $P_{28}$& $P_{29}$& $P_{30}$ \\
  $(1,1)$ & $(1,1)$ & $(1,1)$ & $(1,1)$ & $(2,1)$ & $(2,1)$ & $(2,1)$ & $(2,1)$ & $(2,1)$ & $(2,1)$ & $(3,1)$ & $(3,1)$ & $(3,1)$ & $(3,1)$ & $(3,1)$ & $(3,1)$  \\[-0.3em]
  $(1,0)$ & $(1,0)$ & $(3,1)$ & $(3,1)$ & $(2,0)$ & $(2,0)$ & $(2,0)$ & $(2,0)$ & $(3,1)$ & $(3,1)$ & $(3,0)$ & $(3,0)$ & $(1,1)$ & $(1,1)$ & $(1,1)$ & $(2,1)$  \\[-0.3em]
  $(3,1)$ & $(3,1)$ & $(1,0)$ & $(2,1)$ & $(1,1)$ & $(1,1)$ & $(3,1)$ & $(3,1)$ & $(2,0)$ & $(1,1)$ & $(1,1)$ & $(1,1)$ & $(3,0)$ & $(2,1)$ & $(2,1)$ & $(1,1)$  \\[-0.3em]
  $(3,0)$ & $(2,1)$ & $(2,1)$ & $(1,0)$ & $(1,0)$ & $(3,1)$ & $(1,1)$ & $(1,1)$ & $(1,1)$ & $(2,0)$ & $(1,0)$ & $(2,1)$ & $(2,1)$ & $(3,0)$ & $(3,0)$ & $(3,0)$  \\[-0.3em]
  $(2,1)$ & $(3,0)$ & $(3,0)$ & $(3,0)$ & $(3,1)$ & $(1,0)$ & $(1,0)$ & $(3,0)$ & $(3,0)$ & $(3,0)$ & $(2,1)$ & $(1,0)$ & $(1,0)$ & $(1,0)$ & $(2,0)$ & $(2,0)$  \\[-0.3em]
  $(2,0)$ & $(2,0)$ & $(2,0)$ & $(2,0)$ & $(3,0)$ & $(3,0)$ & $(3,0)$ & $(1,0)$ & $(1,0)$ & $(1,0)$ & $(2,0)$ & $(2,0)$ & $(2,0)$ & $(2,0)$ & $(1,0)$ & $(1,0)$
\end{tabular}
\caption{Domain $\mathbb{D}$}\label{tab:AMH}
}
\end{table}

Let $A = A^1 \times A^2$ where $A^1 = \{1,2,3\}$ and $A^2 = \{0,1\}$ are respectively endowed with
the natural linear orders $\prec^1$ and $\prec^2$.
The grid $\prec = \prec^1 \times \prec^2$ is specified in Figure \ref{fig:grid2}.
We fix two thresholds $\underline{x} = (1,0)$ and $\overline{x} = (3,0)$.
A domain $\mathbb{D}$ of 30 multidimensional hybrid preferences on $\prec$ w.r.t.~$\underline{x}$ and $\overline{x}$ is specified in Table \ref{tab:AMH}. It is evident that $\mathbb{D}$ is minimally rich and satisfies diversity\textsuperscript{+} (see $P_1$ and $P_{30}$).
Note that $\mathbb{D}$ contains both separable preferences (e.g., $P_1$) and non-separable preferences (e.g., $P_2$).
Indeed, $\mathbb{D}$ does not contain all multidimensional hybrid preferences on $\prec$ w.r.t.~$\underline{x}$ and $\overline{x}$, e.g.,
$P_i =  (1,1)_{\rightharpoonup}(2,1)_{\rightharpoonup}(3,1)_{\rightharpoonup}(1,0)_{\rightharpoonup}(2,0)_{\rightharpoonup}(3,0)$
is multidimensional hybrid on $\prec$ w.r.t.~$\underline{x}$ and $\overline{x}$,
but is not included in $\mathbb{D}$.
It is true that $\mathbb{D}$ is a multidimensional hybrid domain:
(i) all preferences of $\mathbb{D}$ are multidimensional hybrid on $\prec$ w.r.t.~$\underline{x}$ and $\overline{x}$, and
(ii) for each $s\in M$, the induced marginal domain $[\mathbb{D}]^s=\mathbb{P}^s$.

\begin{figure}[t]

\hspace{-5.5em}
  \includegraphics[width=1.3\textwidth]{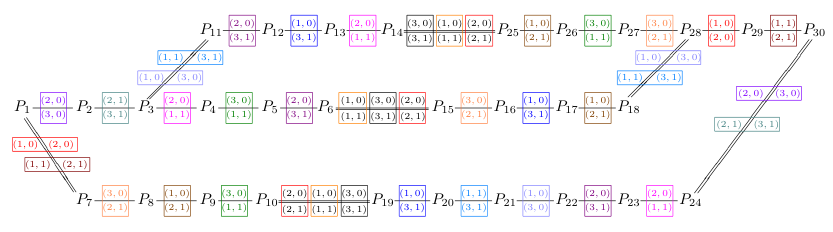}\\

\vspace{-2em}
\caption{The graph $G_{\sim/\sim^+}^{\mathbb{D}}$\protect\footnotemark}\label{fig:adjacency+}

\end{figure}

\footnotetext{In the graph $G_{\sim/\sim^+}^{\mathbb{D}}$,
for instance, the symbol ``$P_1 \frac{\overline{(2,0)}}{~~\underline{(3,0)}~~} P_2 \hspace{-1.25cm}\rule[-2.2mm]{0.11mm}{0.62cm} \hspace{0.57cm}\rule[-2.2mm]{0.11mm}{0.62cm}$~~~~~\;''
represents that\vspace{0.4em}
$P_1 \sim P_2$, $(2,0)\mathrel{P_1}(3,0)$ and $(3,0)\mathrel{P_2}(2,0)$,
while the symbol ``$P_6 \frac{\underline{~~\overline{(1,0)}~\overline{(3,0)}~\overline{(2,0)}~~}}{~~\underline{(1,1)}~\underline{(3,1)}~\underline{(2,1)}~~} P_{15} \hspace{-2.785cm}\rule[-2.2mm]{0.11mm}{0.67cm} \hspace{0.57cm}\rule[-2.2mm]{0.11mm}{0.67cm}
\hspace{0.11cm}\rule[-2.2mm]{0.11mm}{0.67cm} \hspace{0.56cm}\rule[-2.2mm]{0.11mm}{0.67cm}
\hspace{0.12cm}\rule[-2.2mm]{0.11mm}{0.67cm} \hspace{0.565cm}\rule[-2.2mm]{0.11mm}{0.67cm}
$~~~~~~\;''\vspace{0.4em}
represents that $P_6 \sim^+ P_{15}$,
$(1,0)\mathrel{P_6}(1,1)$, $(1,1)\mathrel{P_{15}}(1,0)$,
$(3,0)\mathrel{P_6}(3,1)$, $(3,1)\mathrel{P_{15}}(3,0)$,
$(2,0)\mathrel{P_6}(2,1)$ and $(2,1)\mathrel{P_{15}}(2,0)$.
}

To verify the Interior\textsuperscript{+} and Exterior\textsuperscript{+} properties, we specify
the graph $G_{\sim/\sim^+}^{\mathbb{D}}$ in Figure \ref{fig:adjacency+}.
The following six paths in $G_{\sim/\sim^+}^{\mathbb{D}}$
indicate the Interior\textsuperscript{+} property:
\begin{itemize}
\item $(P_1, P_2,P_3,P_4,P_5,P_6)$, where each preference has the peak $(1,0)$,
\item $(P_7, P_8,P_9,P_{10})$, where each preference has the peak $(2,0)$,
\item $(P_{11}, P_{12},P_{13},P_{14})$, where each preference has the peak $(3,0)$,
\item $(P_{15}, P_{16},P_{17},P_{18})$, where each preference has the peak $(1,1)$,
\item $(P_{19}, P_{20},P_{21},P_{22},P_{23},P_{24})$, where each preference has the peak $(2,1)$, and
\item $(P_{25}, P_{26},P_{27},P_{28}, P_{29}, P_{30})$, where each preference has the peak $(3,1)$.
\end{itemize}

Next, we turn to the no-detour condition in the Exterior\textsuperscript{+} property.
Indeed, it suffices to show that all preferences that have similar peaks form a connected subgraph in $G_{\sim/\sim^+}^{\mathbb{D}}$.
First, note that the edge between $P_3$ and $P_{11}$ and the edge between $P_1$ and $P_7$ combine
the three paths $(P_1, P_2,P_3,P_4,P_5,P_6)$,
$(P_7, P_8,P_9,P_{10})$ and $(P_{11}, P_{12},P_{13},P_{14})$,
and all these preferences have peaks in $(A^1, 0)=\{(1,0),(2,0), (3,0)\}$.
Similarly, the edge between $P_{18}$ and $P_{28}$ and the edge between $P_{24}$ and $P_{30}$ combine the three
paths $(P_{15}, P_{16},P_{17},P_{18})$,
$(P_{19}, P_{20},P_{21},P_{22},P_{23},P_{24})$ and $(P_{25}, P_{26},P_{27},P_{28}, P_{29}, P_{30})$,
and all these preferences have peaks in $(A^1, 1)=\{(1,1),(2,1), (3,1)\}$.
Next, note that
(i) all preferences in the path $(P_1, P_2,P_3,P_4,P_5,P_6, P_{15}, P_{16},P_{17},P_{18})$ have peaks in $(1, A^2) = \{(1,0),(1,1)\}$,
(ii) all preferences in the path $(P_7, P_8,P_9,P_{10}, P_{19}, P_{20},P_{21},P_{22},P_{23},P_{24})$
have peaks in $(2, A^2) = \{(2,0),(2,1)\}$, and
(iii) all preferences in the path
$(P_{11}, P_{12},P_{13},P_{14}, P_{25}, P_{26},P_{27},P_{28}, P_{29}, P_{30})$
have peaks in $(3, A^2)=\{(3,0),(3,1)\}$.
Hence, the no-detour condition is satisfied.

Last, we verify the first condition of the Exterior\textsuperscript{+} property.
We first make an important observation on $G_{\sim/\sim^+}^{\mathbb{D}}$:
each pair of distinct preferences is connected by three distinct paths, and
whenever a restoration on a path connecting two preferences is spotted,
one can immediately identify another path between these two preferences that has no such a restoration.
For instance, the clockwise path
$(P_1, P_2,P_3,P_4,P_5,P_6,P_{15}, P_{16},P_{17},P_{18},P_{28}, P_{29}, P_{30},P_{24},P_{23})$
has $\{(2,0), (1,1)\}$-restoration, i.e., $(2,0)\mathrel{P_3}(1,1)$, $(1,1)\mathrel{P_4}(2,0)$ and $(2,0)\mathrel{P_{23}}(1,1)$,
while $(2,0)$ and $(1,1)$ have not been locally switched on the counter-clockwise path $(P_1, P_7,P_8,P_9,P_{10},P_{19}, P_{20},P_{21},P_{22},P_{23})$.
Therefore, we can conclude that
given arbitrary $P_i, P_i' \in \mathbb{D}$ such that $r_1(P_i) \neq r_1(P_i')$ and distinct $a,b\in A$,
there exists a path in $G_{\sim/\sim^+}^{\mathbb{D}}$ connecting $P_i$ and $P_i'$ that has no $\{a,b\}$-restoration.
In conclusion, $\mathbb{D}$ is a rich domain.

\section{Proof of Theorem \ref{thm}}\label{app:theorem}

We first provide a sketch of the proof, which consists of three parts below.

\noindent
\textbf{Part 1.}
We explore the rich domain $\mathbb{D}$, and establish three results that will be applied for the proof of both the sufficiency and necessity parts of the Theorem.
First, we adopt Proposition 2 of \citet{CZ2019} to show that every strategy-proof rule on $\mathbb{D}$ satisfies the tops-only property (see Lemma \ref{lem:tops-onlydomain}).
Second, for each $s \in M$ and $x^{-s} \in A^{-s}$,
we induce a connected graph over the alternatives of $(A^s, x^{-s})$, as an implication the no-detour condition of the Exterior\textsuperscript{+} property (see Lemma \ref{lem:path-connectedness}).
Last, referring to these induced graphs,
we partially characterize strategy-proof rules on $\mathbb{D}$ (see Lemmas \ref{lem:step1} and \ref{lem:step2}).

\noindent
\textbf{Part 2.}
We prove the sufficiency part of the Theorem: a rich multidimensional hybrid domain $\mathbb{D}$ is a decomposable domain.
First,
by Lemmas \ref{lem:tops-onlydomain}, \ref{lem:path-connectedness} and \ref{lem:step2}
established in \textbf{Part 1},
we show that every strategy-proof rule on $\mathbb{D}$ is decomposable, and all marginal SCFs are strategy-proof marginal rules (see Lemma \ref{lem:decomposition}).
Conversely,
we show that an SCF on $\mathbb{D}$ assembled by strategy-proof marginal rules,
which all are indeed FBRs by the necessity part of Proposition \ref{prop:FBR}, is a strategy-proof rule (see Lemma \ref{lem:assembling}).

\noindent
\textbf{Part 3.}~
We verify the necessity part of the Theorem:
if a rich domain $\mathbb{D}$ is a decomposable domain, it is a multidimensional hybrid domain.
The verification consists of three steps.
In the first step, for each $s \in M$,
we first, by Lemmas \ref{lem:tops-onlydomain} and \ref{lem:step2} and the decomposable-domain hypothesis,
show that every strategy-proof marginal rule on $[\mathbb{D}]^s$ satisfies the tops-only property (see Lemma \ref{lem:tops-onlymarginaldomain}), and
next adopt Lemma \ref{lem:path-connectedness} to show that $G_{\approx}^{A^s}$ is a connected graph (see Lemma \ref{lem:connectedgraph}).
This immediately allows us to reveal by Corollary 2 of \citet{CZ2023} that all induced marginal preferences of $[\mathbb{D}]^s$ are hybrid on $\prec^s$ w.r.t.~some marginal thresholds $\underline{x}^s$ and $\overline{x}^s$.
In the second step,
we by the sufficiency part of Proposition \ref{prop:FBR}, fix several 3-voter strategy-proof FBRs on the induced marginal domains, and then by applying the decomposable-domain hypothesis, assemble them to construct two strategy-proof rules on the domain $\mathbb{D}$.
We then show that as an implication of strategy-proofness these two rules,
all preferences of $\mathbb{D}$ are multidimensional hybrid on $\prec$ w.r.t.~the thresholds $\underline{x}=(\underline{x}^1, \dots, \underline{x}^m)$ and $\overline{x} =(\overline{x}^1, \dots, \overline{x}^m)$ (see Lemma \ref{lem:MH}).
This hence meets condition (i) of Definition \ref{def:AMH}.
In the last step, we establish two lemmas to meet condition (ii) of Definition \ref{def:AMH}.
We first identify a condition implied by the first condition of the Exterior\textsuperscript{+} property
in each induced marginal domain (see Lemma \ref{lem:implication}).
Then, using this condition,
we show that when the two marginal thresholds $\underline{x}^s$ and $\overline{x}^s$ identified in the first step are distinct,
the subgraph $G_{\approx}^{\langle \underline{x}^s,\,\overline{x}^s\rangle}$ has no leaf (see Lemma \ref{lem:noleaf}).
This hence concludes the whole proof.

\medskip

Now, we start the proof of \textbf{Part 1}.
Let $\mathbb{D}$ be a rich domain investigated in both the sufficiency and necessity parts of the Theorem.

\begin{lemma}\label{lem:tops-onlydomain}
For all $n \geq 2$, every strategy-proof rule $f: \mathbb{D}^n \rightarrow A$ satisfies the tops-only property.
\end{lemma}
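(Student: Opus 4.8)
The plan is to reduce the statement to a single-agent, same-peak invariance claim and then settle that claim with the option-set machinery, verifying along the way that the richness package built into a rich domain is exactly what makes the argument go through (this package is adopted from \cite{CZ2019}, so the cleanest execution is ultimately to invoke their Proposition 2). Tops-only asks that $f(P)=f(P')$ whenever $r_1(P_i)=r_1(P_i')$ for every $i$. Changing agents one at a time along the profile $(P_1,\dots,P_n)\to(P_1',P_2,\dots,P_n)\to\cdots\to P'$, it suffices to prove the single-agent statement: for every $i$, every $P_{-i}\in\mathbb{D}^{n-1}$, and all $P_i,P_i'\in\mathbb{D}$ with $r_1(P_i)=r_1(P_i')=:x$, one has $f(P_i,P_{-i})=f(P_i',P_{-i})$.

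First I would pass to the option set $O\coloneqq o_i(P_{-i})=\{f(\tilde P_i,P_{-i}):\tilde P_i\in\mathbb{D}\}$, which depends only on $P_{-i}$. A standard consequence of strategy-proofness is that $f(\tilde P_i,P_{-i})=\max^{\tilde P_i}(O)$ for every $\tilde P_i\in\mathbb{D}$, so the single-agent statement becomes the purely order-theoretic assertion $\max^{P_i}(O)=\max^{P_i'}(O)$ for same-peak $P_i,P_i'$. The idea is to connect $P_i$ and $P_i'$ by a path along which this maximiser cannot move. By the Interior\textsuperscript{+} property there is a path $\pi=(P_{i|1},\dots,P_{i|v})$ in $G_{\sim/\sim^+}^{\mathbb{D}}$ with $r_1(P_{i|k})=x$ throughout, so it is enough to prove invariance of the $O$-maximiser across one edge. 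An edge is adjacent or adjacent\textsuperscript{+}; in either case the two incident preferences differ only by transposing pairs of alternatives, and since the peak is preserved, $x$ is never among the transposed alternatives. Hence the $O$-maximiser can change only if the current maximiser is transposed with the alternative immediately beneath it in $O$ — that is, only through a local ``flip'' of the two top $O$-elements.

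The main obstacle is excluding this flip. When $x\in O$ we have $\max^{(\cdot)}(O)=x$ identically and there is nothing to do, so the difficulty is confined to option sets omitting the peak. A single transposition is fully compatible with strategy-proofness, so no purely local argument can succeed; one must exploit the global geometry of $\mathbb{D}$. Heuristically, the no-restoration clause of the Exterior\textsuperscript{+} property prevents the relative ranking of any fixed pair from oscillating along the connecting paths, which should in turn forbid the $O$-maximiser from flipping and later flipping back, while diversity\textsuperscript{+} supplies the reversed pair of separable preferences needed to anchor the endpoints of these comparisons. Turning this heuristic into a proof is the substantive step, and it is precisely the content of the tops-only argument of \cite{CZ2019}.

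Because the defining features of a rich domain — minimal richness, diversity\textsuperscript{+}, and the Interior\textsuperscript{+} and Exterior\textsuperscript{+} properties — are exactly the hypotheses under which \cite{CZ2019} establish their tops-only result, the cleanest route is to check that $\mathbb{D}$ satisfies those hypotheses and then apply Proposition 2 of \cite{CZ2019} verbatim, which yields the tops-only property for every strategy-proof rule $f:\mathbb{D}^n\to A$ and all $n\geq 2$. I expect essentially all the work to lie in the flip-exclusion argument sketched above (equivalently, in matching our conditions to theirs); the reduction across agents and the option-set reformulation are routine.
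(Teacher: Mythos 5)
There is a genuine gap, and it lies exactly where you decided the work was finished. Proposition 2 of \citet{CZ2019}, as invoked by this paper, applies to a domain of \emph{top-separable} preferences satisfying the Interior\textsuperscript{+} and Exterior\textsuperscript{+} properties. Top-separability is \emph{not} part of the richness package: a rich domain is only required to satisfy minimal richness, diversity\textsuperscript{+}, Interior\textsuperscript{+} and Exterior\textsuperscript{+}. So your claim that the defining features of a rich domain ``are exactly the hypotheses'' of the cited result is false, and the plan to ``apply Proposition 2 of \citet{CZ2019} verbatim'' after checking richness does not go through. Indeed, your argument never uses the standing hypothesis under which this lemma is actually proved --- that $\mathbb{D}$ is, in addition to rich, either a multidimensional hybrid domain (sufficiency part of the Theorem) or a decomposable domain (necessity part). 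Without one of these hypotheses there is no route in the paper to top-separability, and hence no access to the CZ2019 proposition.

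Establishing top-separability is precisely the substantive content of the paper's proof. In the sufficiency part it is immediate: condition (i) of the definition of a multidimensional hybrid preference is top-separability. In the necessity part it is not immediate at all: one fixes $P_i$ with peak $x$ and similar $a,b$ with $M(a,b)=\{s\}$ and $a^s = x^s$, builds a two-voter SCF by assembling dictatorial marginal rules with \emph{different} dictators across components (voter $1$ dictates on component $s$, voter $2$ on every other component), invokes the ``$\Leftarrow$'' direction of the decomposable-domain definition to conclude that this assembled SCF is a strategy-proof rule, and then reads off $a \mathrel{P_i} b$ from strategy-proofness of that rule. Nothing resembling this step --- nor any substitute for it --- appears in your proposal; your option-set/flip-exclusion discussion is a heuristic about the internals of the CZ2019 argument and cannot compensate for feeding that argument a domain that may fail its top-separability hypothesis.
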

\begin{proof}
To show the Lemma, we adopt Proposition 2 of \citet{CZ2019},
which says that on a domain of top-separable preferences
satisfying the Interior\textsuperscript{+} and Exterior\textsuperscript{+} properties,
every strategy-proof rule satisfies the tops-only property.
Hence, by the richness of $\mathbb{D}$, to complete the verification, it suffices to show that all preferences are top-separable.

First, let $\mathbb{D}$ be the domain investigated in the sufficiency part of the Theorem.
Thus, $\mathbb{D}$ is a multidimensional hybrid domain on $\prec$ w.r.t.~some thresholds $\underline{x}$ and $\overline{x}$.
Immediately, by condition (i) of Definition \ref{def:MH}, we know that all preferences of $\mathbb{D}$ are top-separable, as required.

Next, let $\mathbb{D}$ be a rich domain investigated in the necessity part of the Theorem.
Thus, $\mathbb{D}$ is a decomposable domain.
Given arbitrary $P_i \in \mathbb{D}$, say $r_1(P_i) = x$, and similar $a,b \in A$, say $M(a,b) = \{s\}$, let $a^s = x^s$. We show $a\mathrel{P_i}b$ via strategy-proofness of some constructed SCF.
Let $N = \{1,2\}$.
At the component $s$, we construct a dictatorial marginal SCF $f^s: \big[[\mathbb{D}]^s\big]^2 \rightarrow A^s$ where voter $1$ is the dictator
(i.e., $f^s([P_1]^s, [P_2]^s) = r_1([P_1]^s)$ for all $[P_1]^s, [P_2]^s \in [\mathbb{D}]^s$),
while at each $t \in M\backslash \{s\}$, we construct a dictatorial marginal SCF $f^t: \big[[\mathbb{D}]^t\big]^2 \rightarrow A^t$ where voter $2$ is the dictator.
Clearly, every marginal SCF here is a strategy-proof marginal rule.
Then, by the decomposable-domain hypothesis, we assemble a strategy-proof rule $f: \mathbb{D}^2 \rightarrow A$ such that
$f(P_1, P_2) = \big(f^1([P_1]^1, [P_2]^1), \dots, f^m([P_1]^m, [P_2]^m)\big)$ for all $P \in \mathbb{D}^n$.
Now, given $P_1,P_1',P_2 \in \mathbb{D}$ such that $P_1 = P_i$, $P_2 \in \mathbb{D}^b$ and $P_1'=P_2$,
we have $f(P_1,P_2) = (x^s, b^{-s}) = a$ and $f(P_1',P_2) = b$,
which imply $a\mathrel{P_1}b$ by strategy-proofness, as required.
Hence, all preferences of $\mathbb{D}$ are top-separable.
This proves the Lemma.
\end{proof}

For the next lemma, we introduce the notion of strong connectedness\textsuperscript{+} between alternatives. Formally, two alternatives $a$ and $b$ are said \textbf{strongly connected\textsuperscript{+}}, denoted $a \approx^+ b$,
if there exist $P_i,P_i' \in \mathbb{D}$ such that
$r_1(P_i) = a$, $r_1(P_i') =b$ and
$P_i \sim^+ P_i'$.
Accordingly, given a nonempty subset $B \subseteq A$,
we can induce a graph $G_{\approx^+}^B \coloneqq \langle B, \mathcal{E}_{\approx^+}^B\rangle$
such that two alternatives of $B$ form an edge if and only if they are strongly connected\textsuperscript{+}.
The lemma below explores the subgraph $G_{\approx^+}^{(A^s,\, x^{-s})}$ for each $s \in M$ and
$x^{-s} \in A^{-s}$.
For instance, recalling the domain of Appendix \ref{app:anexample},
the graph $G_{\approx^+}^A$ is specified in Figure \ref{fig:strongconnectedness+}.

\begin{figure}[t]
\begin{center}
\includegraphics[width=0.5\textwidth]{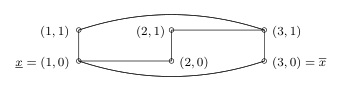}\\[-1em]
\caption{The graph $G_{\approx^+}^A$}\label{fig:strongconnectedness+}
\end{center}
\end{figure}

\begin{lemma}\label{lem:path-connectedness}
Given $s \in M$ and $x^{-s} \in A^{-s}$,
$G_{\approx^+}^{(A^s,\, x^{-s})}$ is a connected graph.
%
\end{lemma}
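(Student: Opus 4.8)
The plan is to fix $s \in M$ and $x^{-s} \in A^{-s}$ and to show that any two distinct alternatives of the slice $(A^s,\, x^{-s})$ are joined by a path in $G_{\approx^+}^{(A^s,\, x^{-s})}$. So take arbitrary distinct $a, b \in (A^s,\, x^{-s})$. By minimal richness there exist $P_i, P_i' \in \mathbb{D}$ with $r_1(P_i) = a$ and $r_1(P_i') = b$. Since $a^{-s} = b^{-s} = x^{-s}$, the peaks $a$ and $b$ are similar, so the no-detour condition (part (ii) of the Exterior\textsuperscript{+} property) applies with exactly this $s$ and $x^{-s}$ and yields a path $\pi = (P_{i|1}, \dots, P_{i|w})$ in $G_{\sim/\sim^+}^{\mathbb{D}}$ connecting $P_i$ and $P_i'$ along which every preference has its peak inside $(A^s,\, x^{-s})$.

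Next I would read off the peaks $r_1(P_{i|1}), \dots, r_1(P_{i|w})$, a walk inside the slice from $a$ to $b$; deleting repetitions leaves a sequence $a = w_0, w_1, \dots, w_r = b$ of consecutively distinct alternatives of $(A^s,\, x^{-s})$. It then suffices to prove the local claim that whenever two consecutive preferences $P_{i|k}, P_{i|k+1}$ on $\pi$ have distinct peaks $u \neq v$, one has $u \approx^+ v$, i.e.~$\{u,v\} \in \mathcal{E}_{\approx^+}^{(A^s,\, x^{-s})}$. Granting this, the reduced peak sequence is a path in $G_{\approx^+}^{(A^s,\, x^{-s})}$ from $a$ to $b$, which proves connectedness.

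To establish the local claim, note that a peak-changing step must transpose the top two alternatives, so $u$ and $v$ occupy the first two ranks, in opposite orders, in $P_{i|k}$ and $P_{i|k+1}$; moreover $u, v \in (A^s,\, x^{-s})$ forces $u = (u^s, x^{-s})$ and $v = (v^s, x^{-s})$ with $u^s \neq v^s$, so the swap necessarily occurs in component $s$. If the connecting edge is of type $\sim^+$, then $P_{i|k}$ and $P_{i|k+1}$ are separable and differ only by transposing the marginal ranks of $u^s$ and $v^s$ in component $s$; together with $r_1(P_{i|k}) = u$ and $r_1(P_{i|k+1}) = v$ this is precisely the definition of $u \approx^+ v$, and the claim holds immediately.

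The main obstacle is a peak-changing step realized by a \emph{plain} adjacency edge $P_{i|k} \sim P_{i|k+1}$: here the two witnesses are only required to swap the single top pair and need not be separable, so on their own they do not certify $u \approx^+ v$, which demands separable, $\sim^+$-related witnesses. The crux is therefore to produce, for such $u$ and $v$, a pair of separable preferences in $\mathbb{D}$ with peaks $u$ and $v$ that differ only by transposing the marginal ranks of $u^s$ and $v^s$. I expect to handle this by exploiting diversity\textsuperscript{+} (which guarantees separable preferences in $\mathbb{D}$) together with the Interior\textsuperscript{+} property and the no-restoration part of the Exterior\textsuperscript{+} property, so as either to re-route the offending transition through $\sim^+$ edges whose peaks stay inside the slice, or to exhibit the required separable witnesses directly; checking that such a re-routing is always available within $\mathbb{D}$ is the delicate point of the argument.
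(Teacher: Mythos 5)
Your skeleton matches the paper's proof exactly: minimal richness gives preferences peaked at $a$ and $b$, the no-detour condition gives a path in $G_{\sim/\sim^+}^{\mathbb{D}}$ whose peaks all lie in $(A^s, x^{-s})$, and everything reduces to the local claim that two consecutive preferences on this path with distinct peaks $u \neq v$ certify $u \approx^+ v$. Your handling of the $\sim^+$ case is also correct. But the plain-adjacency case, which you rightly identify as the crux, is left as an unproven ``re-routing'' hope, and the tools you propose for it (diversity\textsuperscript{+}, the Interior\textsuperscript{+} property, the no-restoration part of Exterior\textsuperscript{+}) are not the right ones: none of them manufactures separable witnesses with the prescribed peaks $u$ and $v$, and it is not clear any such re-routing exists. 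This is a genuine gap.

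The missing idea is that the plain-adjacency case is \emph{vacuous}. By this point in the paper's argument it has already been established, inside the proof of Lemma \ref{lem:tops-onlydomain} (for both the sufficiency and the necessity hypotheses of the Theorem), that every preference of $\mathbb{D}$ is top-separable. Now suppose $P_{i|k} \sim P_{i|k+1}$ with $r_1(P_{i|k}) = (u^s, x^{-s})$, $r_1(P_{i|k+1}) = (v^s, x^{-s})$ and $u^s \neq v^s$. Top-separability forces $(u^s, z^{-s}) \mathrel{P_{i|k}} (v^s, z^{-s})$ and $(v^s, z^{-s}) \mathrel{P_{i|k+1}} (u^s, z^{-s})$ for \emph{every} $z^{-s} \in A^{-s}$. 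Since $m \geq 2$ and each component set has at least two elements, $|A^{-s}| \geq 2$, so at least two distinct pairs of alternatives are oppositely ranked across $P_{i|k}$ and $P_{i|k+1}$ --- contradicting the definition of $\sim$, which permits exactly one locally switched pair. Hence every peak-changing step on the path is in fact a $\sim^+$ edge, and your $\sim^+$ argument then closes the proof. Without invoking the top-separability of all preferences in $\mathbb{D}$, your argument does not go through.
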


\begin{proof}
Given distinct $a, b\in (A^s, x^{-s})$, we construct a path in $G_{\approx^+}^{(A^s,\, x^{-s})}$ connecting $a$ and $b$.
Fixing $P_i, P_i' \in \mathbb{D}$ such that $r_1(P_i)=a$ and $r_1(P_i')=b$ by minimal richness,
by the no-detour condition, we have a path $\pi =  (P_{i|1}, \dots, P_{i|v})$ in $G_{\sim/\sim^+}^{\mathbb{D}}$ connecting $P_i$ and $P_i'$ such that
$r_1(P_{i|k}) \in (A^s, x^{-s})$ for all $k = 1, \dots, v$.
Furthermore, we partition the path $\pi$ according to preference peaks (without changing the orders of preferences in $\pi$):
\begin{align*}
\left(\frac{~P_{i|1}, \dots, P_{i|k_1}~}{\textrm{peak $x_1$}}, \dots,
\frac{~P_{i|k_{p-1}+1}, \dots, P_{i|k_p}~}{\textrm{peak $x_p$}},
\frac{~P_{i|k_p+1}, \dots, P_{i|k_{p+1}}~}{\textrm{peak $x_{p+1}$}},\dots,
\frac{~P_{i|k_{q-1}+1}, \dots, P_{i|k_q}~}{\textrm{peak $x_q$}}\right),
\end{align*}
where $q \geq 2$, $k_0 = 0$, $k_q = v$,
$r_1(P_{i|k_{p-1}+1}) = \dots = r_1(P_{i|k_p}) = x_p$ for all $p =1, \dots, q$, and
$x_p \neq x_{p+1}$ for all $p=1, \dots, q-1$.
Thus, we have a sequence of alternatives $(x_1, \dots, x_q)$.
\medskip

\noindent
\textsc{Claim 1}:
We have $x_p \approx^+ x_{p+1}$ for all $1\leq p < q$.\medskip

Given $1\leq p < q$, we have the preferences $P_{i|k_p}$ and $P_{i|k_p+1}$ and the peaks $r_1(P_{i|k_p}) = x_p$ and $r_1(P_{i|k_p+1}) = x_{p+1}$.
Since $x_p, x_{p+1} \in (A^s, x^{-s})$, we write $x_p = (x^s, x^{-s})$ and $x_{p+1} = (y^s, x^{-s})$ where $x^s \neq y^s$.
Clearly, either $P_{i|k_p} \sim P_{i|k_p+1}$ or $P_{i|k_p} \sim^+ P_{i|k_p+1}$ holds.
If $P_{i|k_p} \sim^+ P_{i|k_p+1}$, it is evident that $x_p \approx^+ x_{p+1}$.
We complete the verification by ruling out the possibility that $P_{i|k_p} \sim P_{i|k_p+1}$.
Suppose by contradiction that $P_{i|k_p} \sim P_{i|k_p+1}$.
Thus, $(x^s, x^{-s})$ and $(y^s, x^{-s})$ are the unique two alternatives that are oppositely ranked across $P_{i|k_p}$ and $P_{i|k_p+1}$.
However, since $P_{i|k_p}$ and $P_{i|k_p+1}$ are shown to be top-separable in the proof of Lemma \ref{lem:tops-onlydomain}, we have
$(x^s, z^{-s})\mathrel{P_{i|k_p}}(y^s, z^{-s})$ and $(y^s, z^{-s})\mathrel{P_{i|k_p+1}}(x^s, z^{-s})$ for all $z^{-s} \in A^{-s}$ - a contradiction.
This completes the verification of the claim.\medskip

Note that some alternatives may appear multiple times in the sequence $(x_1, \dots, x_q)$.
For instance, let $x_p = x_{p'}$ where $1 \leq p < p' \leq q$.
Since $x_p\neq x_{p+1}$, it is clear that $p+1< p'$.
We then remove alternatives $x_{p+1}, \dots, x_{p'}$, and refine the sequence
to $(x_1, \dots, x_p, x_{p'+1}, \dots, x_q)$,
where any consecutive alternatives remain to be strongly connected\textsuperscript{+}.
By repeatedly eliminating repetitions of alternatives, we finally construct a path in $G_{\approx^+}^{(A^s,\, x^{-s})}$ connecting $a$ and $b$.
%
\end{proof}

Furthermore, we fix an arbitrary strategy-proof rule $f: \mathbb{D}^n \rightarrow A$, and partially characterize $f$ in the next two lemmas using the subgraphs $G_{\approx^+}^{(A^s,\,x^{-s})}$ for all $s\in M$ and $x^{-s} \in A^{-s}$.
Clearly, by Lemma \ref{lem:tops-onlydomain}, $f$ satisfies the tops-only property.
For notational convenience, we henceforth write $(a, P_{-i})$ to denote a preference profile where voter $i$ reports an arbitrary preference with the peak $a$, and all others report $P_1, \dots, P_{i-1}, P_{i+1}, \dots, P_n$ respectively.
For ease of presentation, given $P_i \in \mathbb{D}$ such that $r_1(P_i) = a$,
let $r_1(P_i)^s \coloneqq a^s$ for all $s \in M$;
given $P \in \mathbb{D}$ and $f(P) = a$, let $f(P)^s \coloneqq a^s$ for all $s \in M$.


\begin{lemma}\label{lem:step1}
Given $i \in N$, $P_i, P_i' \in \mathbb{D}$ and $P_{-i} \in \mathbb{D}^{n-1}$,
let $M\big(r_1(P_i), r_1(P_i')\big) = \{s\}$.
We have $f(P_i, P_{-i})^t = f(P_i', P_{-i})^t$ for all $t \in M\backslash \{s\}$.
\end{lemma}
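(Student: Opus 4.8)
The plan is to exploit the tops-only property of $f$ (Lemma \ref{lem:tops-onlydomain}) to recast the claim purely in terms of peaks, and then to reduce the general peak change to a chain of elementary, adjacency\textsuperscript{+}-type changes via the connectedness of $G_{\approx^+}^{(A^s,\,x^{-s})}$ (Lemma \ref{lem:path-connectedness}). Writing $x^{-s} \coloneqq r_1(P_i)^{-s} = r_1(P_i')^{-s}$ (these agree since $M(r_1(P_i), r_1(P_i')) = \{s\}$), both peaks $a \coloneqq r_1(P_i)$ and $b \coloneqq r_1(P_i')$ lie in $(A^s, x^{-s})$. Since $f$ is tops-only, it suffices to prove $f(a, P_{-i})^t = f(b, P_{-i})^t$ for all $t \neq s$, so by Lemma \ref{lem:path-connectedness} I would fix a path $a = y_0, y_1, \dots, y_L = b$ in $G_{\approx^+}^{(A^s,\,x^{-s})}$ (all vertices in $(A^s, x^{-s})$) and argue edge by edge, chaining the resulting equalities by transitivity.

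For a single edge $y_{j-1} \approx^+ y_j$, by definition there exist $\hat{P}_i, \hat{P}_i' \in \mathbb{D}$ with $r_1(\hat{P}_i) = y_{j-1}$, $r_1(\hat{P}_i') = y_j$ and $\hat{P}_i \sim^+ \hat{P}_i'$. The first step is to pin down the distinguished component of $\sim^+$: since in the definition of $\sim^+$ the only alternatives whose ranks move are those lying in the two swapped fibers, and since $r_1(\hat{P}_i) = y_{j-1} \neq y_j = r_1(\hat{P}_i')$, the top alternative must sit in a swapped pair; as $y_{j-1}$ and $y_j$ differ only in component $s$, this forces the distinguished component to be $s$ with swapped elements $y_{j-1}^s$ and $y_j^s$. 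Consequently the only pairs whose relative order differs between $\hat{P}_i$ and $\hat{P}_i'$ are of the form $\{(y_{j-1}^s, z^{-s}), (y_j^s, z^{-s})\}$ for $z^{-s} \in A^{-s}$, and every such pair agrees on all components $t \neq s$. Now set $u \coloneqq f(\hat{P}_i, P_{-i})$ and $w \coloneqq f(\hat{P}_i', P_{-i})$. If $u = w$ there is nothing to prove; otherwise strategy-proofness gives $u \mathrel{\hat{P}_i} w$ and $w \mathrel{\hat{P}_i'} u$, so $u$ and $w$ are oppositely ranked across $\hat{P}_i$ and $\hat{P}_i'$, whence $\{u, w\}$ is one of the swapped pairs and $u^t = w^t$ for all $t \neq s$. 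By tops-only, $u = f(y_{j-1}, P_{-i})$ and $w = f(y_j, P_{-i})$, so $f(y_{j-1}, P_{-i})^t = f(y_j, P_{-i})^t$ for all $t \neq s$. Chaining over $j = 1, \dots, L$ yields $f(a, P_{-i})^t = f(b, P_{-i})^t$, and tops-only identifies these with $f(P_i, P_{-i})^t$ and $f(P_i', P_{-i})^t$, which is the assertion.

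I expect the main obstacle to be the structural bookkeeping in the single-edge step: verifying rigorously that for two $\sim^+$-related preferences the only reversed pairs are exactly the fiber pairs $\{(y_{j-1}^s, z^{-s}), (y_j^s, z^{-s})\}$, and that the distinguished component is necessarily $s$. This rests on the two clauses of the definition of $\sim^+$, where clause (i) places each fiber pair at consecutive positions and swaps them while clause (ii) fixes the position of every alternative outside the two fibers; together they ensure that an alternative inside a fiber can exchange order only with its partner and never with an alternative at a different position, which is precisely what lets strategy-proofness confine the change to component $s$.
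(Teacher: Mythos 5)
Your proposal is correct and follows essentially the same route as the paper's proof: invoke the tops-only property from Lemma \ref{lem:tops-onlydomain}, take a path in $G_{\approx^+}^{(A^s,\,x^{-s})}$ from Lemma \ref{lem:path-connectedness}, and for each edge use strategy-proofness in both directions to conclude that the two outcomes are oppositely ranked across the $\sim^+$-adjacent preferences and hence agree on all components other than $s$. Your explicit verification that the distinguished component of the $\sim^+$ relation must be $s$ is a detail the paper leaves implicit, but it is the same argument.
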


\begin{proof}
For notational convenience,
let $r_1(P_i) = (a^s, z^{-s})$ and $r_1(P_i') = (b^s, z^{-s})$,
where $a^s \neq b^s$.
By Lemma \ref{lem:path-connectedness},
we have a path $(a_1, \dots, a_v)$
in $G_{\approx^+}^{(A^s,\, z^{-s})}$ connecting $(a^s, z^{-s})$ and $(b^s, z^{-s})$.
For ease of presentation, let $f(a_k, P_{-i}) = x_k$ for all $k = 1, \dots, v$.
Clearly, $f(P_i, P_{-i}) = f(a_1, P_{-i}) = x_1$ and $f(P_i', P_{-i})=f(a_v, P_{-i})=x_v$.
To complete the verification, it suffices to show $x_k^{-s} = x_{k+1}^{-s}$ for all $k = 1, \dots, v-1$.

Given $1 \leq k < v$, we have $f(a_k, P_{-i}) = x_k$ and $f(a_{k+1}, P_{-i}) = x_{k+1}$.
The result holds evidently if $x_k = x_{k+1}$.
Next, assume $x_k \neq x_{k+1}$.
Since $a_k \approx^+ a_{k+1}$, there exist $\hat{P}_i,\hat{P}_i' \in \mathbb{D}$ such that
$r_1(\hat{P}_i) = a_k$, $r_1(\hat{P}_i') = a_{k+1}$ and
$\hat{P}_i \sim^{+} \hat{P}_i'$.
Since $f(\hat{P}_i, P_{-i}) = f(a_k, P_{-i}) = x_k$ and
$f(\hat{P}_i', P_{-i}) = f(a_{k+1}, P_{-i}) = x_{k+1}$,
strategy-proofness implies $x_k\mathrel{\hat{P}_i}x_{k+1}$ and $x_{k+1}\mathrel{\hat{P}_i'}x_k$.
Since $\hat{P}_i \sim^{+} \hat{P}_i'$, note that any two alternatives that are oppositely ranked across $\hat{P}_i$ and $\hat{P}_i'$, agree on all components other than $s$.
Hence, $x_k^{-s} = x_{k+1}^{-s}$, as required.
\end{proof}

\begin{lemma}\label{lem:step2}
Given $i \in N$, $P_i, P_i' \in \mathbb{D}$, $P_{-i} \in \mathbb{D}^{n-1}$ and $t \in M$,
let $r_1(P_i)^t= r_1(P_i')^t$.
We have $f(P_i, P_{-i})^t = f(P_i', P_{-i})^t$.
\end{lemma}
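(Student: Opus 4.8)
The plan is to reduce a general change in voter $i$'s peak — one that fixes the $t$-th coordinate — to a sequence of single-coordinate changes, and then to invoke Lemma \ref{lem:step1} at each link of the resulting chain. Because $f$ is a strategy-proof rule on the rich domain $\mathbb{D}$, Lemma \ref{lem:tops-onlydomain} tells us that $f$ satisfies the tops-only property; hence the outcome depends only on the profile of peaks, and the notation $(c,P_{-i})$ yields a well-defined outcome for every $c\in A$. Throughout the argument $P_{-i}$ stays fixed and only voter $i$'s peak varies.

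First I would write $a\coloneqq r_1(P_i)$ and $b\coloneqq r_1(P_i')$, so that the hypothesis $r_1(P_i)^t = r_1(P_i')^t$ reads $a^t = b^t$, i.e.\ $t\notin M(a,b)$. Enumerating $M(a,b)=\{s_1,\dots,s_\ell\}\subseteq M\backslash\{t\}$, I would construct the chain of alternatives $a = c_0, c_1, \dots, c_\ell = b$ in which $c_j$ agrees with $b$ on the components $s_1,\dots,s_j$ and with $a$ on all remaining components. By construction $M(c_{j-1},c_j)=\{s_j\}$ for each $j$, and every $c_j$ retains the $t$-th coordinate $a^t = b^t$ precisely because $t\notin M(a,b)$.

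Next I would walk along this chain. Minimal richness supplies, for each intermediate peak $c_j$, a preference in $\mathbb{D}$ realizing it, so each link $c_{j-1}\to c_j$ is an admissible single-coordinate change in the component $s_j\neq t$. Applying Lemma \ref{lem:step1} to this link gives $f(c_{j-1},P_{-i})^{t'}=f(c_j,P_{-i})^{t'}$ for all $t'\in M\backslash\{s_j\}$, and in particular for $t'=t$. Telescoping the resulting $\ell$ equalities yields $f(a,P_{-i})^t = f(c_1,P_{-i})^t = \cdots = f(b,P_{-i})^t$, which by tops-only is exactly $f(P_i,P_{-i})^t = f(P_i',P_{-i})^t$. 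When $a=b$ the chain is empty and the claim is immediate.

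The substantive work has already been done in Lemma \ref{lem:step1} — itself resting on the strong connectedness\textsuperscript{+} structure of Lemma \ref{lem:path-connectedness} — so the present argument is essentially bookkeeping. The only point requiring care is keeping the entire chain within the coordinates $M\backslash\{t\}$, which the hypothesis $a^t=b^t$ guarantees, together with the well-definedness of the intermediate profiles, secured by minimal richness and the tops-only property; I anticipate no genuine obstacle beyond this.
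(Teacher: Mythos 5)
Your proposal is correct and follows essentially the same route as the paper: both proofs reduce the multi-component change of voter $i$'s peak (with the $t$-th coordinate fixed) to a chain of single-component changes, realize each intermediate peak via minimal richness, apply Lemma \ref{lem:step1} at each link, and telescope, relying on the tops-only property from Lemma \ref{lem:tops-onlydomain} throughout. The only difference is notational (the paper relabels components so that $M\big(r_1(P_i),r_1(P_i')\big)=\{1,\dots,s\}$, whereas you enumerate $M(a,b)=\{s_1,\dots,s_\ell\}$ directly).
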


\begin{proof}
The Lemma immediately follows from the tops-only property if $r_1(P_i) = r_1(P_i')$.
Henceforth, let $r_1(P_i) \neq r_1(P_i')$.
Assume w.l.o.g.~that $M\big(r_1(P_i), r_1(P_i')\big) = \{1, \dots, s\}$, where $1\leq s<m$.
We write $r_1(P_i) = (a^1, \dots, a^s, z^{\{s+1, \dots, m\}})$ and
$r_1(P_i') = (b^1, \dots, b^s, z^{\{s+1, \dots, m\}})$ where $a^k \neq b^k$ for all $k=1, \dots, s$.
Clearly, $r_1(P_i)^t= r_1(P_i')^t$ implies $s< t \leq m$.
We construct the alternative $x_{k} = (b^1, \dots, b^k, a^{k+1}, \dots, a^s, z^{\{s+1, \dots, m\}})$ for each $k = 0, 1, \dots, s$.
Thus, $x_0 = r_1(P_i)$ and
$x_s = r_1(P_i')$.
By minimal richness, for each $k = 0, 1, \dots, s$,
we fix a preference $P_{i|k} \in \mathbb{D}$ such that $r_1(P_{i|k}) = x_k$.
For each $k=0,1, \dots, s-1$, since $t \notin M\big(r_1(P_i^k),r_1(P_i^{k+1})\big)$,
Lemma \ref{lem:step1} implies $f(P_{i|k}, P_{-i})^t = f(P_{i|k+1}, P_{-i})^t$.
Therefore, we have $f(P_i, P_{-i})^t = f(P_{i|0}, P_{-i})^t =\dots  =f(P_{i|s}, P_{-i})^t = f(P_i', P_{-i})^t$.
\end{proof}

This completes the verification in \textbf{Part 1}, and we then turn to \textbf{Part 2}.\medskip

\noindent
\textbf{(Sufficiency Part)}~
Let $\mathbb{D}$ be a rich multidimensional domain on $\prec$ w.r.t.~thresholds $\underline{x}$ and $\overline{x}$.
We show that $\mathbb{D}$ is a decomposable domain.
We fix a strategy-proof rule $f: \mathbb{D}^n \rightarrow A$, and show that $f$ is decomposable, and all marginal SCFs are strategy-proof marginal rules.
Of course, $f$ satisfies the tops-only property by Lemma \ref{lem:tops-onlydomain} and triggers Lemma \ref{lem:step2}.

\begin{lemma}\label{lem:decomposition}
SCF $f$ is decomposable, and each marginal SCF is a strategy-proof marginal rule.
\end{lemma}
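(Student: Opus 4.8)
The plan is to build the marginal rules directly from the ``top-dependence'' of $f$ on each component. Iterating Lemma \ref{lem:step2} over the voters (taking $t=s$ and changing one voter at a time) shows that, for every $s\in M$, the coordinate $f(P)^s$ depends only on the profile of $s$-components of the peaks $\big(r_1(P_1)^s,\dots,r_1(P_n)^s\big)$. Since the induced marginal preference satisfies $r_1([P_i]^s)=r_1(P_i)^s$, I would define $f^s:\big[[\mathbb{D}]^s\big]^n\to A^s$ by $f^s(Q_1^s,\dots,Q_n^s):=f(P)^s$ for any profile $P\in\mathbb{D}^n$ with $[P_j]^s=Q_j^s$ for all $j$; this is well defined because any two representing profiles share the same $s$-components of peaks, so their $f(\cdot)^s$ agree. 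With this definition $f(P)^s=f^s([P_1]^s,\dots,[P_n]^s)$ holds by construction, which is exactly the decomposition identity of Definition \ref{def:decomposabledomain}.

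Next I would verify that each $f^s$ is a marginal rule, i.e.\ unanimous. If $r_1(Q_i^s)=c^s$ for every $i$, fix an arbitrary $d^{-s}\in A^{-s}$, set $a:=(c^s,d^{-s})$, and use minimal richness to pick $P_i\in\mathbb{D}$ with $r_1(P_i)=a$ for all $i$. Then each $r_1([P_i]^s)=c^s$, so $f^s(Q^s)=f(P)^s$, while unanimity of $f$ gives $f(P)=a$ and hence $f^s(Q^s)=c^s$.

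The main work, and the step I expect to be the real obstacle, is strategy-proofness of $f^s$. Suppose voter $i$ with true induced preference $Q_i^s=[P_i]^s$ could misreport $\tilde{Q}_i^s$, and write $a:=r_1(P_i)$, so that by definition $x^s\,Q_i^s\,y^s$ iff $(x^s,a^{-s})\mathrel{P_i}(y^s,a^{-s})$. The difficulty is that strategy-proofness of $f$ compares whole alternatives, whereas I need a comparison purely on component $s$. To force the two outcomes to differ only on component $s$, I would exploit the fact that $f^s$ depends only on the component-$s$ tops, and therefore choose convenient representatives whose remaining peaks are pinned to $a^{-s}$: by minimal richness take $P_j\in\mathbb{D}$ with $r_1(P_j)=(r_1(Q_j^s),a^{-s})$ for each $j\neq i$, and $P_i'\in\mathbb{D}$ with $r_1(P_i')=(r_1(\tilde{Q}_i^s),a^{-s})$. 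Writing $P:=(P_i,P_{-i})$ and $P':=(P_i',P_{-i})$, top-dependence on component $s$ gives $f(P)^s=f^s(Q_i^s,Q_{-i}^s)=:c^s$ and $f(P')^s=f^s(\tilde{Q}_i^s,Q_{-i}^s)=:e^s$, while for every $t\neq s$ all peaks have $t$-coordinate $a^t$, so the unanimity of $f^t$ just established yields $f(P)^t=f(P')^t=a^t$.

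Consequently $f(P)=(c^s,a^{-s})$ and $f(P')=(e^s,a^{-s})$; if $c^s\neq e^s$ these are distinct alternatives, and strategy-proofness of $f$ gives $(c^s,a^{-s})\mathrel{P_i}(e^s,a^{-s})$, which by the definition of $[P_i]^s=Q_i^s$ is precisely $c^s\,Q_i^s\,e^s$. This establishes that $f^s$ is strategy-proof and completes the lemma. The crux is the alignment trick in the previous paragraph: recognizing that the freedom to pick \emph{any} representative with the correct component-$s$ tops lets me equalize the off-$s$ coordinates of both outcomes to $a^{-s}$, thereby converting the ordinal comparison delivered by $f$ into the required marginal one.
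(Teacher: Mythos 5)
Your proposal is correct and follows essentially the same route as the paper's proof: iterating Lemma \ref{lem:step2} voter-by-voter to get top-dependence of $f(\cdot)^s$ on the $s$-components of the peaks, defining $f^s$ via representing profiles, proving unanimity by lifting a common marginal peak to a common alternative, and establishing strategy-proofness by the same alignment trick (the paper's $\hat{P}_{-i}$ profile pins all off-$s$ peak coordinates to $r_1(P_i)^{-s}$, exactly as your representatives do with $a^{-s}$). The only cosmetic difference is that the paper routes the construction through an auxiliary voting scheme $g^s$ defined on tops, whereas you define $f^s$ directly on marginal profiles and check well-definedness.
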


\begin{proof}
The proof consists of three claims.\medskip

\noindent
\textsc{Claim 1}: Given $P, P' \in \mathbb{D}^n$ and $t \in M$,
let $r_1(P_i)^t= r_1(P_i')^t$ for all $i \in N$.
We have $f(P)^t = f(P')^t$.\medskip

We first construct the profile
$P(k) = (P_1', \dots, P_k', P_{k+1}, \dots, P_n)$ for each $k = 0, 1, \dots, n$.
Clearly, $P(0) = P$ and $P(n) = P'$.
For each $i = 1, \dots, n$, note that $P(i-1) = (P_1', \dots, P_{i-1}',P_i, P_{i+1}, \dots, P_n)$ and
$P(i)= (P_1', \dots, P_{i-1}',P_i', P_{i+1}, \dots, P_n)$ agree on preferences of all voters other than $i$, and $r_1(P_i)^t= r_1(P_i')^t$.
Then, Lemma \ref{lem:step2} implies $f(P(i-1))^t = f(P(i))^t$.
Hence,  we have $f(P)^t =f(P(0))^t = \dots =f(P(n))^t = f(P')^t$.
This completes the verification of the claim.\medskip

\noindent
\textsc{Claim 2}: SCF $f$ is decomposable.\medskip

Given $s \in M$,
by Claim 1 and minimal richness of $\mathbb{D}$,
we can construct a function $g^s: \underset{n}{\underbrace{A^s \times \dots \times A^s}} \rightarrow A^s$
such that for all $(x_1^s, \dots, x_n^s) \in \underset{n}{\underbrace{A^s \times \dots \times A^s}}$
and all $(P_1, \dots, P_n) \in \mathbb{D}^n$ with $\big(r_1(P_1)^s, \dots, r_1(P_n)^s\big) = (x_1^s, \dots,  x_n^s)$,
$g^s(x_1^s, \dots, x_n^s) = f(P_1, \dots, P_n)^s$.
Then, we construct a marginal SCF $f^s : \big[[\mathbb{D}]^s\big]^n \rightarrow A^s$ such that
$f^s([P_1]^s, \dots, [P_n]^s) = g^s\big(r_1([P_1]^s), \dots, r_1([P_n]^s)\big)$ for all $([P_1]^s, \dots, [P_n]^s) \in \big[[\mathbb{D}]^s\big]^n$.
By construction,
it is clear that for all $(P_1, \dots, P_n) \in \mathbb{D}^n$, we have
$\big[f(P_1, \dots, P_n) = a\big] \Leftrightarrow \big[f^s([P_1]^s, \dots, [P_n]^s) = a^s\; \textrm{for all}\; s \in M\big]$.
Therefore, $f$ is decomposable.
This completes the verification of the claim.\medskip

\noindent
\textsc{Claim 3}: Given $s \in M$, the marginal SCF $f^s$ constructed in Claim 2 is a strategy-proof marginal rule.\medskip

First, we claim that $f^s$ is unanimous.
Given a profile $([P_1]^s, \dots, [P_n]^s) \in \big[[\mathbb{D}]^s\big]^n$,
let $r_1([P_1]^s) = \dots = r_1([P_n]^s) = a^s$.
We show $f^s([P_1]^s, \dots, [P_n]^s) =a^s$.
Given $z^{-s} \in A^{-s}$, by minimal richness, we have a profile $(P_1', \dots, P_n') \in \mathbb{D}^n$ such that $r_1(P_1') = \dots = r_1(P_n') = (a^s, z^{-s})$.
By unanimity of $f$,
it is clear that $f(P_1', \dots, P_n') = (a^s, z^{-s})$.
Then, by the decomposition of $f$,
we have $f^s([P_1]^s, \dots, [P_n]^s) = f(P_1', \dots, P_n')^s = a^s$, as required.

Last, we show strategy-proofness of $f^s$.
Given $i \in N$, $[P_i]^s ,[P_i']^s \in [\mathbb{D}]^s$ and
$[P_{-i}]^s\coloneqq ([P_1]^s, \dots, [P_{i-1}]^s,[P_{i+1}]^s, \dots, [P_n]^s) \in \big[[\mathbb{D}]^s\big]^{n-1}$,
let $f^s([P_i]^s, [P_{-i}]^s) = a^s$, $f^s([P_i']^s, [P_{-i}]^s) = b^s$ and $a^s \neq b^s$.
We show $a^s\mathrel{[P_i]^s}b^s$.
Since $f^s$ by construction satisfies the tops-only property, $f^s([P_i]^s, [P_{-i}]^s) \neq f^s([P_i']^s, [P_{-i}]^s)$ implies $r_1([P_i]^s) \neq r_1([P_i']^s)$.
For notational convenience, let $r_1(P_i) = (z^s, z^{-s})$ and $r_1([P_i']^s) = \hat{z}^s$.
Thus, to show $a^s\mathrel{[P_i]^s}b^s$, it suffices to show $(a^s, z^{-s})\mathrel{P_i}(b^s, z^{-s})$.
By minimal richness, we fix $\hat{P}_i \in \mathbb{D}$ such that $r_1(\hat{P}_i)=(\hat{z}^s, z^{-s})$, and fix $\hat{P}_j \in \mathbb{D}$ such that $r_1(\hat{P}_j) = \big(r_1([P_j]^s), z^{-s}\big)$ for each $j \in N\backslash \{i\}$.
By the decomposition of $f$ and the tops-only property of $f^s$, we have
$f(P_i, \hat{P}_{-i})^s = f^s([P_i]^s, [\hat{P}_{-i}]^s)
= f^s([P_i]^s, [P_{-i}]^s) = a^s$ and
$f(\hat{P}_i, \hat{P}_{-i})^s = f^s([\hat{P}_i]^s, [\hat{P}_{-i}]^s)
= f^s([P_i']^s, [P_{-i}]^s) = b^s$, while
by the decomposition of $f$ and unanimity of marginal rules,
$f(P_i, \hat{P}_{-i})^t = f^t([P_i]^t, [\hat{P}_{-i}]^t)= z^t$ and
$f(\hat{P}_i, \hat{P}_{-i})^t = f^t([\hat{P}_i]^t, [\hat{P}_{-i}]^t)
= z^t$ for all $t \in M\backslash \{s\}$.
Thus,
$f(P_i, \hat{P}_{-i}) = (a^s, z^{-s})$ and $f(\hat{P}_i, \hat{P}_{-i}) = (b^s, z^{-s})$,
which by strategy-proofness imply $(a^s, z^{-s})\mathrel{P_i}(b^s, z^{-s})$, as required.
This completes the verification of the claim, and hence proves the Lemma.
\end{proof}

\vspace{-1em}
Conversely, we fix a strategy-proof marginal rule $f^s: \big[[\mathbb{D}]^s\big]^n \rightarrow A^s$
for each $s \in M$,
assemble an SCF $f: \mathbb{D}^n \rightarrow A$ such that
$f(P) = \big(f^1([P]^1), \dots, f^m([P]^m)\big)$ for all $P \in \mathbb{D}^n$,
and show that $f$ is a strategy-proof rule.

\begin{lemma}\label{lem:assembling}
The assembled SCF $f$ is a strategy-proof rule.
\end{lemma}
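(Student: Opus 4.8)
The plan is to verify the two defining axioms of a rule, unanimity and strategy-proofness, in turn. Unanimity is immediate: if $r_1(P_i) = a$ for all $i \in N$, then $r_1([P_i]^s) = a^s$ for all $i$ and all $s$, so unanimity of each marginal rule $f^s$ gives $f^s([P]^s) = a^s$, whence $f(P) = (a^1, \dots, a^m) = a$. For strategy-proofness I would first record that, by Proposition \ref{prop:FBR}, each $f^s$ is an FBR (when $\underline{x}^s = \overline{x}^s$) or an $(\underline{x}^s, \overline{x}^s)$-FBR (when $\underline{x}^s \neq \overline{x}^s$); in either case $f^s$ depends only on the profile of marginal peaks, so $f^s$, and hence $f$, satisfies the tops-only property.

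Now fix $i \in N$, $P_i, P_i' \in \mathbb{D}$ and $P_{-i} \in \mathbb{D}^{n-1}$, and write $x = r_1(P_i)$, $y = r_1(P_i')$, $a = f(P_i, P_{-i})$ and $b = f(P_i', P_{-i})$; assume $a \neq b$, so that $M(a,b) \neq \emptyset$. Since $P_i$ is multidimensional hybrid on $\prec$ w.r.t.~$\underline{x}$ and $\overline{x}$ with $r_1(P_i) = x$, it suffices by the transitive extension of Definition \ref{def:MH} to prove that for every $s \in M(a,b)$ either $a^s = x^s$, or $a^s \in \textrm{Int}\langle x^s, b^s\rangle$ and $a^s \notin \textrm{Int}\langle \underline{x}^s, \overline{x}^s\rangle$; this yields $a \mathrel{P_i} b$, which is exactly strategy-proofness. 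First I would narrow down the relevant components: for $s \notin M(x,y)$ we have $r_1([P_i]^s) = r_1([P_i']^s)$, so the tops-only property of $f^s$ forces $a^s = b^s$; hence $M(a,b) \subseteq M(x,y)$ and in particular $x^s \neq y^s$ for every $s \in M(a,b)$.

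Fix $s \in M(a,b)$ and suppose $a^s \neq x^s$ (if $a^s = x^s$ the required condition already holds). Writing $a^s = f^s([P_i]^s, [P_{-i}]^s)$ and $b^s = f^s([P_i']^s, [P_{-i}]^s)$, I would use the standard representation of an FBR with the other voters fixed as a clamping map $t \mapsto \textrm{med}(p_i, t, q_i)$ that projects voter $i$'s peak $t$ onto the interval $\langle p_i, q_i\rangle$; a short computation then gives the between-ness property $a^s \in \langle x^s, b^s\rangle$. Since $a^s \neq x^s$ and $a^s \neq b^s$, this sharpens to $a^s \in \textrm{Int}\langle x^s, b^s\rangle$. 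It remains to show $a^s \notin \textrm{Int}\langle \underline{x}^s, \overline{x}^s\rangle$. When $\underline{x}^s = \overline{x}^s$ this set is empty and there is nothing to prove. When $\underline{x}^s \neq \overline{x}^s$, $f^s$ is an $(\underline{x}^s, \overline{x}^s)$-FBR, and I would invoke an interior-pinning property of the constrained-dictatorship condition, read off the min-max formula: if the dictator $i^\ast$ has a peak strictly inside $\langle \underline{x}^s, \overline{x}^s\rangle$, then the singleton-coalition term $\min^{\prec^s}\!\big(r_1(P_{i^\ast}^s), b_{\{i^\ast\}}^s\big)$ equals that peak while no coalition can exceed it, so the outcome equals $r_1(P_{i^\ast}^s)$ for \emph{every} profile. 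Supposing for contradiction $a^s \in \textrm{Int}\langle \underline{x}^s, \overline{x}^s\rangle$, interior-pinning forces $a^s = r_1([P_{i^\ast}]^s)$. If $i^\ast = i$ this reads $a^s = x^s$, contradicting $a^s \neq x^s$; if $i^\ast \neq i$, the dictator's marginal peak is unchanged when $i$ switches from $P_i$ to $P_i'$, so interior-pinning also gives $b^s = r_1([P_{i^\ast}]^s) = a^s$, contradicting $a^s \neq b^s$. Either way $a^s \notin \textrm{Int}\langle \underline{x}^s, \overline{x}^s\rangle$, completing the per-component argument and hence the proof.

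The main obstacle is the third step. The one-directional nature of multidimensional hybridness (Definition \ref{def:MH} only converts favourable \emph{positions} into preference comparisons, not conversely) means that the bare strategy-proofness inequality $a^s \mathrel{[P_i]^s} b^s$ is not enough, and I must extract genuinely positional information—between-ness and the interior-pinning dichotomy—from the min-max structure supplied by Proposition \ref{prop:FBR}. Establishing interior-pinning for $(\underline{x}^s, \overline{x}^s)$-FBRs, namely that an interior outcome can only be the constrained dictator's peak and that no other voter can then dislodge it, is the delicate point; everything else is bookkeeping with the tops-only property.
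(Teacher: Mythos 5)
Your proof is correct, but it takes a genuinely different route from the paper's at the key step. Both proofs reduce strategy-proofness to the positional claim that for each $s \in M(a,b)$ either $a^s = x^s$, or $a^s \in \textrm{Int}\langle x^s, b^s\rangle$ and $a^s \notin \textrm{Int}\langle \underline{x}^s, \overline{x}^s\rangle$, and both invoke Proposition \ref{prop:FBR} to identify each $f^s$ as an FBR or $(\underline{x}^s,\overline{x}^s)$-FBR. The divergence is in how that positional claim is extracted. The paper's proof is preference-based: it extends each $f^s$, keeping the same fixed ballots, to a marginal rule $\hat{f}^s$ on the induced marginal domain of the \emph{maximal} domain $\mathbb{D}_{\textrm{MH}}(\prec,\underline{x},\overline{x})$ (invoking the sufficiency direction of Proposition \ref{prop:FBR} to retain strategy-proofness there), notes that the assembled $\hat{f}$ agrees with $f$ on $\mathbb{D}^n$, and then argues that if the positional claim failed, the maximal induced marginal domain --- which contains \emph{all} hybrid marginal preferences with peak $x^s$ --- would contain one ranking $b^s$ above $a^s$, contradicting strategy-proofness of $\hat{f}^s$. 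Your proof is functional-form-based: you read the positional claim directly off the min--max algebra of the FBR, via the clamping representation $t \mapsto \textrm{med}^{\prec^s}(p_i, t, q_i)$ (which gives between-ness) and the constrained-dictatorship condition (which gives exclusion from $\textrm{Int}\langle \underline{x}^s, \overline{x}^s\rangle$). This buys you a self-contained argument with no domain extension and only one application of Proposition \ref{prop:FBR}, at the cost of two structural lemmas about FBRs that the paper never needs to state or prove. Both work; the paper's version externalizes the algebra into Proposition \ref{prop:FBR}, yours internalizes it.

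One point to tighten: the interior-pinning property you justify (``if the dictator's peak is strictly inside $\langle \underline{x}^s, \overline{x}^s\rangle$, the outcome equals that peak for every profile'') is not literally the statement you apply (``if the \emph{outcome} is strictly inside, it equals the dictator's peak''). The applied version is true, but needs one more observation from the same formula: every coalition excluding $i^\ast$ contributes at most $\underline{x}^s$, and every coalition including $i^\ast$ contributes at most the dictator's peak $t^\ast$, so an outcome $\succ^s \underline{x}^s$ satisfies $a^s \preccurlyeq^s t^\ast$; while the singleton $\{i^\ast\}$ contributes at least $\min^{\prec^s}(t^\ast, \overline{x}^s)$, so an outcome $\prec^s \overline{x}^s$ satisfies $t^\ast \preccurlyeq^s a^s$. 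Hence an interior outcome equals $t^\ast$, and your two-case conclusion ($i^\ast = i$ contradicts $a^s \neq x^s$; $i^\ast \neq i$ contradicts $a^s \neq b^s$) goes through. You flag exactly this as the delicate point in your final paragraph and state it there in the correct form, so this is a presentational slip rather than a gap, but the written justification should be the computation above rather than its converse.
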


\begin{proof}
Since $f^1, \dots, f^m$ are unanimous,
the assembled SCF $f$ by construction must be unanimous, and hence is a rule.
We focus on showing strategy-proofness of $f$.
Since $\mathbb{D}$ is a multidimensional domain on $\prec$ w.r.t.~the thresholds $\underline{x}$ and $\overline{x}$,
we know $\mathbb{D} \subseteq \mathbb{D}_{\textrm{MH}}(\prec, \underline{x}, \overline{x})$ and hence
$[\mathbb{D}]^s \subseteq [\mathbb{D}_{\textrm{MH}}(\prec, \underline{x}, \overline{x})]^s$ for each $s \in M$.
First, by Proposition \ref{prop:FBR}, the marginal SCF $f^s: \big[[\mathbb{D}]^s\big]^n \rightarrow A^s$ is an FBR for each $s \in M\backslash M(\underline{x}, \overline{x})$, and the marginal SCF $f^t: \big[[\mathbb{D}]^t\big]^n \rightarrow A^t$ is an $(\underline{x}^t, \overline{x}^t)$-FBR for each $t\in M(\underline{x}, \overline{x})$.
Next, for each $s \in M$, adopting the fixed ballots $(b_J^s)_{J \subseteq N}$ of $f^s$,
we construct an FBR $\hat{f}^s: \big[[\mathbb{D}_{\textrm{MH}}(\prec, \underline{x}, \overline{x})]^s\big]^n \rightarrow A^s$, which by Proposition \ref{prop:FBR}
is a strategy-proof marginal rule as well.
Then, we assemble an SCF $\hat{f}: \mathbb{D}_{\textrm{MH}}(\prec, \underline{x}, \overline{x})^n \rightarrow A$ such that
$\hat{f}(P) = \big(\hat{f}^1([P]^1), \dots, \hat{f}^m([P]^m)\big)$ for all $P \in \mathbb{D}_{\textrm{MH}}(\prec, \underline{x}, \overline{x})^n$.
Given $\mathbb{D} \subseteq \mathbb{D}_{\textrm{MH}}(\prec, \underline{x}, \overline{x})$,
since $f^s$ and $\hat{f}^s$ share the same fixed ballots at each component $s \in M$,
it is true that
for all $(P_1, \dots, P_n) \in \mathbb{D}^n$,
\begin{align*}
f(P_1, \dots, P_n)
= &~\big(f^1([P_1]^1, \dots, [P_n]^1), \dots, f^m([P_1]^m, \dots, [P_n]^m)\big) \\
= &~\big(\hat{f}^1([P_1]^1, \dots, [P_n]^1), \dots, \hat{f}^m([P_1]^m, \dots, [P_n]^m)\big)
= \hat{f}(P_1, \dots, P_n).
\end{align*}
Hence, it suffices to show strategy-proofness of $\hat{f}$.

Given $i \in N$, $P_i, P_i' \in \mathbb{D}_{\textrm{MH}}(\prec, \underline{x}, \overline{x})$ and $P_{-i} \in \mathbb{D}_{\textrm{MH}}(\prec, \underline{x}, \overline{x})^{n-1}$,
let $\hat{f}(P_i,P_{-i}) = a$, $\hat{f}(P_i',P_{-i}) = b$ and $a \neq b$.
We show $a\mathrel{P_i}b$.
For notational convenience, let $r_1(P_i) = x$.
By the definition of multidimensional hybridness on $\prec$ w.r.t.~$\underline{x}$ and $\overline{x}$,
to show $a\mathrel{P_i}b$, it suffices to show that for all $s \in M(a,b)$, either $a^s = x^s$, or
$a^s \in \textrm{Int}\langle x^s, b^s\rangle$ and $a^s \notin \textrm{Int}\langle \underline{x}^s, \overline{x}^s\rangle$.
Given $s \in M(a,b)$,
we know either $a^s = x^s$ or $a^s \neq x^s$ holds.
Henceforth, we fix $a^s \neq x^s$, and
show $a^s \in \textrm{Int}\langle x^s, b^s\rangle$ and $a^s \notin \textrm{Int}\langle \underline{x}^s, \overline{x}^s\rangle$.
By the decomposition of $\hat{f}$,
we have $\hat{f}([P_i]^s, [P_{-i}]^s) = a^s$ and $\hat{f}([P_i']^s, [P_{-i}]^s) = b^s$.
Since $\hat{f}^s$, as an FBR, satisfies the tops-only property,
we have $\hat{f}([\hat{P}_i]^s, [P_{-i}]^s) = a^s$ for all $[\hat{P}_i]^s \in [\mathbb{D}_{\textrm{MH}}(\prec, \underline{x}, \overline{x})]^s$ such that $r_1([\hat{P}_i]^s) = r_1([P_i]^s) = x^s$.
Immediately, by strategy-proofness of $\hat{f}^s$, we infer that
for all $[\hat{P}_i]^s \in [\mathbb{D}_{\textrm{MH}}(\prec, \underline{x}, \overline{x})]^s$,
$\big[r_1([\hat{P}_i]^s) = x^s\big] \Rightarrow \big[a^s\mathrel{[\hat{P}_i]^s} b^s\big]$.
Since $[\mathbb{D}_{\textrm{MH}}(\prec, \underline{x}, \overline{x})]^s$ contains all
hybrid marginal preferences on $\prec^s$ w.r.t.~$\underline{x}^s$ and $\overline{x}^s$,
it is true that if $a^s \notin \textrm{Int}\langle x^s, b^s\rangle$
or $a^s \in \textrm{Int}\langle \underline{x}^s, \overline{x}^s\rangle$,
there exists an induced marginal preference of $[\mathbb{D}_{\textrm{MH}}(\prec, \underline{x}, \overline{x})]^s$ where $x^s$ is top-ranked, and $b^s$ is ranked above $a^s$.
Therefore, it must be the case that $a^s \in \textrm{Int}\langle x^s, b^s\rangle$ and $a^s \notin \textrm{Int}\langle \underline{x}^s, \overline{x}^s\rangle$, as required.
This completes the verification of the Lemma, and hence proves the sufficiency part of the Theorem.
\end{proof}

This completes the verification in \textbf{Part 2}, and we last turn to \textbf{Part 3}.

\medskip

\noindent
\textbf{(Necessity Part)}~
Let a rich domain $\mathbb{D}$ be a decomposable domain.
We identify two thresholds $\underline{x}$ and $\overline{x}$, and show that $\mathbb{D}$ is a multidimensional hybrid domain on $\prec$ w.r.t.~$\underline{x}$ and $\overline{x}$.
By diversity\textsuperscript{+},
we have two separable preferences $\underline{P}_i,\overline{P}_i \in \mathbb{D}$ that are complete reversals.
For each $s \in M$,
by relabeling elements as necessary, we can assume w.l.o.g.~that for all $a^s, b^s \in A^s$, $[a^s \prec^s b^s] \Rightarrow \big[a^s\mathrel{[\underline{P}_i]^s}b^s\; \textrm{and}\; b^s\mathrel{[\overline{P}_i]^s}a^s\big]$.

\begin{lemma}\label{lem:tops-onlymarginaldomain}
Given $s \in M$,
for all $n \geq 2$,
every strategy-proof marginal rule $f^s: \big[[\mathbb{D}]^s\big]^n \rightarrow A^s$ satisfies the tops-only property.
\end{lemma}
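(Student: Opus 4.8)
The plan is to deduce the tops-only property of an arbitrary strategy-proof marginal rule $f^s$ on $[\mathbb{D}]^s$ from the component-wise tops-only statement already established for full strategy-proof rules in Lemma \ref{lem:step2}. The bridge is the decomposable-domain hypothesis, which lets me \emph{embed} $f^s$ into a genuine strategy-proof rule on $\mathbb{D}^n$ whose $s$-th component reproduces $f^s$.

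First I would fix a strategy-proof marginal rule $f^s: \big[[\mathbb{D}]^s\big]^n \rightarrow A^s$ and, at every other component $t \in M\backslash \{s\}$, pick a dictatorial marginal rule $f^t: \big[[\mathbb{D}]^t\big]^n \rightarrow A^t$ (say with voter $1$ as the dictator), which is a strategy-proof marginal rule on the induced marginal domain. Assemble these into the SCF $f: \mathbb{D}^n \rightarrow A$ defined by $f(P) = \big(f^1([P]^1), \dots, f^m([P]^m)\big)$. By construction $f$ is decomposable and each of $f^1, \dots, f^m$ is a strategy-proof marginal rule, so the ``$\Leftarrow$'' direction of Definition \ref{def:decomposabledomain} guarantees that $f$ is a strategy-proof rule. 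In particular $f$ is a strategy-proof rule to which Lemma \ref{lem:step2} applies, and by the decomposition we have $f(P)^s = f^s([P_1]^s, \dots, [P_n]^s)$ for every $P \in \mathbb{D}^n$.

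Next I would translate tops of induced marginal preferences into tops of full preferences, using that for any $P_i \in \mathbb{D}$ the induced top satisfies $r_1([P_i]^s) = r_1(P_i)^s$ (since the peak of $P_i$ is the best alternative in each slice $(A^s, x^{-s})$ through it). Hence, given two profiles $([R_1]^s, \dots, [R_n]^s)$ and $([R_1']^s, \dots, [R_n']^s)$ in $\big[[\mathbb{D}]^s\big]^n$ with $r_1([R_i]^s) = r_1([R_i']^s)$ for all $i \in N$, any underlying preferences $R_i, R_i' \in \mathbb{D}$ inducing them satisfy $r_1(R_i)^s = r_1(R_i')^s$. Replacing voters one at a time --- passing from $(R_1', \dots, R_{i-1}', R_i, \dots, R_n)$ to $(R_1', \dots, R_{i-1}', R_i', R_{i+1}, \dots, R_n)$ --- each single-voter step keeps that voter's $s$-top fixed, so Lemma \ref{lem:step2} (applied with $t = s$) preserves the $s$-component of $f$ at every step. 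Chaining the $n$ steps yields $f(R_1, \dots, R_n)^s = f(R_1', \dots, R_n')^s$, which by the decomposition reads $f^s([R_1]^s, \dots, [R_n]^s) = f^s([R_1']^s, \dots, [R_n']^s)$, exactly the tops-only property of $f^s$.

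The only genuine work is in setting up the embedding correctly: one must check that the assembled $f$ is a well-defined map into $A = \times_{t \in M} A^t$ (immediate, since each $f^t$ lands in $A^t$), that it qualifies for the ``$\Leftarrow$'' direction of Definition \ref{def:decomposabledomain} rather than its ``decomposing'' direction, and that the value $f^s([P]^s)$ is recovered intrinsically as $f(P)^s$ independently of which preferences of $\mathbb{D}$ induce the given marginal profile. Beyond this bookkeeping the argument is a direct iteration of Lemma \ref{lem:step2}, requiring no new graph-theoretic input; the step I expect to require the most care is confirming that the decomposable-domain hypothesis is invoked in its assembling direction so that the constructed $f$ is legitimately strategy-proof.
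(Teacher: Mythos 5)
Your proposal is correct and takes essentially the same route as the paper: both embed $f^s$ into a full SCF by attaching dictatorial marginal rules at every other component, invoke the assembling (``$\Leftarrow$'') direction of the decomposable-domain hypothesis to certify that this SCF is a strategy-proof rule, and then apply Lemma \ref{lem:step2} with $t=s$ to force agreement of the $s$-component. The only cosmetic difference is that the paper argues by contradiction from a single-voter violation of the tops-only property, whereas you argue directly and handle multi-voter profile changes by the standard one-voter-at-a-time chain (the same device the paper uses in Claim 1 of Lemma \ref{lem:decomposition}).
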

\begin{proof}
Suppose not, i.e., we have a non-tops-only and strategy-proof marginal rule $f^s: \big[[\mathbb{D}]^s\big]^n \rightarrow A^s$ for some $n \geq 2$.
There must exist $i \in N$,
$[P_i]^s,[P_i']^s \in [\mathbb{D}]^s$ with $r_1([P_i]^s) = r_1([P_i']^s)$ and
$[P_{-i}]^s \in \big[[\mathbb{D}]^s\big]^{n-1}$ such that
$f^s([P_i]^s, [P_{-i}]^{s}) \neq f^s([P_i']^s, [P_{-i}]^{s})$.
For each $t \in M\backslash \{s\}$, we fix a dictatorial marginal SCF $f^t: \big[[\mathbb{D}]^t\big]^n \rightarrow A^t$.
We then assemble an SCF $f: \mathbb{D}^n \rightarrow A$ such that
$f(P)=\big(f^1([P]^1), \dots, f^m([P]^m)\big)$ for all $P \in \mathbb{D}^n$.
Since all marginal SCFs are strategy-proof marginal rules,
by the decomposable-domain hypothesis, $f$ is a strategy-proof rule.
Immediately, $f$ satisfies the tops-only property by Lemma \ref{lem:tops-onlydomain}, and triggers Lemma \ref{lem:step2}.
By construction, we have $f(P_i, P_{-i})^s
= f^s\big([P_i]^s, [P_{-i}]^{s}\big) \neq f^s\big([P_i']^s, [P_{-i}]^{s}\big) = f(P_i', P_{-i})^s$.
However, by Lemma \ref{lem:step2},
since $r_1(P_i)^s = r_1([P_i]^s) = r_1([P_i']^s) = r_1(P_i')^s$,
we have $f(P_i, P_{-i})^s = f(P_i', P_{-i})^s$ - a contradiction.
\end{proof}

\begin{lemma}\label{lem:connectedgraph}
Given $s \in M$, consider the induced marginal domain $[\mathbb{D}]^s$.
It is true that $G_{\approx}^{A^s}$ is a connected graph.
\end{lemma}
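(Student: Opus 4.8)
The plan is to transfer connectivity from the strong connectedness\textsuperscript{+} graph on a single fiber, which Lemma \ref{lem:path-connectedness} already guarantees is connected, over to the strong connectedness graph $G_{\approx}^{A^s}$ on the induced marginal domain. Fix any $x^{-s} \in A^{-s}$ and consider the fiber $(A^s, x^{-s})$, which is in obvious bijection with $A^s$ via $(a^s, x^{-s}) \leftrightarrow a^s$. By Lemma \ref{lem:path-connectedness}, $G_{\approx^+}^{(A^s, x^{-s})}$ is connected, so it suffices to show that each edge of this fiber graph projects to an edge of $G_{\approx}^{A^s}$, that is, that $(a^s, x^{-s}) \approx^+ (b^s, x^{-s})$ implies $a^s \approx b^s$ in $[\mathbb{D}]^s$. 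Granting this single-edge claim, the bijective image of the connected fiber graph is a connected spanning subgraph of $G_{\approx}^{A^s}$, whence $G_{\approx}^{A^s}$ is connected.

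The core of the argument is thus the edge claim, which I would establish by exhibiting the two induced marginal preferences demanded by the definition of $\approx$. Suppose $(a^s, x^{-s}) \approx^+ (b^s, x^{-s})$, witnessed by separable $P_i, P_i' \in \mathbb{D}$ with $r_1(P_i) = (a^s, x^{-s})$, $r_1(P_i') = (b^s, x^{-s})$ and $P_i \sim^+ P_i'$. Since the two peaks differ only on component $s$ while an adjacency\textsuperscript{+} swap keeps all other coordinates fixed, the swap must occur on component $s$ between $a^s$ and $b^s$. Reading condition (i) of adjacency\textsuperscript{+} in the fiber $z^{-s} = x^{-s}$, and using that one member of the swapped adjacent pair is the overall peak, I obtain $(a^s, x^{-s}) = r_1(P_i) = r_2(P_i')$ and $(b^s, x^{-s}) = r_1(P_i') = r_2(P_i)$. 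Passing to induced marginals, which rank $A^s$ by comparing the alternatives $(\cdot, x^{-s})$ inside the overall preference, this yields $r_1([P_i]^s) = a^s$, $r_2([P_i]^s) = b^s$ and $r_1([P_i']^s) = b^s$, $r_2([P_i']^s) = a^s$.

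It then remains to check that $[P_i]^s$ and $[P_i']^s$ agree from rank $3$ onwards, the last requirement in the definition of $\approx$. For this I would invoke condition (ii) of adjacency\textsuperscript{+}: every $c \in A$ with $c^s \notin \{a^s, b^s\}$ occupies the same rank in $P_i$ and $P_i'$. Restricting to the fiber, the alternatives $(y^s, x^{-s})$ with $y^s \notin \{a^s, b^s\}$ sit at identical positions in $P_i$ and $P_i'$, so their relative order is preserved, giving $r_k([P_i]^s) = r_k([P_i']^s)$ for all $k \geq 3$. Together with the top-two behaviour above, $[P_i]^s$ and $[P_i']^s$ are exactly the pair witnessing $a^s \approx b^s$. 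The main obstacle is precisely this rank bookkeeping through the induced-marginal construction — confirming that the swapped pair lands at the top two ranks and that the lower ranks are left undisturbed; once it is done, connectivity of $G_{\approx}^{A^s}$ follows immediately from Lemma \ref{lem:path-connectedness}.
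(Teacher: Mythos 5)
Your proposal is correct and follows essentially the same route as the paper's own proof: both invoke Lemma \ref{lem:path-connectedness} to obtain a path in $G_{\approx^+}^{(A^s,\, x^{-s})}$ within a single fiber, and then show that each adjacency\textsuperscript{+} edge along that path, via its witnessing pair $P_i \sim^+ P_i'$, induces marginal preferences $[P_i]^s, [P_i']^s$ that witness strong connectedness of the corresponding elements of $A^s$. The only difference is presentational — the paper states the implication for the induced marginals without elaboration, whereas you spell out the rank bookkeeping (the swap must occur on component $s$, condition (i) pins the swapped pair at ranks $1$ and $2$, and condition (ii) fixes ranks $3$ onward) — which fills in exactly the details the paper leaves implicit.
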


\begin{proof}
Given distinct $a^s, b^s \in A^s$, we construct a path in $G_{\approx}^{A^s}$ connecting $a^s$ and $b^s$.
Fixing arbitrary $x^{-s} \in A^s$, we have the alternatives $(a^s, x^{-s})$ and $(b^s, x^{-s})$.
Since $G_{\approx^+}^{(A^s, \, x^{-s})}$ is a connected graph by Lemma \ref{lem:path-connectedness},
we have a path $(x_1, \dots, x_v)$ in $G_{\approx^+}^{(A^s, \, x^{-s})}$ connecting
$(a^s, x^{-s})$ and $(b^s, x^{-s})$.
Clearly, $x_1^{-s} = \dots = x_v^{-s} = x^{-s}$ and $x_k^s \neq x_{k'}^s$ for all $1 \leq k<k' \leq v$.
For each $1 \leq k < v$, by $x_k \approx^+ x_{k+1}$,
we have $P_i,P_i' \in \mathbb{D}$ such that
$r_1(P_i) = x_k$, $r_1(P_i') = x_{k+1}$ and $P_i \sim^+ P_i'$,
which implies $r_1([P_i]^s) = r_2([P_i']^s) = x_k^s$, $r_1([P_i']^s) = r_2([P_i]^s) = x_{k+1}^s$ and
$r_{\ell}([P_i]^s) = r_{\ell}([P_i']^s)$ for all $\ell \in \{3, \dots, |A^s|\}$.
Hence, we have $x_k^s \approx x_{k+1}^s$. Thus, $(x_1^s, \dots, x_v^s)$ is a path in $G_{\approx}^{A^s}$ connecting $a^s$ and $b^s$, as required.
\end{proof}

Now, given $s \in M$,
$G_{\approx}^{A^s}$ is a connected graph,
$[\mathbb{D}]^s$ contains complete reversals $[\underline{P}_i]^s$ and $[\overline{P}_i]^s$, and $[\mathbb{D}]^s$ is a top-only marginal domain
(i.e., every strategy-proof marginal rule satisfies the tops-only property).
Then, by Corollary 2 of \citet{CZ2023},
we can infer the following two conditions:

\begin{description}
\item[\rm \textbf{Condition (a)}] there exist marginal thresholds $\underline{x}^s, \overline{x}^s \in A^s$ (either identical or not) such that all induced marginal preferences of $[\mathbb{D}]^s$ are hybrid on $\prec^s$ w.r.t.~$\underline{x}^s$ and $\overline{x}^s$, and

\item[\rm \textbf{Condition (b)}] if $\underline{x}^s \neq \overline{x}^s$,
there exist no linear order $\lhd^s$ over $A^s$ and no marginal thresholds $\hat{\underline{x}}^s,\hat{\overline{x}}^s \in A^s$ (either identical or not) such that
all induced marginal preferences of $[\mathbb{D}]^s$ are hybrid on $\lhd^s$ w.r.t.~$\hat{\underline{x}}^s$ and $\hat{\overline{x}}^s$, and
$\big\langle \hat{\underline{x}}^s, \hat{\overline{x}}^s\big\rangle^{\lhd^s} \coloneqq
\big\{x^s \in A^s: \hat{\underline{x}}^s\, \lhd^s\, x^s\, \lhd^s\, \hat{\overline{x}}^s \big\}\cup \big\{\hat{\underline{x}}^s, \hat{\overline{x}}^s\big\} \subset \langle \underline{x}^s, \overline{x}^s\rangle$.\footnote{Corollary 2 of \citet{CZ2023} provides a classification of non-dictatorial and tops-only domains (i.e., a non-dictatorial domain admits a non-dictatorial, strategy-proof rule, while a tops-only domain endogenizes the tops-only property on all strategy-proof rules) under a richness condition called \emph{unidimensionality} (i.e., a unidimensional domain is analogous to a marginal domain $\mathbb{D}^s$ that induces a connected graph $G_{\approx}^{A^s}$, contains complete reversals, and satisfies an additional condition called \emph{leaf symmetry}), according to the existence and non-existence of anonymous and strategy-proof rules.
Leaf symmetry plays a trivial role in the classification - it is only adopted to ensure that all non-dictatorial domains in question are able to be fully characterized by an explicit preference restriction.
If we enlarge the classification to cover both non-dictatorial and dictatorial domains,
the condition of leaf symmetry can be removed without affecting their proof for the classification.
Then, the classification provides us \textbf{Conditions (a)} and \textbf{(b)} above.}
\end{description}

Now, by \textbf{Condition (a)}, we
assemble the thresholds $\underline{x} = (\underline{x}^1, \dots, \underline{x}^m)$ and $\overline{x} = (\overline{x}^1, \dots, \overline{x}^m)$.
We first show that all preferences of $\mathbb{D}$ are multidimensional hybrid on $\prec$ w.r.t.~$\underline{x}$ and $\overline{x}$.

\begin{lemma}\label{lem:MH}
\makebox{All preferences of $\mathbb{D}$ are multidimensional hybrid on $\prec$ w.r.t.~$\underline{x}$ and $\overline{x}$.}
\end{lemma}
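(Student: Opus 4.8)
The plan is to fix an arbitrary $P_i \in \mathbb{D}$, say $r_1(P_i) = x$, together with similar alternatives $a,b$, say $M(a,b)=\{s\}$, and to write $a=(a^s,z^{-s})$, $b=(b^s,z^{-s})$; I must verify conditions (i) and (ii) of Definition \ref{def:MH}, i.e.\ produce $a\mathrel{P_i}b$ in each case. The common device is the decomposable-domain hypothesis in the direction ``$\Leftarrow$'': I choose, on each component, a strategy-proof marginal rule on the induced marginal domain and assemble the resulting decomposable SCF $f$, which is then a strategy-proof rule on $\mathbb{D}$. It suffices to work with three voters (Definition \ref{def:decomposabledomain} quantifies over all $n\ge 2$, while the conclusion concerns the single preference $P_i$). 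I let voter $1$ be the ``test'' voter reporting $P_i$, and on every component $t\neq s$ I let voter $2$ be a dictator whose peak has $-s$-coordinate $z^{-s}$, which pins $f(\,\cdot\,)^t=z^t$ for every $t\neq s$ regardless of voter $1$'s report. It then remains to design the marginal rule $f^s$ on $[\mathbb{D}]^s$ so that the outcome's $s$-coordinate is $a^s$ when voter $1$ reports $P_i$ and is $b^s$ at a suitable one-person deviation $P_1'$; strategy-proofness of $f$ then gives $f(P_i,P_{-1})=a\mathrel{P_i}b=f(P_1',P_{-1})$. All marginal rules used are FBRs (or $(\underline{x}^s,\overline{x}^s)$-FBRs), which are strategy-proof on $[\mathbb{D}]^s$ by the sufficiency direction of Proposition \ref{prop:FBR}; this applies because Condition (a) guarantees that $[\mathbb{D}]^s$ consists of hybrid marginal preferences.

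For condition (i) I take $a^s=x^s$ and let voter $1$ be a dictator on component $s$. Then $f^s$ returns voter $1$'s $s$-peak, so reporting $P_i$ yields the $s$-coordinate $x^s=a^s$ while a deviation to any preference with $s$-peak $b^s$ yields $b^s$; combined with the $-s$-dictatorship this produces the outcomes $a$ and $b$, and strategy-proofness gives $a\mathrel{P_i}b$. Since a dictatorship is both an FBR and an $(\underline{x}^s,\overline{x}^s)$-FBR, no further check is needed.

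For condition (ii) I have $a^s\in\mathrm{Int}\langle x^s,b^s\rangle$ and $a^s\notin\mathrm{Int}\langle\underline{x}^s,\overline{x}^s\rangle$. The idea is to make $f^s$, with the peaks of voters $2,3$ frozen at $\min^{\prec^s}(A^s)$ or $\max^{\prec^s}(A^s)$, act on voter $1$'s peak $p_1$ as the median of $p_1$, the value $a^s$, and one endpoint of $\prec^s$: namely $\max(p_1,a^s)$ when $x^s\prec^s a^s$ (so $a^s$ acts as a floor, sending $x^s\mapsto a^s$ and $b^s\mapsto b^s$) and $\min(p_1,a^s)$ when $a^s\prec^s x^s$ (so $a^s$ acts as a ceiling). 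Either way voter $1$'s peak $x^s$ maps to $a^s$ and a deviation reaches $b^s$, which is all I need. The decisive point is that $a^s\notin\mathrm{Int}\langle\underline{x}^s,\overline{x}^s\rangle$ forces $a^s\preccurlyeq\underline{x}^s$ or $a^s\succcurlyeq\overline{x}^s$, so that $a^s$ is an admissible fixed ballot: when $a^s\succcurlyeq\overline{x}^s$ it may serve as $b_J^s$ for coalitions containing the constrained dictator, and when $a^s\preccurlyeq\underline{x}^s$ for coalitions excluding it. The constrained dictator is then taken to be voter $1$ itself precisely when $x^s$ lies on the extreme side of $a^s$ (away from $\langle\underline{x}^s,\overline{x}^s\rangle$), in which case $a^s$ clamps voter $1$ through its own coalition, and is taken to be a voter frozen at an extreme peak (supplying the floor or ceiling at $a^s$) when $x^s$ lies between $a^s$ and the interval.

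The main obstacle is precisely this last construction. Because for $\underline{x}^s\neq\overline{x}^s$ the strategy-proof marginal rules on $[\mathbb{D}]^s$ are confined to $(\underline{x}^s,\overline{x}^s)$-FBRs, the ``clamp at $a^s$'' must be engineered so as never to violate the constrained-dictatorship condition on $\langle\underline{x}^s,\overline{x}^s\rangle$; a naive phantom at $a^s$ carried by a fixed dictator typically installs a spurious floor or ceiling that blocks the second target $b^s$. This forces a short case analysis according to whether $a^s$ lies weakly below $\underline{x}^s$ or weakly above $\overline{x}^s$ and to the orientation of $x^s,a^s,b^s$ along $\prec^s$ (four cases), each resolved by the choice of dictator and fixed ballots described above, with the hypothesis $a^s\notin\mathrm{Int}\langle\underline{x}^s,\overline{x}^s\rangle$ guaranteeing feasibility in every case; when $\underline{x}^s=\overline{x}^s$ the constraint is vacuous and any FBR clamp suffices. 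Assembling these marginal rules with the $-s$-dictatorships and invoking strategy-proofness establishes both (i) and (ii), so every $P_i\in\mathbb{D}$ is multidimensional hybrid on $\prec$ with respect to $\underline{x}$ and $\overline{x}$.
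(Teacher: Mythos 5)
Your proposal is correct and takes essentially the same route as the paper: both exploit the ``$\Leftarrow$'' direction of the decomposable-domain hypothesis to assemble a strategy-proof rule from a clamping $(\underline{x}^s,\overline{x}^s)$-FBR on component $s$ (justified via the sufficiency part of Proposition \ref{prop:FBR} together with Condition (a)) and dictatorships on the other components, and then read off $a\mathrel{P_i}b$ from strategy-proofness, resting on the same key feasibility point that $a^s\notin\textrm{Int}\langle\underline{x}^s,\overline{x}^s\rangle$ makes $a^s$ an admissible fixed ballot under the constrained-dictatorship condition. The only difference is bookkeeping: the paper fixes one particular $(\underline{x}^s,\overline{x}^s)$-FBR (and a median rule when $\underline{x}^s=\overline{x}^s$), computes its closed form, and varies which voter plays the manipulator across subcases, whereas you tailor the clamp FBR to each of your four cases so that voter $1$ is always the manipulator --- an equivalent case analysis.
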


\begin{proof}
Fixing a preference $P_i \in \mathbb{D}$ and two similar alternatives $a,b \in A$,
let $r_1(P_i) = x$, $M(a,b) = \{s\}$,
and either $a^s = x^s$, or $a^s \in \textrm{Int}\langle x^s, b^s\rangle$ and $a^s \notin \textrm{Int}\langle \underline{x}^s, \overline{x}^s\rangle$.
We show $a\mathrel{P_i}b$.
For notational convenience, let $a = (a^s, z^{-s})$ and $b=(b^s, z^{-s})$.
First, by the proof of Lemma \ref{lem:tops-onlydomain}, we know that $P_i$ is a top-separable preference.
Hence, if $a^s = x^s$, it is clear that $a\mathrel{P_i}b$, as required.
Henceforth, let $a^s \in \textrm{Int}\langle x^s, b^s\rangle$ and $a^s \notin \textrm{Int}\langle \underline{x}^s, \overline{x}^s\rangle$.
Clearly, there are two cases: (1) $\underline{x}^s = \overline{x}^s$ and (2) $\underline{x}^s \neq \overline{x}^s$.

In case (1), all induced marginal preferences of $[\mathbb{D}]^s$ are single-peaked on $\prec^s$.
Let $N = \{1,2,3\}$. We construct a median marginal rule: for all $([P_1]^s, [P_2]^s, [P_3]^s) \in \big[[\mathbb{D}]^s\big]^3$,
$f^s([P_1]^s, [P_2]^s, [P_3]^s) = \textrm{med}^{\prec^s}\big(r_1([P_1]^s), r_1([P_2]^s), r_1([P_3]^s)\big)$.
It is clear that $f^s$ is a strategy-proof marginal rule.
For each $t \in M\backslash \{s\}$, we construct a dictatorial marginal rule $f^t: \big[[\mathbb{D}]^t\big]^3 \rightarrow A^t$, where voter $2$ is fixed to be the dictator.
It is also true that $f^t$ is a strategy-proof marginal rule.
We assemble the SCF $f: \mathbb{D}^3 \rightarrow A$ such that for all $(P_1, P_2, P_3) \in \mathbb{D}^3$,
$f(P_1, P_2,P_3) = \big(f^1([P_1]^1, [P_2]^1,[P_3]^1), \dots, f^m([P_1]^m, [P_2]^m, [P_3]^m)\big)$.
Clearly, by the decomposable-domain hypothesis, $f$ is a strategy-proof rule.
Now, given $P_1 = P_i$ and $P_1', P_2, P_3 \in \mathbb{D}$ such that $r_1(P_1') = b$, $r_1(P_2) = a$ and $r_1(P_3) =b$,
we have $f(P_1, P_2, P_3)^s = \textrm{med}^{\prec^s}(x^s, a^s, b^s\big) = a^s$,
$f(P_1', P_2, P_3)^s = \textrm{med}^{\prec^s}(b^s, a^s, b^s\big) = b^s$
and $f(P_1, P_2, P_3)^t = f(P_1', P_2, P_3)^t = r_1([P_2]^t) = z^t$ for all $t \in M\backslash \{s\}$.
Hence, $f(P_1, P_2, P_3) = (a^s, z^{-s}) = a$ and $f(P_1', P_2, P_3) = (b^s, z^{-s}) = b$
which by strategy-proofness imply $a\mathrel{P_1}b$, as required.

In case (2), all induced marginal preferences of $[\mathbb{D}]^s$ are hybrid on $\prec^s$ w.r.t.~$\underline{x}^s$ and $\overline{x}^s$.
Let $N = \{1,2,3\}$. We construct an $(\underline{x}^s, \overline{x}^s)$-FBR $f^s: \big[[\mathbb{D}]^s\big]^3 \rightarrow A^s$ such that
the fixed ballots are specified as follows: $b_{\emptyset}^s =b_{\{2\}}^s=b_{\{3\}}^s =\min^{\prec^s}(A^s)$,
$b_{\{2,3\}}^s = \underline{x}^s$, $b_{\{1\}}^s = \overline{x}^s$ and $b_{\{1,2\}}^s = b_{\{1,3\}}^s = b_{\{1,2,3\}}^s =\max^{\prec^s}(A^s)$.
By Proposition \ref{prop:FBR}, $f^s$ is a strategy-proof marginal rule.\medskip

\noindent
\textsc{Claim 1}: Given $[P_1]^s, [P_2]^s, [P_3]^s \in [\mathbb{D}]^s$, we have \vspace{0.5em}
\begin{align*}
\footnotesize{f^s([P_1]^s, [P_2]^s, [P_3]^s)
=\left\{\!\!
\begin{array}{ll}
r_1([P_1]^s) & \textrm{if}\; r_1([P_1]^s) \in \langle \underline{x}^s, \overline{x}^s\rangle,\\
\textrm{med}^{\prec^s} \big(r_1([P_1]^s), r_1([P_2]^s), r_1([P_3]^s), r_1([P_1]^s), \underline{x}^s\big) & \textrm{if}\; r_1([P_1]^s) \prec^s \underline{x}^s,\\
\textrm{med}^{\prec^s} \big(r_1([P_1]^s), r_1([P_2]^s), r_1([P_3]^s), r_1([P_1]^s), \overline{x}^s\big) & \textrm{if}\; \overline{x}^s \prec^s r_1([P_1]^s).
\end{array}
\right.}
\end{align*}
\vspace{0.1em}

According to the fixed ballots $\big(b_J^s\big)_{J \subseteq N}$ specified above and the definition of the $(\underline{x}^s, \overline{x}^s)$-FBR, by eliminating redundant items, we have
\begin{align*}
f^s([P_1]^s, [P_2]^s, [P_3]^s)
=
{\footnotesize
\max\nolimits^{\prec^s}\!\!
\left(\!\!
\begin{array}{l}
\min\nolimits^{\prec^s}\!\!\big(\overline{x}^s, r_1([P_1]^s)\big),\;~~~~~~~~
\min\nolimits^{\prec^s}\!\!\big(r_1([P_1]^s), r_1([P_2]^s)\big),\\
\min\nolimits^{\prec^s}\!\!\big(r_1([P_1]^s), r_1([P_3]^s)\big), \; \min\nolimits^{\prec^s}\!\!\big(\underline{x}^s, r_1([P_2]^s), r_1([P_3]^s)\big)
\end{array}
\!\!\right). }
\end{align*}

If $r_1([P_1]^s) \in \langle \underline{x}^s, \overline{x}^s\rangle$
which implies $\min^{\prec^s}\!\big(\overline{x}^s, r_1([P_1]^s)\big) = r_1([P_1]^s)$,
we can further calculate that
$f^s([P_1]^s, [P_2]^s, [P_3]^s) = r_1([P_1]^s)$.

If $r_1([P_1]^s) \prec^s \underline{x}^s$ which implies $\min^{\prec^s}\!\big(\overline{x}^s, r_1([P_1]^s)\big) = r_1([P_1]^s)$, we can further calculate that
\begin{align*}
f^s([P_1]^s, [P_2]^s, [P_3]^s) = & \max\nolimits^{\prec^s}\!\!\Big(
r_1([P_1]^s), \min\nolimits^{\prec^s}\!\!\big(\underline{x}^s, r_1([P_2]^s), r_1([P_3]^s)\big)\!\Big)\\
= &~\textrm{med}^{\prec^s} \big(r_1([P_1]^s), r_1([P_2]^s), r_1([P_3]^s), r_1([P_1]^s), \underline{x}^s\big).
\end{align*}

If $\overline{x}^s \prec^s r_1([P_1]^s)$ which implies $\min^{\prec^s}\big(\overline{x}^s, r_1([P_1]^s)\big) = \overline{x}^s$, we can further calculate that
\begin{align*}
f^s([P_1]^s, [P_2]^s, [P_3]^s) = & \max\nolimits^{\prec^s}\!\!\Big(
\overline{x}^s, \min\nolimits^{\prec^s}\!\!\big(r_1([P_1]^s), r_1([P_2]^s)\big),
\min\nolimits^{\prec^s}\!\!\big(r_1([P_1]^s), r_1([P_3]^s)\big)\!\Big)\\
=& ~\textrm{med}^{\prec^s} \big(r_1([P_1]^s), r_1([P_2]^s), r_1([P_3]^s), r_1([P_1]^s), \overline{x}^s\big).
\end{align*}

This completes the verification of the claim.\medskip

Furthermore, for each $t \in M\backslash \{s\}$, we construct a dictatorial marginal SCF $f^t: \big[[\mathbb{D}]^t\big]^3 \rightarrow A^t$, where voter $3$ is fixed to be the dictator.
Then, we assemble the SCF $f: \mathbb{D}^3 \rightarrow A$ such that for all $(P_1, P_2, P_3) \in \mathbb{D}^3$,
$f(P_1, P_2,P_3) = \big(f^1([P_1]^1, [P_2]^1,[P_3]^1), \dots, f^m([P_1]^m, [P_2]^m, [P_3]^m)\big)$.
Clearly, by the decomposable-domain hypothesis, $f$ is a strategy-proof rule.
There are three subcases: (i) $x^s \in \langle \underline{x}^s, \overline{x}^s\rangle$,
(ii) $x^s \prec^s \underline{x}^s$ and (iii) $\overline{x}^s \prec^s x^s$.

In subcase (i), since $a^s \in \textrm{Int}\langle x^s, b^s\rangle$ and $a^s \notin \textrm{Int}\langle \underline{x}^s, \overline{x}^s\rangle$,
it is true that either $b^s \prec^s a^s \preccurlyeq^s \underline{x}^s \preccurlyeq^s x^s \preccurlyeq^s \overline{x}^s$ and $a^s \prec^s x^s$, or $\underline{x}^s \preccurlyeq^s x^s \preccurlyeq^s \overline{x}^s \preccurlyeq^s a^s \prec^s b^s$ and $x^s \prec^s a^s$.
Fix $P_2 = P_i$ and $P_1,P_2',P_3 \in \mathbb{D}$ such that $r_1(P_1) = b$, $r_1(P_2') = a$ and $r_1(P_3) = a$.
If $b^s \prec^s a^s \preccurlyeq^s \underline{x}^s \preccurlyeq^s x^s \preccurlyeq^s \overline{x}^s$ and $a^s \prec^s x^s$ hold, by Claim 1, we have
$f(P_1, P_2, P_3)^s = \textrm{med}^{\prec^s} \big(b^s, x^s, a^s; b^s, \underline{x}^s\big)
= a^s$ and
$f(P_1, P_2', P_3)^s = \textrm{med}^{\prec^s} \big(b^s, b^s, a^s; b^s, \underline{x}^s\big)
= b^s$.
Symmetrically, if $\underline{x}^s \preccurlyeq^s x^s \preccurlyeq^s \overline{x}^s \preccurlyeq^s a^s \prec^s b^s$ and $x^s \prec^s a^s$ hold, by Claim 1, we also have
$f(P_1, P_2, P_3)^s = \textrm{med}^{\prec^s} \big(b^s, x^s, a^s; b^s, \overline{x}^s\big)
= a^s$ and
$f(P_1, P_2', P_3)^s = \textrm{med}^{\prec^s} \big(b^s, b^s, a^s; b^s, \overline{x}^s\big)
= b^s$.
For all $t \in M\backslash \{s\}$, $f(P_1, P_2, P_3)^t = f(P_1, P_2', P_3)^t = r_1([P_3]^t) = z^t$.
Hence, $f(P_1, P_2, P_3) = (a^s, z^{-s}) = a$ and $f(P_1, P_2', P_3) = (b^s, z^{-s}) = b$
which by strategy-proofness imply $a\mathrel{P_2}b$, as required.


Subcases (ii) and (iii) are symmetric.
We focus on subcase (ii).
Since $a^s \in \textrm{Int}\langle x^s, b^s\rangle$ and $a^s \notin \textrm{Int}\langle \underline{x}^s, \overline{x}^s\rangle$,
it is true that either $b^s \prec^s a^s \prec^s  x^s \prec^s \underline{x}^s$,
or $x^s \prec^s \underline{x}^s \prec^s \overline{x}^s \preccurlyeq^s a^s \prec^s b^s$,
or $x^s \prec^s a^s \preccurlyeq^s \underline{x}^s$ and $x^s \prec^s a^s \prec^s b^s$.
If $b^s \prec^s a^s \prec^s  x^s \prec^s \underline{x}^s$
or $x^s \prec^s \underline{x}^s \prec^s \overline{x}^s \prec^s a^s \prec^s b^s$ holds,
similar to the verification in subcase (i),
given $P_2 = P_i$ and $P_1,P_2',P_3 \in \mathbb{D}$
such that $r_1(P_1) = b$, $r_1(P_2') = a$ and $r_1(P_3) = a$,
we have $f(P_1, P_2, P_3) = (a^s, z^{-s}) = a$ and $f(P_1, P_2', P_3) = (b^s, z^{-s}) = b$
which by strategy-proofness imply $a\mathrel{P_2}b$, as required.
If $x^s \prec^s a^s \prec^s \underline{x}^s$ and $x^s \prec^s a^s \prec^s b^s$ hold,
fix $P_1 = P_i$ and $P_1',P_2,P_3 \in \mathbb{D}$
such that $r_1(P_1') = b$, $r_1(P_2) = a$ and $r_1(P_3) = b$.
Then,we have
$f(P_1, P_2, P_3)^s = \textrm{med}^{\prec^s} \big(x^s, a^s, b^s, x^s, \underline{x}^s\big)=a^s$ and
$f(P_1', P_2, P_3)^s = \textrm{med}^{\prec^s} \big(b^s, a^s, b^s, b^s, \underline{x}^s\big)=b^s$ by Claim 1.
For all $t \in M\backslash \{s\}$,
$f(P_1, P_2, P_3)^t = f(P_1', P_2, P_3)^t = r_1([P_3]^t) = z^t$.
Hence, $f(P_1, P_2, P_3) = (a^s, z^{-s}) = a$ and $f(P_1', P_2, P_3) = (b^s, z^{-s}) = b$
which by strategy-proofness imply $a\mathrel{P_1}b$, as required.

In conclusion, $P_i$ is multidimensional hybrid on $\prec$ w.r.t.~$\underline{x}$ and $\overline{x}$.
\end{proof}

By Lemma \ref{lem:connectedgraph}, we know that
for each $s \in M$, $G_{\approx}^{A^s}$ is a connected graph,
which partly meets condition (ii) of Definition \ref{def:AMH}.
We complete the verification of condition (ii) of Definition \ref{def:AMH} in the last two lemmas.

\begin{lemma}\label{lem:implication}
Given $s \in M$, consider the induced marginal domain $[\mathbb{D}]^s$.
Given $a^s, b^s, c^s \in A^s$ such that $a^s \neq c^s$ and $b^s \neq c^s$,
if $b^s$ is included in all paths in $G_{\approx}^{A^s}$ connecting $a^s$ and $c^s$,
we have $b^s\mathrel{[P_i]^s}c^s$ for all $[P_i]^s \in [\mathbb{D}]^s$ with $r_1([P_i]^s) = a^s$.
\end{lemma}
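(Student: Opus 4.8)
The plan is to argue by contradiction and to track the relative ranking of two fixed alternatives along a path of preferences which, by the separation hypothesis, is forced to move its peak across $b^s$. Fix $s\in M$ and an induced marginal $[P_i]^s$ with $r_1([P_i]^s)=a^s$, and choose a representative $P_i\in\mathbb{D}$ with $r_1(P_i)=(a^s,w^{-s})$ inducing it, so that $[P_i]^s$ is exactly the restriction of $P_i$ to the fiber $(A^s,w^{-s})$. Then $b^s\mathrel{[P_i]^s}c^s$ is equivalent to $(b^s,w^{-s})\mathrel{P_i}(c^s,w^{-s})$, and I assume for contradiction that $(c^s,w^{-s})\mathrel{P_i}(b^s,w^{-s})$. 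By minimal richness I pick $P_i'\in\mathbb{D}$ with $r_1(P_i')=(c^s,w^{-s})$; since $r_1(P_i)$ and $r_1(P_i')$ lie in the same fiber they are similar, so the no-detour condition of the Exterior$^+$ property applies to them.

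Next I would produce a fiber-confined path and project it onto $G_{\approx}^{A^s}$. By the no-detour condition there is a path $\pi=(P_{i|1},\dots,P_{i|w})$ in $G_{\sim/\sim^+}^{\mathbb{D}}$ from $P_i$ to $P_i'$ with $r_1(P_{i|k})\in(A^s,w^{-s})$ for every $k$. Exactly as in the proof of Lemma \ref{lem:path-connectedness}, two consecutive preferences on $\pi$ with distinct peaks must be related by $\sim^+$ (the plain-adjacency case is excluded because both are top-separable by Lemma \ref{lem:tops-onlydomain}), so the sequence of peaks, after deleting repetitions, yields a walk in $G_{\approx}^{A^s}$ from $a^s$ to $c^s$. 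Since $b^s$ lies on every path connecting $a^s$ and $c^s$, this walk must pass through $b^s$; let $P_{i|k^*}$ be the first preference on $\pi$ whose peak is $(b^s,w^{-s})$. All peaks $r_1(P_{i|1}),\dots,r_1(P_{i|k^*-1})$ then lie in the connected component of $a^s$ in $G_{\approx}^{A^s}\setminus\{b^s\}$, which contains neither $b^s$ nor $c^s$.

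Then I would track the pair $\{(b^s,w^{-s}),(c^s,w^{-s})\}$ along the initial segment $P_{i|1},\dots,P_{i|k^*}$. At $P_{i|1}=P_i$ the order is $(c^s,w^{-s})$ above $(b^s,w^{-s})$ by the contradiction hypothesis, whereas at $P_{i|k^*}$, whose peak is $(b^s,w^{-s})$, top-separability forces $(b^s,w^{-s})$ above $(c^s,w^{-s})$. A peak-changing ($\sim^+$) step reverses this pair only if the two peaks of that step are $b^s$ and $c^s$, which never occurs before $k^*$ (the peaks there avoid both $b^s$ and $c^s$, and the step into $P_{i|k^*}$ swaps a component-element with $b^s$, not with $c^s$). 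Hence the reversal must be caused by a peak-preserving transposition of $(b^s,w^{-s})$ and $(c^s,w^{-s})$.

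Ruling out this peak-preserving transposition is the main obstacle, and it is exactly where the first (no-restoration) condition of the Exterior$^+$ property enters: using it I would show that the relative ranking of these two alternatives cannot be reversed and then restored, so that it stays fixed at $(c^s,w^{-s})$ above $(b^s,w^{-s})$ all the way up to the crossing, contradicting that $P_{i|k^*}$ has peak $(b^s,w^{-s})$. The delicate point is to obtain a single path that is simultaneously confined to the fiber $(A^s,w^{-s})$ (so as to project onto $G_{\approx}^{A^s}$ and force the crossing) and free of $\{(b^s,w^{-s}),(c^s,w^{-s})\}$-restoration; reconciling the no-detour and no-restoration conditions of the Exterior$^+$ property into one path is the crux I expect to require the most care.
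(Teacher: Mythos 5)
Your argument has a genuine gap, and it is exactly the one you flag at the end: you need a single path in $G_{\sim/\sim^+}^{\mathbb{D}}$ that simultaneously (a) stays inside the fiber $(A^s,w^{-s})$, so that its peak sequence projects to a walk in $G_{\approx}^{A^s}$ and the separation hypothesis forces a crossing at $b^s$, and (b) has no $\{(b^s,w^{-s}),(c^s,w^{-s})\}$-restoration, so that the crossing can be ruled out. The Exterior\textsuperscript{+} property does not supply such a path: its condition (i) and its condition (ii) each assert the existence of \emph{some} path with the respective property, and these may well be different paths. Nothing in the paper's richness conditions allows you to intersect the two requirements, so the concluding step of your proof cannot be carried out as proposed.

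The missing idea, which is how the paper's own proof proceeds, is that fiber-confinement is not needed for the projection, so the no-detour condition can be dropped entirely. The paper applies only condition (i): take a path from $P_i$ (peak $(a^s,w^{-s})$, ranking $(c^s,w^{-s})$ above $(b^s,w^{-s})$ by the contradiction hypothesis) to some $P_i'$ with peak $(c^s,w^{-s})$, with no $\{(c^s,w^{-s}),(b^s,w^{-s})\}$-restoration. Since both endpoints rank $(c^s,w^{-s})$ above $(b^s,w^{-s})$, no restoration forces \emph{every} preference on the path to do so, and top-separability then implies that no preference on the path has a peak whose $s$-component is $b^s$. The key technical step (Claim 2 in the paper's proof) is that an \emph{arbitrary} path in $G_{\sim/\sim^+}^{\mathbb{D}}$, confined to a fiber or not, projects to a walk in $G_{\approx}^{A^s}$: consecutive preferences induce marginal preferences on $A^s$ (each taken in its own peak fiber) that are either identical or adjacent. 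This rests on three observations: a plain adjacency between two top-separable preferences cannot move the peak (it flips one pair, whereas distinct peaks would force at least $|A^{-t}|\ge 2$ pairs to flip in some component $t$); a $\sim^+$ step in component $s$ swaps the same consecutive pair of elements of $A^s$ in every fiber, giving adjacent marginals; and a $\sim^+$ step in a component $\tau\neq s$ leaves the marginal on $A^s$ unchanged, because $\sim^+$-related preferences are separable so their marginals agree across fibers. Projecting the no-restoration path then yields a path in $G_{\approx}^{A^s}$ from $a^s$ to $c^s$ that avoids $b^s$, contradicting the separation hypothesis directly — no crossing needs to be tracked, and the delicate reconciliation you identify never arises.
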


\begin{proof}
Suppose by contradiction that it is not true, i.e., there exists
$[P_i]^s \in [\mathbb{D}]^s$ such that $r_1([P_i]^s) = a^s$ and $c^s\mathrel{[P_i]^s}b^s$.
Let $r_1([P_i]^t) = x^t$ for all $t \in M \backslash \{s\}$.
Thus, we know $r_1(P_i) = (a^s, x^{-s})$.
Since $c^s\mathrel{[P_i]^s}b^s$, we have $(c^s, x^{-s})\mathrel{P_i}(b^s, x^{-s})$.
We fix a preference $P_i' \in \mathbb{D}$ such that $r_1(P_i') = (c^s, x^{-s})$ by minimal richness.
Since $r_1(P_i) \neq r_1(P_i')$, by the Exterior\textsuperscript{+} property,
there exists a path $(P_{i|1}, \dots, P_{i|v})$ in $G_{\sim/\sim^+}^{\mathbb{D}}$ connecting $P_i$ and $P_i'$ that has no $\{(c^s, x^{-s}), (b^s, x^{-s})\}$-restoration.
Since $(c^s, x^{-s})\mathrel{P_i}(b^s, x^{-s})$ and $(c^s, x^{-s})\mathrel{P_i'}(b^s, x^{-s})$,
we know $(c^s, x^{-s})\mathrel{P_{i|k}}(b^s, x^{-s})$ for all $k = 1, \dots, v$ by no $\{(c^s, x^{-s}), (b^s, x^{-s})\}$-restoration.\medskip

\noindent
\textsc{Claim 1}: We have $r_1([P_{i|k}]^s) \neq b^s$ for all $k = 1, \dots, v$.\medskip

Suppose not, i.e., $r_1([P_{i|k}]^s) \neq b^s$ for some $k \in \{1, \dots, v\}$.
By the proof of Lemma \ref{lem:tops-onlydomain}, we know that all preferences of $\mathbb{D}$ are top-separable, which implies $(b^s, x^{-s})\mathrel{P_{i|k}} (c^s, x^{-s})$ - a contradiction.
\medskip

\noindent
\textsc{Claim 2}: For all $k \in \{1, \dots, v-1\}$, we have either $[P_{i|k}]^s = [P_{i|k+1}]^s$ or
$[P_{i|k}]^s \sim [P_{i|k+1}]^s$.\footnote{The notion of adjacency also works for the induced marginal preferences.}\medskip

Given $k \in \{1, \dots, v-1\}$, we consider the preferences $P_{i|k}$ and $P_{i|k+1}$.
Clearly, either $P_{i|k} \sim P_{i|k+1}$ or $P_{i|k} \sim^+ P_{i|k+1}$ holds.
First, let $P_{i|k} \sim P_{i|k+1}$.
Thus, there exist $y,z \in A$ such that
$r_{q}(P_{i|k}) = r_{q+1}(P_{i|k+1}) = y$ and $r_{q}(P_{i|k+1}) = r_{q+1}(P_{i|k}) = z$ for some
$1 \leq q < |A|$, and
$r_{\ell}(P_{i|k}) = r_{\ell}(P_{i|k+1})$ for all $\ell \notin \{q, q+1\}$.
If $y^{-s} = z^{-s} = x^{-s}$, then
$[P_{i|k}]^s$ and $[P_{i|k+1}]^s$ differ exactly on the relative rankings of $y^s$ and $z^s$, and hence
$[P_{i|k}]^s \sim [P_{i|k+1}]^s$, as required.
If $y^{-s} \neq  x^{-s}$ or $z^{-s} \neq x^{-s}$,
then $[P_{i|k}]^s = [P_{i|k+1}]^s$, as required.
Next, let $P_{i|k} \sim^+ P_{i|k+1}$.
Thus, there exist $\tau \in M$ and distinct $y^{\tau}, z^{\tau} \in A^{\tau}$ such that
\begin{itemize}
\item[\rm (1)] for each $z^{-\tau}\in A^{-\tau}$,
$(y^{\tau}, z^{-\tau}) = r_q(P_{i|k}) = r_{q+1}(P_{i|k+1})$ and
$(z^{\tau}, z^{-\tau})=r_q(P_{i|k+1}) = r_{q+1}(P_{i|k})$ for some $1\leq q < |A|$, and

\item[\rm (2)] \makebox{for all $d \in A$, $\big[d^{\tau} \notin \{y^{\tau}, z^{\tau}\}\big] \Rightarrow
\big[d = r_{\ell}(P_{i|k}) = r_{\ell}(P_{i|k+1})\;\textrm{for some} \; 1 \leq \ell \leq |A|\big]$.}
\end{itemize}
Clearly, either $\tau = s$ or $\tau \neq s$ holds.
If $\tau = s$,
$(y^s, x^{-s})$ and $(z^s, x^{-s})$ are consecutively and oppositely ranked across $P_{i|k}$ and $P_{i|k+1}$ by condition (1), while all alternatives of $(A^s, x^{-s})\backslash \{(y^s, x^{-s}),(z^s, x^{-s})\}$ are identically ranked in both $P_{i|k}$ and $P_{i|k+1}$ by condition (2).
Therefore, we have $[P_{i|k}]^s \sim [P_{i|k+1}]^s$, as required.
If $\tau \neq s$, then all alternatives of $(A^s, x^{-s})$ are identically ranked in both $P_{i|k}$ and $P_{i|k+1}$ by condition (2), and hence we have $[P_{i|k}]^s = [P_{i|k+1}]^s$, as required.
This completes the verification of the claim.
\medskip

Notice that by Claim 2, in the sequence $\big([P_{i|1}]^s, \dots, [P_{i|k}]^s, [P_{i|k+1}]^s, \dots, [P_{i|v}]^s\big)$,
some induced marginal preferences may appear multiple times.
For instance, let $[P_{i|q}]^s = [P_{i|q'}]^s$ where $1 \leq q < q' \leq v$.
By Claim 2, we know either $[P_{i|q'}]^s = [P_{i|q'+1}]^s$ or $[P_{i|q'}]^s \sim [P_{i|q'+1}]^s$.
Hence, either $[P_{i|q}]^s = [P_{i|q'+1}]^s$ or $[P_{i|q}]^s \sim [P_{i|q'+1}]^s$ holds.
Then, we remove $[P_{i|q+1}]^s, \dots, [P_{i|q'}]^s$, and refine the sequence to
$\big([P_{i|1}]^s, \dots, [P_{i|q}]^s, [P_{i|q'+1}]^s, \dots, [P_{i|v}]^s\big)$,
where every consecutive pair of the remaining induced marginal preferences remains to be either identical or adjacent.
By repeatedly eliminating repetitions of induced marginal preferences in the original sequence,
we finally elicit a subsequence of pairwise distinct induced marginal preferences
$\big([P_{i|k_1}]^s, \dots,[P_{i|k_q}]^s, [P_{i|k_{q+1}}]^s, \dots, [P_{i|k_w}]^s\big)$, where $w\geq 2$, such that
$[P_{i|k_1}]^s = [P_i]^s$, $[P_{i|k_w}]^s = [P_i']^s$ and $[P_{i|k_q}]^s \sim [P_{i|k_{q+1}}]^s$
for all $q \in \{1, \dots, w-1\}$.
Furthermore, let $r_1([P_{i|k_q}]^s) = x_q^s$ for each $q = 1, \dots, w$.
Clearly, $x_1^s = a^s$ and $x_w^s = c^s$.
Note that for all $q \in \{1, \dots, w-1\}$,
either $x_q^s = x_{q+1}^s$, or $x_q^s \neq x_{q+1}^s$.
Furthermore, if $x_q^s \neq x_{q+1}^s$, $[P_{i|k_q}]^s \sim [P_{i|k_{q+1}}]^s$ immediately implies
$x_q^s \approx x_{q+1}^s$.
Thus, in the sequence $(x_1^s, \dots, x_w^s)$,
for all $q \in \{1, \dots, w-1\}$, we have either $x_q^s = x_{q+1}^s$, or $x_q^s \approx x_{q+1}^s$.
Then, by a similar way of eliminating repeated elements,
we can induce a subsequence of pairwise distinct elements $(x_{q_1}^s, \dots, x_{q_o}^s, x_{q_{o+1}}^s, \dots, x_{q_r}^s)$, where $r \geq 2$,
such that $x_{q_1}^s = x_1^s = a^s$, $x_{q_r}^s = x_{w}^s = c^s$, and $x_{q_o}^s \approx x_{q_{o+1}}^s$ for all $o \in \{1, \dots, r-1\}$.
Thus, $(x_{q_1}^s, \dots, x_{q_r}^s)$ is a path in $G_{\approx}^{A^s}$ that connects $a^s$ and $c^s$.
Last, by Claim 1, we know $x_{q_o}^s \neq b^s$ for all $o = 1, \dots, r$.
Consequently, the path $(x_{q_1}^s, \dots, x_{q_r}^s)$ in $G_{\approx}^{A^s}$ connects $a^s$ and $c^s$, and excludes $b^s$.
This contradicts the hypothesis of statement (ii) that $b^s$ is included in all paths in $G_{\approx}^{A^s}$ connecting $a^s$ and $c^s$.
This proves the Lemma.
\end{proof}

\begin{lemma}\label{lem:noleaf}
Given $s \in M$, let $\underline{x}^s \neq \overline{x}^s$.
The subgraph $G_{\approx}^{\langle \underline{x}^s,\, \overline{x}^s\rangle}$ has no leaf.
\end{lemma}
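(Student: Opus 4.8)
The plan is to argue by contradiction, exploiting the minimality recorded in Condition (b): if $G_{\approx}^{\langle\underline{x}^s,\overline{x}^s\rangle}$ had a leaf, I would exhibit a linear order $\lhd^s$ together with marginal thresholds $\hat{\underline{x}}^s,\hat{\overline{x}}^s$ that still represent all preferences of $[\mathbb{D}]^s$ as hybrid, but with a \emph{strictly smaller} inner interval $\langle\hat{\underline{x}}^s,\hat{\overline{x}}^s\rangle^{\lhd^s}\subset\langle\underline{x}^s,\overline{x}^s\rangle$ --- precisely what Condition (b) forbids. So suppose, towards a contradiction, that $b^s\in\langle\underline{x}^s,\overline{x}^s\rangle$ is a leaf of $G_{\approx}^{\langle\underline{x}^s,\overline{x}^s\rangle}$ with unique neighbor $c^s$ in this subgraph. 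Note first that this subgraph is connected: $G_{\approx}^{A^s}$ is connected by Lemma~\ref{lem:connectedgraph}, and hybridness (Condition (a)) makes the two outer segments $\langle\min^{\prec^s}(A^s),\underline{x}^s\rangle$ and $\langle\overline{x}^s,\max^{\prec^s}(A^s)\rangle$ pendant lines attached to the interval only at $\underline{x}^s$ and $\overline{x}^s$.

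The first step is a separation property: $c^s$ lies on every path in $G_{\approx}^{A^s}$ joining $b^s$ to any other interval element $d^s\in\langle\underline{x}^s,\overline{x}^s\rangle\setminus\{b^s,c^s\}$. Indeed, hybridness w.r.t.\ $\prec^s$ forbids a strongly-connected edge between an element strictly interior to $\langle\underline{x}^s,\overline{x}^s\rangle$ and an element beyond a threshold (such a threshold would be separated from, hence ranked above, the would-be second alternative), so within the full graph the only edge out of $b^s$ towards the interval is $\{b^s,c^s\}$, and any excursion into an outer line is a dead end. Feeding this into Lemma~\ref{lem:implication} with $a^s=b^s$, separator $c^s$, and target $d^s$ gives $c^s\mathrel{[P_i]^s}d^s$ for every $[P_i]^s$ with $r_1([P_i]^s)=b^s$ and every interval $d^s\neq b^s,c^s$; combined with the outer single-peakedness supplied by hybridness this forces $r_2([P_i]^s)=c^s$ whenever the peak is $b^s$. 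In other words, $b^s$ behaves exactly like a single-peaked point hanging off the ``gate'' $c^s$.

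I would then construct the smaller representation. I would obtain $\lhd^s$ from $\prec^s$ by bringing $c^s$ to a boundary of the block $\langle\underline{x}^s,\overline{x}^s\rangle\setminus\{b^s\}$ and reinserting $b^s$ immediately beyond $c^s$ on the outer side, leaving the outer lines in place; I take the new inner interval to be $\langle\underline{x}^s,\overline{x}^s\rangle\setminus\{b^s\}$, with $\hat{\underline{x}}^s,\hat{\overline{x}}^s$ its $\lhd^s$-extreme elements (collapsing to the degenerate choice $\hat{\underline{x}}^s=\hat{\overline{x}}^s$, a single-peaked representation, in the case that fewer than three interval elements remain). By construction $\langle\hat{\underline{x}}^s,\hat{\overline{x}}^s\rangle^{\lhd^s}\subseteq\langle\underline{x}^s,\overline{x}^s\rangle\setminus\{b^s\}\subsetneq\langle\underline{x}^s,\overline{x}^s\rangle$, so the inner interval has strictly shrunk as a set. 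It remains only to check that every $[P_i]^s\in[\mathbb{D}]^s$ is hybrid on $\lhd^s$ w.r.t.\ $\hat{\underline{x}}^s,\hat{\overline{x}}^s$; a contradiction with Condition (b) then finishes the argument.

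This verification is the crux and the main obstacle. Constraints on pairs both strictly interior to the new interval are vacuous, and the relative order of all elements other than $b^s$ is unchanged, so the outer-line constraints carry over from the original hybridness w.r.t.\ $\prec^s$; the only genuinely new requirements concern the two elements whose role changes, namely $b^s$ in its relocated outer position and $c^s$ which now serves as a new threshold. Both are discharged from the separation property of the first step, which says exactly that $b^s$ is reachable only through $c^s$, together with the unchanged hybridness for the remaining elements. I expect this to require a short case distinction according to whether $b^s$ is a threshold or strictly interior in $\prec^s$ and on which side of $c^s$ it is reinserted; in each case the ranking facts extracted via Lemma~\ref{lem:implication} are precisely what certify the single-peaked behavior demanded at the relocated $b^s$ and at the new threshold $c^s$, yielding the required contradiction with Condition (b).
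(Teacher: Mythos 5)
There is a genuine gap: your uniform ``relocate the leaf'' construction only works when the leaf is one of the thresholds and its unique neighbor is strictly interior, and it provably fails in the remaining cases, which require a different argument that you never supply. Concretely, suppose the leaf $b^s$ is \emph{strictly} interior to $\langle \underline{x}^s, \overline{x}^s\rangle$. Then, by the hybridness of the domain w.r.t.\ $\prec^s$, the only neighbor of $b^s$ in the whole graph $G_{\approx}^{A^s}$ is $c^s$, so $c^s$ separates $b^s$ from \emph{both} ends of the line; Lemma \ref{lem:implication} then forces every preference peaked at $\min^{\prec^s}(A^s)$ \emph{and} every preference peaked at $\max^{\prec^s}(A^s)$ to rank $c^s$ above $b^s$. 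But any order $\lhd^s$ that places $b^s$ outside the new interval, immediately beyond the new threshold $c^s$, demands the opposite: preferences peaked at the far end of whichever outer side you reinsert $b^s$ on must, by hybridness w.r.t.\ $\lhd^s$, rank $b^s$ above $c^s$ (since $b^s$ then lies between their peak and $c^s$ and outside the new open interval). Since such peaked preferences exist in $[\mathbb{D}]^s$, your verification step cannot succeed --- the facts extracted from Lemma \ref{lem:implication} refute, rather than certify, the required behavior. This is exactly why the paper splits into cases: for a strictly interior leaf it does \emph{not} build a representation at all, but derives an immediate contradiction from diversity\textsuperscript{$+$}, using that the complete reversals $[\underline{P}_i]^s$ and $[\overline{P}_i]^s$ would both have to rank $c^s$ above $b^s$. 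Your proposal never invokes diversity\textsuperscript{$+$} or the complete reversals, so this case is simply out of reach of your argument. (Note also that your claimed set equality $\langle\hat{\underline{x}}^s,\hat{\overline{x}}^s\rangle^{\lhd^s}=\langle\underline{x}^s,\overline{x}^s\rangle\setminus\{b^s\}$ is already problematic here, since both old thresholds stay inside the new interval while the outer lines must lie beyond the new thresholds.)

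A second, smaller omission: when the leaf \emph{is} a threshold, say $b^s=\underline{x}^s$, you must first rule out that its unique neighbor $c^s$ is the \emph{other} threshold $\overline{x}^s$; otherwise the relocation again produces an order under which preferences peaked at $\max^{\prec^s}(A^s)$ violate hybridness (the new right threshold would be a strictly interior element that they rank below $\overline{x}^s$). The paper's Claim~1 in this case disposes of that possibility, once more via Lemma \ref{lem:implication} applied to $[\underline{P}_i]^s$. In the one remaining configuration --- threshold leaf with strictly interior neighbor --- your construction (including the degenerate single-peaked fallback when $\overline{k}-\underline{k}=2$) does match the paper's and is sound, but as it stands the proposal proves only that sub-case.
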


\begin{proof}
For ease of presentation, let $A^s = \big\{a_1^s,\dots, a_{|A^s|}^s\big\}$, where
$a_k^s \prec^s a_{k+1}^s$ for all $k = 1, \dots, |A^s|-1$.
Clearly, $a_k^s\mathrel{[\underline{P}_i]^s}a_{k+1}^s$ and
$a_{k+1}^s\mathrel{[\overline{P}_i]^s}a_k^s$ for all $k = 1, \dots, |A^s|-1$.
Furthermore, by \textbf{Condition (b)}, we know $\underline{x}^s = a_{\underline{k}}^s$ and $\overline{x}^s = a_{\overline{k}}^s$ for some $1 \leq \underline{k}< \overline{k}\leq |A^s|$ such that $\overline{k}-\underline{k}>1$.
Then, by Lemma \ref{lem:connectedgraph} and \textbf{Condition (a)},
$G_{\approx}^{A^s}$ must be a combination of
the line $(a_1^s, \dots, a_{\underline{k}}^s)$,
the connected subgraph $G_{\approx}^{\langle a_{\underline{k}}^s,\, a_{\overline{k}}^s\rangle}$, and
the line $(a_{\overline{k}}^s, \dots, a_{|A^s|}^s)$.

Suppose by contradiction that $G_{\approx}^{\langle a_{\underline{k}}^s,\, a_{\overline{k}}^s\rangle}$ has a leaf.
Thus, we have $a_k^s, a_{k'}^s \in \langle a_{\underline{k}}^s,\, a_{\overline{k}}^s\rangle$
such that $a_k^s$ is a leaf of $G_{\approx}^{\langle a_{\underline{k}}^s,\, a_{\overline{k}}^s\rangle}$ and
$a_k^s \approx a_{k'}^s$.
There are two cases: $\underline{k}<k< \overline{k}$ and
$k \in \{\underline{k}, \overline{k}\}$.
In each case, we induce a contradiction.
First, let $\underline{k}<k< \overline{k}$.
Thus, $a_{k'}^s$ is included in all paths in $G_{\approx}^{A^s}$ connecting $a_1^s$ and $a_k^s$, and that $a_{k'}^s$ is also included in all paths in $G_{\approx}^{A^s}$ connecting $a_{|A^s|}^s$ and $a_k^s$.
Immediately, Lemma \ref{lem:implication} implies $a_{k'}^s\mathrel{[\underline{P}_i]^s}a_k^s$ and
$a_{k'}^s\mathrel{[\overline{P}_i]^s}a_k^s$, which contradict the fact that $[\underline{P}_i]^s$ and $[\overline{P}_i]^s$ are complete reversals.
Henceforth, let $k \in \{\underline{k}, \overline{k}\}$.
We assume w.l.o.g.~that $k = \underline{k}$.
The verification related to $k = \overline{k}$ is symmetric.\medskip

\noindent
\textsc{Claim 1}:
We have $\underline{k}<k'< \overline{k}$.
\medskip

It is clear that $\underline{k}<k' \leq \overline{k}$.
Suppose by contradiction that $k' = \overline{k}$.
Since $\overline{k}-\underline{k}>1$,
we fix $a_p^s \in \langle a_{\underline{k}}^s,\, a_{\overline{k}}^s\rangle$ such that $\underline{k}<p< \overline{k}$.
Since $G_{\approx}^{\langle a_{\underline{k}}^s,\, a_{\overline{k}}^s\rangle}$ is a connected subgraph,
$a_{\underline{k}}^s$ is a leaf and $a_{\underline{k}}^s \approx a_{\overline{k}}^s$,
it must be the case that $a_{\overline{k}}^s$ is included in all paths in $G_{\approx}^{\langle a_{\underline{k}}^s,\, a_{\overline{k}}^s\rangle}$ connecting $a_{\underline{k}}^s$ and $a_p^s$.
Consequently, $a_{\overline{k}}^s$ is included in all paths in $G_{\approx}^{A^s}$ connecting $a_1^s$ and $a_p^s$.
Immediately, Lemma \ref{lem:implication} implies $a_{\overline{k}}^s\mathrel{[P_i]^s}a_p^s$ for all
$[P_i]^s \in [\mathbb{D}]^s$ with $r_1([P_i]^s) = a_1^s$.
However, we have $r_1([\underline{P}_i]^s) = a_1^s$ and $a_p^s\mathrel{[\underline{P}_i]^s}a_{\overline{k}}^s$ - a contradiction.
This completes the verification of the claim.
\medskip

Now, since $G_{\approx}^{\langle a_{\underline{k}}^s,\, a_{\overline{k}}^s\rangle}$ is a connected subgraph,
$a_{\underline{k}}$ is a leaf and $a_{\underline{k}}^s \approx a_{k'}^s$,
it is evident that $G_{\approx}^{\langle a_{\underline{k}}^s,\, a_{\overline{k}}^s\rangle\backslash \{a_{\underline{k}}^s\}}$ is also connected subgraph.
Consequently, $G_{\approx}^{A^s}$ is a combination of
the line $(a_1^s, \dots, a_{\underline{k}}^s,a_{k'}^s)$, the connected subgraph $G_{\approx}^{\langle a_{\underline{k}}^s,\, a_{\overline{k}}^s\rangle\backslash \{a_{\underline{k}}^s\}}$ and
the line $(a_{\overline{k}}^s, \dots, a_{|A^s|}^s)$.
There are two subcases: $\overline{k}-\underline{k} = 2$ and
$\overline{k}-\underline{k} > 2$.
If $\overline{k}-\underline{k} = 2$,
it is evident that $k' = \underline{k}+1$.
Then, $G_{\approx}^{A^s}$ degenerates to the line $\big(a_1^s, \dots, a_{\underline{k}}^s, a_{\underline{k}+1}^s, a_{\underline{k}}^s, \dots, a_{|A^s|}^s\big)$.
Consequently, by Lemma \ref{lem:implication},
all induced marginal preferences of $[\mathbb{D}]^s$ are single-peaked on $\prec^s$, and hence
equivalently hybrid on $\prec^s$ w.r.t.~$a_{\underline{k}}^s$ and $a_{\overline{k}+1}^s$.
This contradicts \textbf{Condition (b)}.
Last, let $\overline{k}-\underline{k} > 2$.
We first construct a linear order $\lhd^s$ over $A^s$ such that
(1) $\big[1 \leq p < p' \leq \underline{k}\; \textrm{or}\; \overline{k} \leq p < p' \leq |A^s|\big] \Rightarrow [a_p^s \lhd^s a_{p'}^s]$,
(2) $a_{\underline{k}}^s \lhd^s a_{k'}^s$, and
(3) $\min^{\lhd^s} \{a_{\underline{k}+1}^s, \dots, a_{\overline{k}}^s\}= a_{k'}^s$ and
$\max^{\lhd^s} \{a_{\underline{k}+1}^s, \dots, a_{\overline{k}}^s\}= a_{\overline{k}}^s$.
We then fix distinct marginal thresholds $a_{k'}^s$ and $a_{\overline{k}}^s$ on $\lhd^s$.
Clearly, $\langle a_{k'}^s, a_{\overline{k}}^s\rangle^{\lhd^s} = \langle a_{\underline{k}}^s, a_{\overline{k}}^s\rangle\backslash \{a_{\underline{k}}^s\} \subset \langle a_{\underline{k}}^s, a_{\overline{k}}^s\rangle$.
Since $G_{\approx}^{A^s}$ is a combination of
the line $(a_1^s, \dots, a_{\underline{k}}^s,a_{k'}^s) = \langle a_1^s, a_{k'}^s\rangle^{\lhd^s}$, the connected subgraph $G_{\approx}^{\langle a_{\underline{k}}^s,\, a_{\overline{k}}^s\rangle\backslash \{a_{\underline{k}}^s\}} = G_{\approx}^{\langle a_{k'}^s,\, a_{\overline{k}}^s\rangle^{\lhd^s}}$ and
the line $(a_{\overline{k}}^s, \dots, a_{|A^s|}^s) = \langle a_{\overline{k}}^s, a_{|A^s|}^s\rangle^{\lhd^s}$, Lemma \ref{lem:implication} implies that
all induced marginal preferences of $[\mathbb{D}]^s$ are hybrid on $\lhd^s$ w.r.t.~$a_{k'}^s$ and $a_{\underline{k}}^s$.
This contradicts \textbf{Condition (b)} as well.
This proves the Lemma.
\end{proof}

This completes the verification in \textbf{Part 3}, and hence proves the Theorem.

\end{document}